\newlength{\vertspace}
\numberwithin{equation}{section}
\theoremstyle{definition}
\newtheorem{thm}{Theorem}[section]
\newtheorem{prop}[thm]{Proposition}
\newtheorem{lemma}[thm]{Lemma}
\newtheorem{question}{Question}
\newtheorem*{question*}{Question}
\newenvironment{questionprime}[1]
{
\addtocounter{question}{-1}
\begin{question}}
{\end{question}}
\theoremstyle{remark}
\newtheorem*{rem*}{Remark}
\DeclareSymbolFontAlphabet{\mathbb}{AMSb}
\DeclareSymbolFontAlphabet{\mathbbl}{bbold}
\def\cA{\mathcal{A}}
\def\Aut{\mathrm{Aut}}
\def\C{\mathbb{C}}
\def\ch{\mathrm{ch}}
\def\Co{\mathrm{Co}}
\def\bbe{\mathbbl{e}}
\def\End{\mathrm{End}}
\def\F{\mathbb{F}}
\def\GL{\mathrm{GL}}
\def\h{\mathfrak{h}}
\def\Hom{\mathrm{Hom}}
\def\id{\mathrm{id}}
\DeclareMathOperator{\im}{im}
\def\LinAut{\mathrm{LinAut}}
\def\Map{\mathrm{Map}}
\def\cN{\mathcal{N}}
\def\R{\mathbb{R}}
\def\rR{\mathrm{R}}
\def\Span{\mathrm{Span}}
\def\Stab{\mathrm{Stab}}
\def\Tr{\mathrm{Tr}}
\def\Vir{\mathrm{Vir}}
\def\wt{\mathrm{wt}}
\def\Z{\mathbb{Z}}
\newcommand{\Zgen}[2]{\mathbb{Z}_{#1}}
\def\^{\string^}
\newcommand{\tildedot}{\mathbin{\widetilde{\cdot}}}
\begin{document}

\begin{titlepage}
\begin{center}
\vspace*{2cm}

\textbf{\Large On the Symmetry of Odd Leech Lattice CFT}

\vspace{1cm}

Masaki Okada

\vspace{1cm}

\begin{tabular}{l}
Kavli Institute for the Physics and Mathematics of the Universe (WPI), \\
University of Tokyo,  Kashiwa, Chiba 277-8583, Japan 
\end{tabular}

\vspace{2cm}

\begin{abstract}
We show that the Mathieu groups $M_{24}$ and $M_{23}$ in the isometry group of the odd Leech lattice
do not lift to subgroups of the automorphism group of its lattice vertex operator (super)algebra.
In other words,
the subgroups $2^{24}.M_{24}$ and $2^{23}.M_{23}$ of the automorphism group of the odd Leech lattice vertex operator algebra are non-split extensions.
Our method can also confirm a similar result for the Conway group $\Co_0$ and the Leech lattice,
which was already shown in \cite{MR0476878}.
This study is motivated by the moonshine-type observation on the $\cN=2$ extremal elliptic genus of central charge 24 by \cite{Benjamin:2015ria}.
We also investigate weight-1 and weight-$\frac{3}{2}$ currents invariant under the subgroup $2^{24}.M_{24}$ or $2^{23}.M_{23}$ of the automorphism group of the odd Leech lattice vertex operator algebra,
and revisit an $\cN=2$ superconformal algebra in it.
\end{abstract}

\end{center}
\end{titlepage}

\tableofcontents

\newpage

%%%%%%%%%%%%%%%%%%%%%%%%%%%%%%
\section{Introduction}
\label{sec:intro}
%%%%%%%%%%%%%%%%%%%%%%%%%%%%%%

Symmetries of lattices can be key ingredients in understanding moonshine phenomena.
The most prominent example is the moonshine module $V^\natural$,
which was constructed in \cite{MR0996026}
as a $\Z_2$ orbifold of the Leech lattice vertex operator algebra (VOA),
and provided an explanation for the appearance of the representation dimensions of the monster group $\mathbb{M}$ in the coefficients of the modular $J$-function.
The symmetry, or the automorphism (isometry) group $\Aut(\Lambda_{24})$ of the Leech lattice $\Lambda_{24}$ is the Conway group $\Co_0$,
and its information is inherited by the automorphism group $\Aut(V^\natural)$ of the moonshine module $V^\natural$,
which is the monster group $\mathbb{M}$,
in the form of the maximal subgroup $2^{1+24}.\Co_1\subset\mathbb{M}$,
where $\Co_1 = \Co_0 / \Z_2$ is the quotient of $\Co_0$ by its center.\footnote{
Here, $N.G$ denotes any (split or non-split) extension of a group $G$ by a group $N$.
In addition, we write $N:G$
if it splits, or it is a semidirect product of $N$ and $G$.
See the beginning of Section \ref{sec:preliminaries} or Appendix \ref{subsec:grp-ext-and-grp-coh} for more on notations of groups.
}
The proof that the automorphism group $\Aut(V^\natural)$ is the monster group $\mathbb{M}$ was first given by \cite{MR0996026}, and another simple proof was provided in \cite{MR2081435}.

Another case where we can expect the symmetry of a lattice accounts for a moonshine phenomenon was observed in \cite{Benjamin:2015ria}.
They pointed out that 
if we decompose a certain weak Jacobi form $Z_\mathrm{ext}^{m=4}(\tau, z)$ of weight 0 and index $m=4$,
called the $\cN=2$ extremal elliptic genus of central charge $c=6m=24$ \cite{Witten:2007kt, Gaberdiel:2008xb},
into the characters $\ch^{\cN=2,\widetilde{\rR}}_{l=\frac{c}{6}-\frac{1}{2};h,Q}(\tau,z)$ of the Ramond representations of $\cN=2$ superconformal algebra (SCA),
where the tilde of $\widetilde{\rR}$ denotes $(-1)^{F=J_0}$ inserted,
\begin{align}
Z_\mathrm{ext}^{m=4} 
= \ & \ch_{\frac{7}{2};1,4}^{\cN=2,\widetilde{\rR}} + 47 \ch_{\frac{7}{2};1,0}^{\cN=2,\widetilde{\rR}} \nonumber\\
& + 23\,\ch_{\frac{7}{2};2,4}^{\cN=2,\widetilde{\rR}} + 2024 (\ch_{\frac{7}{2};2,3}^{\cN=2,\widetilde{\rR}} + \ch_{\frac{7}{2};2,-3}^{\cN=2,\widetilde{\rR}}) + 14168 (\ch_{\frac{7}{2};2,2}^{\cN=2,\widetilde{\rR}} + \ch_{\frac{7}{2};2,-2}^{\cN=2,\widetilde{\rR}}) + \cdots,
\end{align}
then representation dimensions of the largest Mathieu group $M_{24}$ appear in a nontrivial way.
(See for example \cite{Cheng:2014owa} for explicit formulae of $\ch^{\cN=2,\widetilde{\rR}}_{l;h,Q}(\tau,z)$.)
For example, $M_{24}$ has irreducible representations of dimension 23 and 2024,
and 14168 is the sum $10395+3520+253$ of irreducible representation dimensions of $M_{24}$.
They further constructed a chiral $\cN=2$ superconformal field theory (SCFT)
whose partition function of the R sector coincides with the extremal elliptic genus $Z_\mathrm{ext}^{m=4}(\tau,z)$.
Their construction was the fermionization of the lattice conformal field theory (CFT)
of the $(A_1)^{24}$ lattice,
together with a special choice of $\cN=2$ SCA generators.
The automorphism group of the $(A_1)^{24}$ lattice is $2^{24}:M_{24}$ \cite[Table 16.1, Ch.\ 18 \S4]{MR1662447},
although the subgroup of the automorphism group of the resulting CFT fixing their $\cN=2$ SCA becomes smaller than $M_{24}$.
A similar analysis was also done for $\cN=4$ extremal CFT of central charge 24 in \cite{Harrison:2016hbq}.

This situation of \cite{Benjamin:2015ria} seems analogous to the celebrated K3 Mathieu moonshine.
The K3 Mathieu moonshine was first observed in \cite{Eguchi:2010ej} and studied extensively (see for example \cite{Gaberdiel:2011fg, Gannon:2012ck, Gaberdiel:2013psa} and the references listed below),
but is still missing some nice algebraic structure to explain all the properties such as the twining elliptic genera \cite{Cheng:2010pq, Gaberdiel:2010ch, Gaberdiel:2010ca, Eguchi:2010fg}, their twisted versions, and their modularities \cite{Gaberdiel:2012gf}.
Research to find such structure includes symmetry surfing \cite{Taormina:2011rr, Taormina:2013jza, Taormina:2013mda, Gaberdiel:2016iyz} and exploiting Duncan's module for the Conway moonshine \cite{MR2352133, Cheng:2014owa, Duncan:2014eha, Duncan:2015xoa, Taormina:2017zlm}.

\vspace{\vertspace}

Although the lattice symmetry seemingly offers some explanation for the appearance of representation dimensions of $M_{24}$ observed in \cite{Benjamin:2015ria},
if we take a closer look,
the situation turns out to be more complicated than expected as we will explain now.

We first note that the construction of the $\cN=2$ extremal SCFT in \cite{Benjamin:2015ria} was the fermionization with respect to the reflection $\Z_2$ symmetry of the $(A_1)^{24}$ lattice CFT,
but we can obtain an equivalent theory by performing the fermionization with respect to the shift $\Z_2$ symmetry of the same theory.\footnote{
\label{fn:equiv-of-reflection-and-shift-sym}
The equivalence between the reflection $\Z_2$ orbifold and the shift $\Z_2$ orbifold was proved in \cite{Dolan:1994st} in the cases of lattice CFTs constructed from doubly-even self-dual binary codes.
Here, the $(A_1)^{24}$ lattice CFT is the one constructed from the binary Golay code,
and fermionization and $\Z_2$ orbifold can be uniformly treated in the modern understanding (see Section \ref{subsec:lattice-CFT}).
Recently, the equivalence of the two $\Z_2$ orbifolds in the cases of fermionic lattice CFTs constructed from singly-even self-dual codes was reported in \cite{Kawabata:2024gek}.
}
The NS sector of the resulting CFT is a lattice CFT of the odd Leech lattice $O_{24}$,
and the R sector also admits a description in terms of lattices.
In fact, the construction of the extremal elliptic genus from an odd Leech lattice was also done in \cite{Kawabata:2023rlt} from the perspective of code CFT
(see also \cite[Example 4.5]{Gaiotto:2018ypj}).

One might expect that
we could take an $\cN=2$ SCA in the VOA of the NS sector $V_{O_{24}}$ such that the subgroup of the automorphism group $\Aut(V_{O_{24}})$ preserving the said $\cN=2$ SCA is the Mathieu group $M_{24}$.
If it were the case,
it might justify the appearance of the representation dimensions of $M_{24}$ in the partition function of the R sector,
under the spectral flow of the $\cN=2$ SCA.
In fact, the automorphism group of the odd Leech lattice is $\Aut(O_{24})=2^{12}:M_{24}$ and contains $M_{24}$ as a subgroup.

However, here we encounter a problem.
The symmetry $\Aut(V_L)$ of a lattice VOA inherits the information of the lattice symmetry $\Aut(L)$
in the form of the \emph{group extension} of $\Aut(L)$.
More precisely, it is the group extension $2^{\mathrm{rank} \, L}.\Aut(L)$ of $\Aut(L)$ by $\Hom(L,\Z_2)\cong 2^{\mathrm{rank}\, L}$ that is the subgroup of $\Aut(V_L)$ \cite{MR0996026, MR1745258},
and hence $\Aut(L)$ itself is not a subgroup of $\Aut(V_L)$ unless the group extension splits.
In the language of physics, this is because of the cocycle factors appearing in the operator product expansion (OPE) of vertex operators.

The main goal of this paper is to answer the following question,
in particular in the case of the odd Leech lattice.
\begin{question*}
Does the automorphism group $\Aut(L)$ (or its subgroup) of a lattice $L$
lift to a subgroup of the automorphism group $\Aut(V_L)$ of the lattice VOA $V_L$?
\end{question*}
\noindent We will analyze the group extension quite explicitly,
and finally prove that the answer is no in the cases of the subgroups $M_{24}$ and $M_{23}$ of the automorphism group of the odd Leech lattice by contradiction (Theorem \ref{thm:main-oddLeech}).
We will also check that the answer is no in the case of the Conway group $\Co_0$ of the Leech lattice by a similar method;
this result was already shown in \cite{MR0476878} (see also \cite{MR885104}, \cite[Lemma 1.8.8 (iii)]{MR2503090}),
but we hope that our method is more elementary and easy to follow for non-experts of group theory.
Of course, there is a possibility that
the techniques of \cite{MR0476878, MR885104, MR2503090} can also be used to show the result for the odd Leech lattice,
but this result is not explicitly available in literature to the author's knowledge.

As a result, the odd Leech lattice CFT does not have the $M_{24}$ symmetry in the first place,
unless we forget the OPE structure and regard it as just a momentum lattice.
Instead, it has a $2^{24}.M_{24}$ symmetry which does not split.
We further investigate the weight-1 and weight-$\frac{3}{2}$ currents of the odd Leech lattice CFT
invariant under the symmetry $2^{24}.M_{24}$ or its subgroup $2^{23}.M_{23}$,
and conclude that there is no such weight-$\frac{3}{2}$ invariant current.
Therefore, we cannot find $\cN=2$ SCA invariant under the $2^{24}.M_{24}$ or $2^{23}.M_{23}$ symmetry in the odd Leech lattice CFT,
because $\cN=2$ SCA needs weight-$\frac{3}{2}$ supercurrents.

This requires a slight modification on the $\cN=2$ SCA proposed by \cite{Benjamin:2015ria}.
Their discussion on how their supercurrents satisfy the OPE of $\cN=2$ SCA is based on the assumption that the CFT would have the $M_{24}$ symmetry,
so we have to revisit that point,
but we can still realize an $\cN=2$ SCA which gives the $\cN=2$ extremal elliptic genus
by retaking the supercurrents as in Appendix \ref{sec:N=2-SCA-in-odd-Leech}.
However, since such supercurrents are no longer invariant under the $2^{24}.M_{24}$ or $2^{23}.M_{23}$ symmetry
(and also not invariant under the $M_{24}$ or $M_{23}$ symmetry even if we forget the OPE structure), 
the situation surrounding this moonshine phenomenon seems to be more mysterious than we thought.

\vspace{\vertspace}

The organization of the rest of this paper is as follows.
In Section \ref{sec:preliminaries}, we briefly review fundamental concepts such as codes, lattices, and lattice CFTs, fix their definitions and notations, and collect some data for later use.
In Section \ref{sec:cocycle-factor-and-aut-lattice-VOA},
we review how cocycle factors disturb the group structure of a lattice isometry group directly lifting up to that of the automorphism group of the lattice VOA.
Looking into the details of the group extension,
we paraphrase our main Question into more concrete forms Question \ref{q:lift-group-hom}, \ref{q:ses-split}, and \ref{q:tangible-form-Z2}, step by step,
and slightly generalize it to Question \ref{q:tangible-form}.
In Section \ref{sec:answer-to-the-question},
we negatively settle Question \ref{q:tangible-form} in the cases of the odd Leech lattice and the Leech lattice.
In Section \ref{sec:inv-current},
we investigate weight-1 and weight-$\frac{3}{2}$ currents of the odd Leech lattice CFT invariant under the subgroups of its automorphism group.
In Section \ref{sec:discussions}, we revisit the observation by \cite{Benjamin:2015ria} in light of our results.
Appendix \ref{sec:group-extension} contains an elementary introduction to group extensions for readers not familiar with such topics.
Appendix \ref{sec:N=2-SCA-in-odd-Leech} explains how we can find supercurrents of the $\cN=2$ SCA in the odd Leech lattice CFT.
Appendices \ref{sec:data-of-eqs} and \ref{sec:proof-of-subgroup-of-extension} provide some details we omitted in the main part.

The Python codes used in analysis are also uploaded to arXiv as ancillary files,
in the form of Jupyter Notebook (.ipynb file).
These codes are used in the analysis of the following sections:
\vspace{-8pt}
\begin{itemize}
\setlength{\itemsep}{-5pt}
\leftskip -15pt
\item  Section \ref{subsec:answer-for-odd-Leech}: \texttt{M24\_of\_odd\_Leech.ipynb, M23\_of\_odd\_Leech.ipynb}.
\item Section \ref{subsec:answer-for-Leech}: \texttt{Co0\_of\_Leech.ipynb}.
\item Appendix \ref{sec:N=2-SCA-in-odd-Leech}: \texttt{check\_supercurrent\_OPE.ipynb, find\_supercurrent.ipynb}.
\end{itemize}

%%%%%%%%%%%%%%%%%%%%%%%%%%%%%%
\section{Codes, Lattices, and CFTs}
\label{sec:preliminaries}
%%%%%%%%%%%%%%%%%%%%%%%%%%%%%%
In this Section \ref{sec:preliminaries},
we briefly review fundamental concepts such as codes, lattices, and lattice CFTs,
partly in order to fix definitions and notations;
we mainly follow \cite{MR1662447}.
Readers already familiar with these topics can skip this section
and come back just to refer to some data collected here for later analysis if necessary.

Before proceeding, let us clarify the convention of permutation groups and its actions,
and some notations of groups here.

\vspace{\vertspace}

\noindent \textbf{Convention of permutation groups and its actions}

The symmetric group $S_n$ consists of permutations $\sigma:\Omega_n\to \Omega_n$ where $\Omega_n=\{1,\ldots,n\}$,
and there are two conventions for the definition of the multiplication:
\begin{align}
& (1,2) \cdot (1,3) = (1,3,2); \  1 \mapsto 3, 3 \mapsto 2, 2 \mapsto 1, \label{eq:mult-of-perm}\\
& (1,2) \tildedot (1,3) = (1,2,3); \  1 \mapsto 2, 2 \mapsto 3, 3 \mapsto 1. \label{eq:mult-of-perm-op}
\end{align}
They are related as $\tau \cdot \sigma = \sigma \tildedot \tau$.

In this paper, we will adopt the multiplication $\cdot$ (\ref{eq:mult-of-perm}) and the left action of $\sigma\in S_n$ on a vector $k=(k_1,\ldots,k_n)\in\R^n$ as
\begin{align}
& \sigma(k) = (k_{\sigma^{-1}(1)}, \ldots, k_{\sigma^{-1}(n)}).
\end{align}
This can also be written as $(\sigma(k))_{\sigma(i)}=k_i$.

In GAP \cite{GAP4} and the webpage of ATLAS of Finite Group Representations \cite{ATLASweb}, on the other hand, they adopt the multiplication $\tildedot$ (\ref{eq:mult-of-perm-op}).
So when we cite equations from them,
we will convert them into representations in terms of the multiplication $\cdot$ (\ref{eq:mult-of-perm}) in this paper.

\vspace{\vertspace}

\noindent \textbf{Notations of groups}

For groups $G$ and $N$, following ATLAS of Finite Groups \cite{MR827219}, we write $N.G$ for any extension of $G$ by $N$.
\begin{align}
1 \to N \to N.G \to G \to 1.
\end{align}
In addition, we write $N:G$ or $N\rtimes G$ if it is a split extension, or a semidirect product,
and $N\times G$ if a direct product.
$R^\times$ denotes the multiplicative group of a ring $R$.

%%%%%%%%%%%%%%%%%%%%%%%%%%%%%%
\subsection{Binary Golay Code and Mathieu Groups}
\label{subsec:Golay-code-and-Mathieu-group}
%%%%%%%%%%%%%%%%%%%%%%%%%%%%%%
A \emph{$q$-ary linear code} $C$ of \emph{dimension} $m$ and \emph{length} $n$ is an $m$-dimensional subspace of the $n$-dimensional $\F_q$-linear space $(\F_q)^n$,
where $q$ is a prime or a prime power and $\F_q$ is the finite field of order $q$.
In this paper, codes always refer to linear ones.
An element of a code is called a \emph{codeword},
and the \emph{(Hamming) weight} of a codeword $w=(w_0,\ldots,w_{n-1})$ is $\wt(w):=|\{i \mid w_i\neq0\}|$.
When the minimal nonzero weight $\min_{w\in C\setminus\{0\}}\wt(w)$ is $d$,
this linear code is denoted by $[n, m, d]_q$.
Two linear codes are said to be \emph{equivalent} when one is mapped to the other by a \emph{monomial} matrix,
which is a matrix containing exactly one nonzero element of $\F_q$ in each row and column.
An equivalent map from a code to itself is called an \emph{automorphism},
and the set of all the automorphisms of $C$ forms the \emph{automorphism group} $\Aut(C)$ of $C$.

When $q=p^a$ with $p$ prime, the \emph{dual code} $C^\ast$ of $C$ is defined by
\begin{align}
C^\ast := \{v\in(\F_q)^n \mid v\cdot\bar{w}=0\ \text{for any $w\in C$}\},
\end{align}
where $\bar{w}:=((w_0)^p,\ldots,(w_{n-1})^p)$ is the \emph{conjugate} of $w$,
and $v \cdot w:=\sum_iv_iw_i$.
A code $C$ is said to be \emph{self-dual} if $C^\ast=C$.
Since $\dim C^\ast = n - m$,
the length of a self-dual code must be even.

A binary code is said to be \emph{even} if the weight of any codeword is even,
and \emph{doubly-even} if a multiple of 4.
An even but not doubly-even code is said to be \emph{singly-even}.
A singly-even self-dual code is sometimes called \emph{Type I},
and a doubly-even self-dual code \emph{Type II}.
The length of a Type II code must be a multiple of 8 \cite[Ch.\ 7 \S6 Cor.\ 18]{MR1662447}.

\vspace{\vertspace}

\noindent \textbf{The binary Golay code}

An \emph{(extended) binary Golay code} $G_{24}$ is the unique linear code $[24, 12, 8]_2$ up to code equivalence.
This code is doubly-even and self-dual.
A basis of $G_{24}$ can be read off from the proof of \cite[Ch.\ 10 \S 2.1 Thm.\ 7]{MR1662447} as
\begin{align}
\begin{array}{r}
(1,1,1,1,1, 0,0,1,0,0, 1,0,1,0,0, 0,0,0,0,0, 0,0,0,0), \\
(0, 1,1,1,1,1, 0,0,1,0,0, 1,0,1,0,0, 0,0,0,0,0, 0,0,0), \\
(0,0, 1,1,1,1,1, 0,0,1,0,0, 1,0,1,0,0, 0,0,0,0,0, 0,0), \\
(0,0,0, 1,1,1,1,1, 0,0,1,0,0, 1,0,1,0,0, 0,0,0,0,0, 0), \\
(0,0,0,0, 1,1,1,1,1, 0,0,1,0,0, 1,0,1,0,0, 0,0,0,0,0), \\
(0, 0,0,0,0, 1,1,1,1,1, 0,0,1,0,0, 1,0,1,0,0, 0,0,0,0), \\
(0,0, 0,0,0,0, 1,1,1,1,1, 0,0,1,0,0, 1,0,1,0,0, 0,0,0), \\
(0,0,0, 0,0,0,0, 1,1,1,1,1, 0,0,1,0,0, 1,0,1,0,0, 0,0), \\
(0,0,0,0, 0,0,0,0, 1,1,1,1,1, 0,0,1,0,0, 1,0,1,0,0, 0), \\
(0,0,0,0,0, 0,0,0,0, 1,1,1,1,1, 0,0,1,0,0, 1,0,1,0,0), \\
(0, 0,0,0,0,0, 0,0,0,0, 1,1,1,1,1, 0,0,1,0,0, 1,0,1,0), \\
(1,1,1,1,1, 1,1,1,1,1, 1,1,1,1,1, 1,1,1,1,1, 1,1,1,1).
\end{array}
\label{eq:G24-basis}
\end{align}
The 24 columns are labeled by $0, 1, \ldots, 22, \infty$ in \cite{MR1662447},
and we will write $23$ instead of $\infty$ below.

A binary Golay code only has codewords with weight 0, 8, 12, 16, and 24.
Codewords with weight 8 and 12 are called an \emph{octad} and a \emph{dodecad}, respectively.

\vspace{\vertspace}

\noindent \textbf{The largest Mathieu group $M_{24}$}

The automorphism group $\Aut(G_{24})$ of a binary Golay code is the largest \emph{Mathieu group} $M_{24}$.
It is a sporadic simple group which is 5-transitive as a subgroup of $S_{24}$ acting on 24 points.
In the above basis (\ref{eq:G24-basis}), $M_{24}$ is generated by the following four permutations \cite[Ch. 10 \S2.1]{MR1662447}:
\begin{align}
\begin{array}{l}(0,1,2,3,4,5,6,7,8,9,10,11,12,13,14,15,16,17,18,19,20,21,22),\\
(15,7,14,5,10,20,17,11,22,21,19)(3,6,12,1,2,4,8,16,9,18,13),\\
(23,0)(15,3)(7,13)(14,18)(5,9)(10,16)(20,8)(17,4)(11,2)(22,1)(21,12)(19,6),\\
(14,17,11,19,22)(20,10,7,5,21)(18,4,2,6,1)(8,16,13,9,12).
\end{array}
\label{eq:M24-generator-from-CS}
\end{align}

A presentation of $M_{24}$ can be found in \cite{ATLASweb} as
\begin{align}
\hspace{-10pt}\scalebox{0.77}{$M_{24} = \langle a, b \mid a^2 = b^3 = (ba)^{23} = [b^{-1},a]^{12} = [(bab)^{-1},a]^5 = (b^{-1}ab^{-1}aba)^3(b^{-1}ababa)^3 = ((b^{-1}aba)^3ba)^4 = 1 \rangle$}, & \label{eq:presentation-of-M24}
\end{align}
where $[y^{-1},x^{-1}]=y^{-1}x^{-1}yx$ and note that the multiplication (\ref{eq:mult-of-perm-op}) in \cite{ATLASweb} is converted to (\ref{eq:mult-of-perm}) here.
In $M_{24}$ generated by (\ref{eq:M24-generator-from-CS}), these generators $a$ and $b$ can be taken as\footnote{
These generators (\ref{eq:specific-generator-of-M24-a}) and (\ref{eq:specific-generator-of-M24-b}) were found
in the result of the GAP \cite{GAP4} command
\texttt{IsomorphismGroups(MathieuGroup(24), G)},
where \texttt{G} is declared as a \texttt{Group} generated by the permutations (\ref{eq:M24-generator-from-CS}).
This command returns one explicit isomorphism in the form of a map between generators,
so it suffices to check the relations in (\ref{eq:presentation-of-M24}) for the displayed generators.
}
\begin{align}
& \scalebox{0.9}{$a = (0,7)(1,6)(2,18)(3,22)(4,19)(5,14)(8,15)(9,10)(11,21)(12,17)(13,23)(16,20),$} \label{eq:specific-generator-of-M24-a}\\
& \scalebox{0.9}{$b = (1,8,11)(2,12,23)(3,16,20)(4,15,14)(7,9,22)(10,13,21).$} \label{eq:specific-generator-of-M24-b}
\end{align}

\vspace{\vertspace}

\noindent \textbf{The second largest Mathieu group $M_{23}$}

The Mathieu Group $M_{23}$ is the stabilizer group of one point of the action of $M_{24}$ on 24 points.
Since $M_{24}$ is transitive, $M_{23}$ is unique up to isomorphism, regardless of the choice of the stabilized point.
It is also a sporadic simple group.

A presentation of $M_{23}$ can be found in \cite{ATLASweb} as
\begin{align}
\scalebox{0.9}{$M_{23} = \langle a, b \mid$} & \ \scalebox{0.9}{$a^2 = b^4 = (ba)^{23} = (b^2a)^6 = [b^{-1},a]^6 = (b^2ab^{-1}aba)^4 = 1,$} \nonumber \\
& \scalebox{0.9}{$(b^{-1}a)^3(ba)^3(b^{-1}aba)^2b^2ab^{-1}a(ba)^3 = b^2ab^{-1}abab^2aba(b^{-1}ab^2a)^2(b^2aba)^3 = 1 \rangle$}. \label{eq:presentation-of-M23}
\end{align}
If we take $M_{23}$ as the subgroup of $M_{24}$ generated by (\ref{eq:M24-generator-from-CS}) stabilizing the point 23,
these generators $a$ and $b$ can be taken as
\begin{align}
a & = (0,14)(1,19)(2,11)(3,18)(4,16)(6,12)(7,17)(10,13), \\
b & = (0,11,16,20)(1,5,2,7)(3,8,15,19)(4,22)(9,14,21,12)(10,18).
\end{align}

%%%%%%%%%%%%%%%%%%%%%%%%%%%%%%
\subsection{Odd Leech Lattice, Leech Lattice, and Conway Groups}
\label{subsec:Leech-lattice}
%%%%%%%%%%%%%%%%%%%%%%%%%%%%%%
A \emph{lattice} of rank $n$ is a free abelian group $L$ of rank $n$ whose basis is an $\R$-basis of a vector space $\R^n$ with a symmetric bilinear form $(-,-):L\times L\to\R$.
Such a lattice is denoted by the pair $(L,(-,-))$, or just $L$, and naturally regarded as a subset of $\R^n$.
A vector $k$ in $L$ is sometimes called a \emph{lattice point} of $L$,
and we also write the product $(k,k')$ of two vectors $k,k'\in L$ given by the symmetric bilinear form as $k\cdot k'$.
If any vectors $k,k'\in L$ satisfy $k\cdot k'\in\Z$, then the lattice is said to be \emph{integral}.
The \emph{squared length} of a vector $k\in L$ is defined as $|k|^2:=k\cdot k$.
A vector $k\in L$ with $|k|^2\in 2\Z$ is called an \emph{even vector},
and $k\in L$ with $|k|^2\in 2\Z+1$ is called an \emph{odd vector}.
An integer lattice is said to be \emph{even} if its vectors are all even,
and \emph{odd} otherwise.
An \emph{isometry} or \emph{isomorphism} $g:L\to L'$ of lattices $L$ and $L'$ is an isomorphism of free abelian groups compatible with their symmetric bilinear forms.
The group of all the isometries $L \to L$, or the \emph{automorphisms} of $L$, is denoted by $\Aut(L)$ or $O(L)$,
but note that $\Aut(L)$ sometimes denotes the automorphism group of just a free abelian group $L$ (e.g.\ Section \ref{subsec:review-of-FLM}).

The \emph{dual lattice} $L^\ast$ of $L$ is defined by
\begin{align}
L^\ast := \{l\in\R^n \mid (l,k)\in\Z \ \text{for any $k \in L$}\},
\end{align}
with the same symmetric bilinear form as that of $L$.
$L$ is integral if and only if $L\subset L^\ast$.
$L$ is said to be \emph{self-dual} or \emph{unimodular} if $L^\ast=L$.
An odd self-dual lattice is called \emph{Type I},
and an even self-dual lattice is called \emph{Type II}.

\vspace{\vertspace}

\noindent \textbf{Lattices from codes}

There are several ways to construct a lattice from a given code.
We focus on the case of binary codes here.
See \cite{MR1662447} for more on constructions.

Let $C \subset (\F_2)^n$ be a binary code of length $n$,
and regard each codeword of $C$ as a vector with entries 0 or 1.
We can construct a lattice $\Lambda(C)$ of rank $n$ as
\begin{align}
\Lambda(C) := \frac{1}{\sqrt{2}}C + \sqrt{2}\Z^n \subset \R^n,
\end{align}
with the standard Euclidean metric $k\cdot k'=\sum_ik_ik'_i$ for $k=(k_i)_i,k'=(k'_i)_i\in\R^n$ as a symmetric bilinear form.
This construction is called Construction A \cite[Ch.\ 7 \S2]{MR1662447}.
This lattice $\Lambda(C)$ satisfies $\Lambda(C^\ast)=\Lambda(C)^\ast$,
and hence $\Lambda(C)$ is integer if and only if $C$ satisfies $C\subset C^\ast$,
and self-dual if and only if $C$ is self-dual.
In addition, $\Lambda(C)$ is of Type I (odd self-dual) if and only if $C$ is of Type I (singly-even self-dual),
and Type II (even self-dual) if and only if $C$ is of Type II (doubly-even self-dual).

Another construction called the Construction B \cite[Ch.\ 7 \S5]{MR1662447} associates the sublattice
\begin{align}
\Lambda_\mathrm{B}(C) := \{k=(k_0,\ldots,k_{n-1})\in\Lambda(C) \mid \sqrt{2}\sum_{i=0}^{n-1}k_i \in 4\Z\}
\end{align}
of $\Lambda(C)$ to a binary code $C$.

For a doubly-even self-dual binary code $C$, we further consider the following constructions \cite[\S5.1]{Dolan:1994st}.
We first define
\begin{align}
& \Z^n_+ := \{x\in\Z^n \mid |x|^2 \in 2\Z\}, \\
& \Z^n_- := \{x\in\Z^n \mid |x|^2 \in 2\Z+1\}.
\end{align}
Recall that the length $n$ of a doubly-even self-dual code is always a multiple of 8,
and also define
\begin{align}
& \Lambda_0(C) := \frac{1}{\sqrt{2}} C + \sqrt{2} \Z^n_+, \label{eq:def-of-Lambda_0(C)} \\
& \Lambda_1(C) := \frac{1}{\sqrt{2}} C + \sqrt{2} \Z^n_-, \\
& \Lambda_2(C) := \frac{1}{2\sqrt{2}}\underline{1} + \frac{1}{\sqrt{2}} C + \sqrt{2} \Z^n_{(-)^{\frac{n}{8}+1}}, \\
& \Lambda_3(C) := \frac{1}{2\sqrt{2}}\underline{1} + \frac{1}{\sqrt{2}} C + \sqrt{2} \Z^n_{(-)^{\frac{n}{8}}}, \label{eq:def-of-Lambda_3(C)}
\end{align}
where $\underline{1}:=(1,\ldots,1)$.
Then we can construct some lattices as in Table \ref{tbl:lettice-constructions}.

\vspace{\vertspace}

\begin{table}
\caption{Lattices constructed from a doubly-even self-dual binary code $C$.
On the right side of the vertical line,
the names of lattices constructed from the binary Golay code $G_{24}$ and their isometry groups are shown.}
\hspace{-20pt}\begin{tabular}{lll|ll}
name of construction & lattice & property & for $C=G_{24}$ & Aut(lattice) \\\hline
Construction A & $\Lambda(C)=\Lambda_0(C)\cup\Lambda_1(C)$ & even self-dual & $(A_1)^{24}$ & $2^{24}:M_{24}$\\
Construction B & $\Lambda_\mathrm{B}(C)=\Lambda_0(C)$ & even & &\\
twisted construction & $\tilde{\Lambda}(C)=\Lambda_0(C)\cup\Lambda_3(C)$ & even self-dual & Leech lattice $\Lambda_{24}$ & $\Co_0$\\
$-$ & $\tilde{\Lambda}'(C)=\Lambda_0(C)\cup\Lambda_2(C)$ & odd self-dual & odd Leech lattice $O_{24}$ & $2^{12}:M_{24}$
\end{tabular}
\label{tbl:lettice-constructions}
\end{table}

\noindent \textbf{Odd Leech lattice $O_{24}$}

The \emph{odd Leech lattice} $O_{24}$ is the unique odd self-dual lattice of rank 24 without roots (vectors with squared length 2) up to isometry.
It can be constructed from the binary Golay code $G_{24}$ as
\begin{align}
O_{24} = (\frac{1}{\sqrt{2}}G_{24} + \sqrt{2}\Z^{24}_+) \cup (\frac{1}{2\sqrt{2}}\underline{1} + \frac{1}{\sqrt{2}}G_{24} + \sqrt{2}\Z^{24}_+).
\label{eq:odd-Leech-from-Golay-code}
\end{align}

If we use the binary Golay code $G_{24}$ with the basis (\ref{eq:G24-basis}),
then we can take a $\Z$-basis $e_0,\ldots,e_{23}$ of (\ref{eq:odd-Leech-from-Golay-code}) as
\begin{align}
\begin{array}{rl}
e_0 & = \frac{1}{2\sqrt{2}}\underline{1}, \\
e_1 & = \frac{1}{\sqrt{2}}\text{(the second line from the bottom of (\ref{eq:G24-basis}))}, \\
& \ \vdots \\
e_{11} & = \frac{1}{\sqrt{2}}\text{(the first line of (\ref{eq:G24-basis}))}, \\
e_{12} & = \sqrt{2}(1,0,0,0,0, 0,0,0,0,0, 0,1,0, \ldots, 0), \\
e_{13} & = \sqrt{2}(1,0,0,0,0, 0,0,0,0,0, 1,0,0, \ldots, 0), \\
& \ \vdots \\
e_{22} & = \sqrt{2}(1,1,0,\ldots,0), \\
e_{23} & = \sqrt{2}(2,0,\ldots,0).
\end{array}
\label{eq:odd-Leech-basis}
\end{align}
In fact, $O_{24}$ in (\ref{eq:odd-Leech-from-Golay-code}) obviously contains $\Span_\Z\{e_i\}_i$,
and the opposite direction of the inclusion can be checked as follows.
Since it is obvious that $\Span_\Z\{e_i\}_i$ contains $\frac{1}{2\sqrt{2}}\underline{1}$ and $\frac{1}{\sqrt{2}}G_{24}$,
it suffices to check that it also contains $\sqrt{2}\Z^{24}_+$.
Here, $\Z^{24}_+$ is generated by the 24 vectors
\begin{align*}
&(1,0,\ldots,0,0,1),\\
&(1,0,\ldots,0,1,0),\\
&\ \ \vdots\\
&(1,1,0,\ldots,0),\\
&(2,0,\ldots,0).
\end{align*}
Since the last 12 vectors of $\sqrt{2}\times$(above 24 vectors) are exactly $e_{12},\ldots,e_{23}$,
it suffices to check that the first 12 vectors of $\sqrt{2}\times$(above 24 vectors) can be written as $\Z$-linear combinations of $e_0,\ldots,e_{23}$,
which can be checked by computer.

The isometry group $\Aut(O_{24})$ of the odd Leech lattice is known to be $2^{12}:M_{24}$ \cite[Ch.\ 17]{MR1662447}.
In the construction (\ref{eq:odd-Leech-from-Golay-code}),
these automorphisms are apparent because $M_{24}=\Aut(G_{24})$ and $2^{12}$ are the maps $k=(k_i)_i\mapsto((-1)^{w_i}k_i)_i$ where $w=(w_i)_i\in G_{24}$.

If we use Construction A for ternary codes, an odd Leech lattice can also be constructed from any self-dual ternary code of length 24 with the minimal nonzero weight 9.
It is known that there are only two such ternary codes up to equivalence \cite{MR633414}:
the extended quadratic residue code $Q_{24}$ (see e.g. \cite[Ch.\ 16]{MR465510}),
and the symmetric code $P_{24}$ defined by Pless \cite{MR245455, MR290865}.
However, the automorphism groups of these ternary codes are $\Aut(Q_{24})=(\Z_3)^\times\cdot\mathrm{PSL}(23,\F_2)$ \cite[Ch.\ 16 \S5]{MR465510} and $\Aut(P_{24})=\Z_4\cdot\mathrm{PGL}(11,\F_2)$ \cite[\S5.2]{MR441541},
so the structure of $\Aut(O_{24})=2^{12}:M_{24}$ is not apparent in these constructions of odd Leech lattices.
See \cite[Example 4.5]{Gaiotto:2018ypj} for application of these constructions to $\cN=1$ supersymmetries of the lattice VOAs.

% Lastly, we note that the odd Leech lattice was first constructed in \cite{MR10153}.

\vspace{\vertspace}

\noindent \textbf{Leech lattice $\Lambda_{24}$}

The \emph{Leech lattice} $\Lambda_{24}$ is the unique even self-dual lattice of rank 24 without roots (vectors with squared length 2) up to isometry.
It can be constructed from the binary Golay code $G_{24}$ as
\begin{align}
\Lambda_{24} = (\frac{1}{\sqrt{2}}G_{24} + \sqrt{2}\Z^{24}_+) \cup (\frac{1}{2\sqrt{2}}\underline{1} + \frac{1}{\sqrt{2}}G_{24} + \sqrt{2}\Z^{24}_-).
\label{eq:Leech-from-Golay-code}
\end{align}
% The Leech lattice was discovered by Leech in \cite[\S2.31]{MR209983}, but one of the more than 10 Niemeier lattices reported to be found by Witt in \cite[p.\ 324]{MR5508} without further details is said to be the Leech lattice \cite[p.\ 328-329]{MR1643949}; see an informative video \cite{BorcherdsVideo} by Borcherds for more details.

\vspace{\vertspace}

\noindent \textbf{The largest Conway group $\Co_0$}

The isometry group $\Aut(\Lambda_{24})$ of the Leech lattice is the largest \emph{Conway group} $\Co_0$.
It is not a simple group, but its quotient by the center $\{\pm1\}$ is a sporadic simple group called the second largest Conway group $\Co_1$.

Generators of $\Co_0$ as a subgroup of $\GL(24, \Z)$ can be found on the $\Co_1$ page of \cite{ATLASweb} as
\begin{align}
A & = \scalebox{0.5}{$\left(\begin{array}{rrrrrrrrrrrrrrrrrrrrrrrr}
2 & 0 & 0 & -3 & -2 & -1 & 0 & -2 & 1 & 1 & 0 & 1 & 1 & 0 & 0 & 1 & -1 & -1 & 0 & 0 & -1 & 0 & 0 & 0 \\
-4 & 1 & 0 & 6 & 4 & 2 & 2 & 3 & -1 & -3 & -1 & 0 & -2 & -1 & -1 & -1 & 1 & 1 & 0 & 1 & 0 & 0 & 0 & 0 \\
4 & -1 & 0 & -6 & -4 & -2 & -3 & -4 & 2 & 4 & 3 & -3 & 2 & 1 & 2 & 0 & 1 & -1 & -1 & -1 & 1 & 0 & -1 & 1 \\
4 & 1 & 1 & -6 & -3 & -2 & -1 & -4 & 2 & 2 & 1 & 0 & 1 & 0 & 1 & 2 & 0 & -1 & -1 & 0 & -1 & 0 & 0 & 1 \\
5 & -2 & 0 & -8 & -5 & -3 & -2 & -4 & 2 & 4 & 2 & 0 & 2 & 2 & 1 & 1 & -1 & -1 & 0 & 0 & 0 & 1 & -1 & 0 \\
-3 & 0 & -1 & 4 & 1 & 2 & 1 & 1 & 0 & -2 & 0 & 0 & 0 & -1 & -1 & -1 & 0 & 0 & 0 & 0 & -1 & -1 & 0 & 0 \\
1 & 0 & -1 & -2 & -2 & -1 & 0 & -2 & 0 & 1 & 1 & 0 & 0 & 0 & 0 & 0 & -1 & 0 & 0 & -1 & 0 & 0 & 0 & 0 \\
-3 & 1 & -1 & 5 & 3 & 1 & 2 & 2 & -1 & -2 & -1 & 0 & -1 & -1 & -1 & -1 & 0 & 0 & 1 & 0 & 0 & 0 & 0 & 0 \\
-2 & 2 & 0 & 3 & 3 & 0 & 2 & 2 & 0 & -1 & -2 & 1 & 0 & -1 & 0 & 0 & 1 & -1 & 1 & 1 & 0 & 0 & 0 & 1 \\
9 & -1 & -1 & -13 & -9 & -5 & -4 & -9 & 4 & 6 & 4 & -1 & 3 & 1 & 2 & 2 & -1 & -3 & -1 & -1 & -2 & 1 & -1 & 1 \\
1 & -1 & -1 & -3 & -3 & -1 & -1 & -4 & 2 & 2 & 3 & -2 & 1 & 0 & 0 & 0 & 0 & 0 & -1 & -1 & 0 & 0 & -1 & 1 \\
4 & -1 & 1 & -6 & -4 & -2 & -2 & -3 & 2 & 3 & 2 & -1 & 2 & 1 & 1 & 1 & 0 & -1 & -1 & 0 & 0 & 0 & -1 & 1 \\
-4 & 0 & 0 & 5 & 3 & 2 & 2 & 3 & -2 & -2 & -1 & 0 & -2 & 0 & -1 & -1 & 0 & 2 & 0 & 0 & 1 & 0 & 0 & -1 \\
8 & 2 & 2 & -9 & -5 & -4 & 0 & -5 & 2 & 2 & 2 & 1 & 0 & 1 & 1 & 2 & -1 & 0 & -1 & 1 & -1 & 1 & 0 & 1 \\
-3 & 0 & 0 & 3 & 3 & 1 & 0 & 3 & -1 & -1 & -1 & 0 & 0 & 0 & 0 & 0 & 1 & 0 & 1 & 0 & 1 & -1 & 0 & 0 \\
-5 & 2 & 0 & 6 & 4 & 3 & 3 & 2 & -1 & -4 & -2 & 1 & -2 & -2 & -2 & 0 & 0 & 1 & 0 & 0 & -1 & -1 & 1 & 0 \\
3 & 0 & 0 & -3 & -3 & -1 & 0 & -2 & 0 & 1 & 1 & 0 & 0 & 0 & 0 & 0 & -1 & 0 & -1 & -1 & 0 & 0 & 0 & 0 \\
0 & 0 & 0 & 1 & 1 & 0 & 0 & 2 & -1 & 0 & -2 & 1 & 0 & 0 & 0 & 0 & 0 & -1 & 1 & 0 & 0 & 0 & 0 & -1 \\
-1 & 0 & -1 & 2 & 0 & 1 & -1 & 0 & 0 & 0 & 0 & -1 & 1 & -1 & 0 & -1 & 1 & -1 & 0 & -1 & 0 & -1 & 0 & 0 \\
-4 & -1 & -1 & 4 & 2 & 2 & 0 & 2 & -1 & -2 & -1 & 1 & -1 & -1 & -1 & 0 & 0 & 1 & 0 & 0 & -1 & 0 & 1 & -1 \\
-9 & 1 & 0 & 11 & 7 & 5 & 3 & 6 & -2 & -5 & -3 & 1 & -2 & -2 & -2 & -1 & 1 & 2 & 0 & 1 & 0 & -1 & 1 & 0 \\
-6 & 1 & 0 & 9 & 5 & 4 & 3 & 4 & -2 & -4 & -1 & -1 & -2 & -1 & -1 & -2 & 1 & 2 & 0 & 0 & 1 & -1 & 0 & 0 \\
5 & 1 & -1 & -7 & -5 & -3 & -1 & -6 & 2 & 3 & 3 & 0 & 1 & 0 & 1 & 1 & -1 & 0 & -1 & -1 & -1 & 1 & 0 & 1 \\
6 & 0 & 1 & -6 & -4 & -2 & -2 & -3 & 1 & 2 & 1 & 0 & 2 & 1 & 1 & 1 & -1 & -1 & 0 & 0 & -1 & 0 & 0 & 0
\end{array}\right)$}^T,
\label{eq:Co0-generator-A}
\end{align}
\begin{align}
B & = \scalebox{0.5}{$\left(\begin{array}{rrrrrrrrrrrrrrrrrrrrrrrr}
0 & 0 & 1 & 0 & 0 & 0 & -1 & 1 & 1 & 0 & -1 & 0 & 1 & 0 & 1 & 0 & 1 & 0 & 0 & 1 & 0 & 1 & 0 & 1 \\
3 & 1 & -1 & -3 & -2 & -2 & 0 & -3 & 1 & 2 & 2 & -1 & 1 & 0 & 1 & 0 & 0 & -1 & 0 & -1 & 0 & -1 & -1 & 0 \\
-4 & 0 & -1 & 7 & 4 & 3 & 1 & 4 & -3 & -3 & -2 & 0 & -1 & -1 & -1 & -1 & 0 & 0 & 1 & -1 & 0 & -1 & 1 & -1 \\
0 & -1 & 0 & 1 & 0 & 1 & -1 & 1 & 0 & 0 & 0 & -1 & 1 & 0 & 1 & -1 & 0 & 0 & 0 & 0 & 0 & 0 & 0 & 0 \\
6 & 0 & 2 & -6 & -4 & -2 & -3 & -2 & 1 & 2 & 1 & -1 & 1 & 1 & 2 & 1 & 1 & -1 & -1 & 0 & 0 & 1 & 0 & 1 \\
2 & 0 & -1 & -5 & -3 & -3 & -1 & -4 & 3 & 3 & 1 & 0 & 2 & 0 & 1 & 1 & 0 & -1 & 0 & 0 & 0 & 1 & -1 & 1 \\
-10 & 0 & 0 & 12 & 8 & 5 & 4 & 7 & -3 & -6 & -3 & 1 & -3 & -1 & -3 & -1 & 0 & 3 & 1 & 1 & 1 & -1 & 1 & -1 \\
-1 & 4 & 1 & 5 & 4 & 1 & 3 & 3 & -2 & -3 & -2 & 1 & -1 & -1 & 0 & 0 & 1 & 1 & 1 & 0 & 0 & -1 & 1 & 0 \\
-1 & 0 & 1 & 3 & 2 & 1 & -1 & 3 & -1 & -1 & -2 & 0 & 1 & 0 & 1 & 0 & 1 & 0 & 1 & 0 & 0 & 0 & 1 & 0 \\
2 & 2 & 1 & 0 & 0 & 0 & 0 & 0 & 0 & 0 & 0 & -1 & 1 & 0 & 2 & 0 & 1 & 0 & 0 & 0 & 0 & 0 & 0 & 1 \\
-5 & 3 & 0 & 11 & 7 & 4 & 4 & 6 & -4 & -5 & -3 & 0 & -2 & -2 & -1 & -2 & 1 & 1 & 1 & -1 & 1 & -2 & 1 & -1 \\
3 & 0 & 1 & -2 & -1 & -1 & -1 & 0 & 0 & 1 & -1 & 0 & 1 & 0 & 1 & 0 & 0 & -1 & 0 & 0 & 0 & 1 & 0 & 0 \\
-4 & 0 & -1 & 5 & 3 & 2 & 2 & 2 & -1 & -3 & 0 & 0 & -2 & -1 & -2 & -1 & 0 & 1 & 0 & 0 & 0 & -1 & 0 & -1 \\
2 & -1 & 1 & -2 & -1 & -1 & -1 & 0 & 1 & 1 & 1 & -1 & 1 & 1 & 1 & 0 & 0 & 0 & 0 & 1 & 1 & 0 & -1 & 0 \\
2 & -1 & 1 & -3 & -1 & -1 & -1 & 0 & -1 & 1 & -1 & 1 & -1 & 1 & 0 & 1 & -1 & 0 & 0 & 0 & 0 & 1 & 1 & -1 \\
-2 & 0 & -1 & 3 & 1 & 1 & 1 & 1 & 0 & -1 & 0 & 0 & 0 & -1 & 0 & -1 & 0 & 1 & 0 & 0 & 0 & 0 & 0 & 0 \\
-11 & -2 & -1 & 12 & 7 & 5 & 3 & 7 & -2 & -5 & -3 & 1 & -2 & -1 & -3 & -2 & 0 & 2 & 1 & 1 & 1 & 0 & 0 & -1 \\
-2 & 0 & 1 & 2 & 2 & 1 & 0 & 3 & -1 & -1 & -2 & 1 & -1 & 0 & 0 & 0 & 1 & 0 & 0 & 1 & 0 & 1 & 0 & 0 \\
-6 & 1 & -1 & 7 & 5 & 2 & 2 & 4 & -2 & -2 & -3 & 1 & -1 & -1 & -1 & -1 & 1 & 0 & 1 & 0 & 1 & 0 & 0 & 0 \\
-3 & 1 & -1 & 3 & 2 & 1 & 2 & 0 & 0 & -1 & 0 & 0 & -1 & -1 & -1 & -1 & 0 & 0 & 0 & 0 & 0 & -1 & 0 & 0 \\
2 & -2 & -2 & -6 & -5 & -2 & -2 & -6 & 3 & 4 & 3 & -1 & 2 & 0 & 0 & 0 & -1 & -1 & -1 & -1 & -1 & 0 & -1 & 0 \\
2 & -1 & -2 & -4 & -4 & -1 & -1 & -5 & 2 & 2 & 4 & -2 & 1 & 0 & 0 & 0 & -1 & 0 & -1 & -2 & 0 & -1 & -1 & 0 \\
-6 & 1 & 0 & 8 & 5 & 4 & 3 & 3 & -2 & -4 & -1 & 0 & -2 & -1 & -2 & -1 & 0 & 2 & 0 & 0 & 0 & -2 & 1 & -1 \\
14 & 1 & 2 & -18 & -11 & -7 & -4 & -10 & 4 & 7 & 4 & 0 & 3 & 2 & 3 & 3 & -1 & -2 & -2 & 0 & -1 & 2 & -1 & 2
\end{array}\right)$}^T,
\label{eq:Co0-generator-B}
\end{align}
where the transpose on the right-hand side is taken just for the convenience of notation.
These generators satisfy $A^2=-1$ and $B^3=1$.

The Conway group $\Co_0$ generated by these generators $A$ and $B$ is the isometry group of a lattice $(\Z^{24},q)$, 
where $q$ is some symmetric bilinear form which makes $(\Z^{24},q)$ isomorphic to the Leech lattice.
The matrix form of $q$ up to scalar multiplication can be found by solving the conditions
\begin{align}
A^T q A = q, \quad B^T q B = q,
\end{align}
by Mathematica \cite{Mathematica}.
Note that the elements of the Conway group $\Co_0$ in this notation act on the column vectors in the lattice $(\Z^{24},q)$ from the left.
The result is\footnote{
The author thanks Yuji Tachikawa for providing information on this symmetric bilinear form $q$.
}
\begin{align}
q & = \scalebox{0.5}{$\left(\begin{array}{rrrrrrrrrrrrrrrrrrrrrrrr}
4 & -2 & -2 & 2 & 2 & 2 & -1 & -1 & 2 & 2 & -2 & 2 & -2 & 1 & -1 & 1 & 1 & 2 & 0 & -1 & 0 & -2 & -2 & 2 \\
-2 & 4 & 0 & 0 & -2 & 0 & -1 & 2 & 0 & 0 & 2 & -1 & 1 & 1 & -1 & 1 & -2 & -2 & -1 & 1 & 1 & 2 & 1 & -2 \\
-2 & 0 & 4 & 0 & 0 & -2 & 0 & 0 & 0 & 0 & 2 & 0 & 0 & -2 & 1 & -1 & -1 & -1 & 1 & -1 & -1 & 1 & 0 & -1 \\
2 & 0 & 0 & 4 & 0 & 0 & -2 & -1 & 2 & 2 & 0 & 2 & -2 & 1 & 0 & 2 & -1 & 0 & -1 & 0 & 1 & -1 & -1 & 0 \\
2 & -2 & 0 & 0 & 4 & 0 & -1 & -1 & 1 & 1 & -1 & 1 & -1 & 0 & 0 & -1 & 0 & 2 & 0 & -2 & -2 & -1 & -2 & 2 \\
2 & 0 & -2 & 0 & 0 & 4 & 0 & 0 & 1 & 1 & -1 & 1 & -1 & 1 & -1 & 1 & 1 & 0 & 1 & 0 & 1 & 0 & -1 & 1 \\
-1 & -1 & 0 & -2 & -1 & 0 & 4 & 1 & -1 & -1 & 0 & -2 & 2 & 0 & 0 & 0 & 2 & -1 & 0 & 0 & -1 & 1 & 2 & -1 \\
-1 & 2 & 0 & -1 & -1 & 0 & 1 & 4 & 1 & 1 & 2 & -1 & 1 & 0 & -1 & 1 & -1 & -1 & -1 & -1 & -1 & 1 & 1 & -1 \\
2 & 0 & 0 & 2 & 1 & 1 & -1 & 1 & 4 & 2 & 0 & 2 & -2 & 1 & 0 & 1 & -1 & 0 & -1 & -2 & 0 & -1 & -1 & 0 \\
2 & 0 & 0 & 2 & 1 & 1 & -1 & 1 & 2 & 4 & 0 & 1 & -2 & 0 & -1 & 1 & -1 & 1 & 0 & -1 & -1 & -1 & -1 & 1 \\
-2 & 2 & 2 & 0 & -1 & -1 & 0 & 2 & 0 & 0 & 4 & 0 & 1 & -1 & 0 & 1 & -1 & -2 & 0 & 0 & 0 & 2 & 1 & -2 \\
2 & -1 & 0 & 2 & 1 & 1 & -2 & -1 & 2 & 1 & 0 & 4 & -2 & 0 & 1 & 0 & 0 & 1 & 0 & -1 & 1 & -2 & -2 & 1 \\
-2 & 1 & 0 & -2 & -1 & -1 & 2 & 1 & -2 & -2 & 1 & -2 & 4 & 0 & -1 & 0 & 1 & -1 & -1 & 1 & 0 & 2 & 2 & -2 \\
1 & 1 & -2 & 1 & 0 & 1 & 0 & 0 & 1 & 0 & -1 & 0 & 0 & 4 & -1 & 1 & 0 & -1 & -2 & 0 & 0 & 0 & 0 & 0 \\
-1 & -1 & 1 & 0 & 0 & -1 & 0 & -1 & 0 & -1 & 0 & 1 & -1 & -1 & 4 & -1 & -1 & 0 & 0 & 0 & 0 & -1 & 0 & 0 \\
1 & 1 & -1 & 2 & -1 & 1 & 0 & 1 & 1 & 1 & 1 & 0 & 0 & 1 & -1 & 4 & 0 & -1 & -1 & 0 & 1 & 1 & 0 & -1 \\
1 & -2 & -1 & -1 & 0 & 1 & 2 & -1 & -1 & -1 & -1 & 0 & 1 & 0 & -1 & 0 & 4 & 1 & 1 & 0 & 0 & 0 & 0 & 0 \\
2 & -2 & -1 & 0 & 2 & 0 & -1 & -1 & 0 & 1 & -2 & 1 & -1 & -1 & 0 & -1 & 1 & 4 & 1 & -1 & -1 & -2 & -2 & 2 \\
0 & -1 & 1 & -1 & 0 & 1 & 0 & -1 & -1 & 0 & 0 & 0 & -1 & -2 & 0 & -1 & 1 & 1 & 4 & 0 & 0 & 0 & -1 & 1 \\
-1 & 1 & -1 & 0 & -2 & 0 & 0 & -1 & -2 & -1 & 0 & -1 & 1 & 0 & 0 & 0 & 0 & -1 & 0 & 4 & 2 & 0 & 2 & -1 \\
0 & 1 & -1 & 1 & -2 & 1 & -1 & -1 & 0 & -1 & 0 & 1 & 0 & 0 & 0 & 1 & 0 & -1 & 0 & 2 & 4 & 0 & 1 & -1 \\
-2 & 2 & 1 & -1 & -1 & 0 & 1 & 1 & -1 & -1 & 2 & -2 & 2 & 0 & -1 & 1 & 0 & -2 & 0 & 0 & 0 & 4 & 1 & -2 \\
-2 & 1 & 0 & -1 & -2 & -1 & 2 & 1 & -1 & -1 & 1 & -2 & 2 & 0 & 0 & 0 & 0 & -2 & -1 & 2 & 1 & 1 & 4 & -2 \\
2 & -2 & -1 & 0 & 2 & 1 & -1 & -1 & 0 & 1 & -2 & 1 & -2 & 0 & 0 & -1 & 0 & 2 & 1 & -1 & -1 & -2 & -2 & 4
\end{array}\right)$}.
\label{eq:sym-bilin-form-of-Leech}
\end{align}

%%%%%%%%%%%%%%%%%%%%%%%%%%%%%%
\subsection{Lattice CFT, \texorpdfstring{$\Z_2$}{Z2} Orbifold, and Fermionization}
\label{subsec:lattice-CFT}
%%%%%%%%%%%%%%%%%%%%%%%%%%%%%%
We can construct a modular invariant bosonic CFT from an even self-dual lattice.
Its mathematical description as a lattice VOA will be reviewed in Section \ref{subsec:autom-of-lattice-VOA}.
Here, we only present a brief explanation in the language of physics.

If the lattice $L$ is Euclidean, that is, the symmetric bilinear form is positive-definite,
then the resulting CFT is chiral.
We only deal with such cases in this paper.
The states of such a CFT are $\C$-linear combinations of the states in the form of
\begin{align}
\alpha^{i_1}_{-m_1}\cdots\alpha^{i_l}_{-m_l}|k\rangle,
\label{eq:general-element-of-lattice-CFT}
\end{align}
where $|k\rangle$ is the state with momentum vector $k\in L$, $m_1,\ldots,m_l\in \Z_{>0}$, and $\alpha^0_m,\ldots,\alpha^{n-1}_m$ ($m\in\Z$) are the creation-annihilation operators corresponding to an $\Z$-basis $e_0,\ldots, e_{n-1}$ of $L$, satisfying the commutation relations $[\alpha^i_m,\alpha^{i'}_{m'}]=(e_i,e_{i'})m\delta_{m+m',0}$.
We also often use the creation-annihilation operators $\bbalpha^0_m,\ldots,\bbalpha^{n-1}_m$
with respect to the standard basis $\bbe_0,\ldots,\bbe_{n-1}$ of $\R^n$ where the lattice $L$ is embedded.
They are related to $\alpha^0_m,\ldots,\alpha^{n-1}_m$ under a proper $\R$-linear transformation,
and their commutation relation is of course $[\bbalpha^i_m,\bbalpha^{i'}_{m'}] = (\bbe_i,\bbe_{i'})m\delta_{m+m',0}$.

If we introduce the chiral bosons $X(z)=(X^i(z))_{i=0,\ldots,n-1}$ which satisfy the OPE
\begin{align}
X^i(z_1)\cdot X^j(z_2) \sim -\delta^{ij}\log(z_1-z_2),
\end{align}
and whose mode expansions are given as
\begin{align}
\partial X^i(z)=-\sqrt{-1}\sum_{m=-\infty}^\infty\frac{\bbalpha^i_m}{z^{m+1}},
\end{align}
then we can describe the state-operator correspondence as
(see e.g. \cite[\S2.8]{Polchinski:1998rq})
\begin{align}
\bbalpha^i_{-m}|0\rangle & \quad \longleftrightarrow \quad \frac{\sqrt{-1}}{(m-1)!}\partial^{m}X^i(z),\\
|k\rangle & \quad \longleftrightarrow \quad V_k(z) \propto \ :e^{\sqrt{-1}k\cdot X(z)}:,
\end{align}
where $\propto$ denotes that we ignore the cocycle factor,
which will be treated in Section \ref{subsec:cocycle-factor}.

\vspace{\vertspace}

If we consider an even self-dual lattice $\Lambda(C)$ constructed by Construction A from a doubly-even self-dual code $C$,
the resulting CFT has a shift $\Z_2$ symmetry with respect to the shift vector $\chi := \frac{1}{\sqrt{2}}\underline{1}$ as
\begin{align}
& X(z) \to X(z) + \pi\chi,\\
& V_k(z) \propto\ :e^{\sqrt{-1}k\cdot X(z)}:\ \to e^{\sqrt{-1}\pi k\cdot\chi} V_k(z). \label{eq:shift-Z2-of-vertex-op}
\end{align}
Then we can also consider the states under the twisted boundary condition by this shift $\Z_2$ symmetry,
in addition to the untwisted ones.
Such twisted states with respect to a $\Z_2$ symmetry
are built up on the twisted ground states,
which constitute the $2^{\frac{n}{2}}$-dimensional irreducible representation of a certain gamma matrix algebra,
and these twisted ground states are of weight $\frac{n}{16}$ and get a sign $(-1)^{\frac{n}{8}}$ under the $\Z_2$ symmetry \cite[\S5.3]{Dolan:1994st}.
As a result, the even and odd sectors of untwisted and twisted sectors under the action of the shift $\Z_2$ symmetry can be described as
\begin{align}
\begin{array}{c|cc}
 & \text{untwisted} & \text{twisted} \\\hline
\text{even} & \Lambda_0(C) & \Lambda_3(C) \\
\text{odd} & \Lambda_1(C) & \Lambda_2(C)
\end{array}
\quad ,
\label{sectors-under-shift-Z2}
\end{align}
where $\Lambda_i(C)$ denotes the sector consisting of the states in the form of (\ref{eq:general-element-of-lattice-CFT}) with $k\in\Lambda_i(C)$.

In the modern understanding of fermionization \cite{TachikawaLecture, Karch:2019lnn} (see also \cite{Hsieh:2020uwb} and references therein), 
we can uniformly treat $\Z_2$ orbifolding and fermionization.
By orbifolding the $\Z_2$ symmetry of (\ref{sectors-under-shift-Z2}), a new $\Z_2$ symmetry emerges, and we obtain the following orbifold theory:
\begin{align}
\begin{array}{c|cc}
\text{orbifold} & \text{untwisted} & \text{twisted} \\\hline
\text{even} & \Lambda_0(C) & \Lambda_1(C) \\
\text{odd} & \Lambda_3(C) & \Lambda_2(C)
\end{array}
\quad .
\end{align}
By fermionizing the $\Z_2$ symmetry of (\ref{sectors-under-shift-Z2}), on the other hand, we obtain a fermionic CFT.
There are two ways of fermionization:
\begin{align}
\begin{array}{c|cc}
\text{fermionic} & \text{NS} & \text{R} \\\hline
(-1)^F=+1 & \Lambda_0(C) & \Lambda_1(C) \\
(-1)^F=-1 & \Lambda_2(C) & \Lambda_3(C)
\end{array}
\quad , \qquad
\begin{array}{c|cc}
\text{fermionic}' & \text{NS} & \text{R} \\\hline
(-1)^F=+1 & \Lambda_0(C) & \Lambda_3(C) \\
(-1)^F=-1 & \Lambda_2(C) & \Lambda_1(C)
\end{array}
\quad . \label{eq:fermionization-of-lattice-CFT}
\end{align}
% If we perform such orbifolding or fermionization twice,
% then we get back to the original theory.
We refer the reader to \cite{Kawabata:2023iss, Kawabata:2024gek} for the shift $\Z_2$ symmetries of more general lattice CFTs, and their orbifold and fermionization.

For example, in the case where $C$ is the binary Golay code,
we can see from Table \ref{tbl:lettice-constructions} that the bosonic CFT constructed from the $(A_1)^{24}$ lattice is mapped to the bosonic CFT of the Leech lattice $\Lambda_{24}$ by the orbifold,
and mapped to the fermionic CFT whose NS sector is described by the odd Leech lattice $O_{24}$ by the fermionization,
with respect to the shift $\Z_2$ symmetry.

Lattice CFTs constructed from more general codes over finite fields are investigated in \cite{Gaiotto:2018ypj, Yahagi:2022idq, Kawabata:2023nlt, Kawabata:2023iss, Kawabata:2024gek}.
Recently, the construction of CFTs from quantum codes through Lorentzian lattices was established in \cite{Dymarsky:2020bps, Dymarsky:2020qom},
and has been developed in many directions;
see for example \cite{Dymarsky:2020pzc, Dymarsky:2021xfc, Buican:2021uyp, Angelinos:2022umf, Dymarsky:2022kwb}, \cite{Henriksson:2022dnu}, \cite{Furuta:2022ykh, Furuta:2023xwl}, \cite{Buican:2023ehi}, \cite{Kawabata:2022jxt, Alam:2023qac, Kawabata:2023usr, Kawabata:2023iss, Ando:2024gcf}.
We also note that \cite{Harvey:2020jvu} established the relation between a certain quantum code and a particular K3 CFT studied in \cite{Gaberdiel:2013psa}.

%%%%%%%%%%%%%%%%%%%%%%%%%%%%%%
\section{Cocycle Factor and Automorphism of Lattice VOA}
\label{sec:cocycle-factor-and-aut-lattice-VOA}
%%%%%%%%%%%%%%%%%%%%%%%%%%%%%%
In this Section \ref{sec:cocycle-factor-and-aut-lattice-VOA},
we review that the automorphism group $\Aut(V_L)$ of a lattice VOA inherits the information of the lattice isometry group $O(L)$
in the form of its group extension.
This is because of the cocycle factors appearing in the OPE of vertex operators,
which disturb the group structure of $O(L)$ directly lifting up to that of $\Aut(V_L)$.

We introduce the concept of cocycle factor in the language of physics in Section \ref{subsec:cocycle-factor},
and explain the outline of the rest of this section with a simplified argument in Section \ref{subsec:brief-description}.
They are refined in a mathematical language in the following Sections \ref{subsec:review-of-FLM} and \ref{subsec:paraphrase-question}.
Looking into the details of the group extension,
we paraphrase our main Question raised in Section \ref{sec:intro}
into more concrete forms Question \ref{q:lift-group-hom}, \ref{q:ses-split}, and \ref{q:tangible-form-Z2}, step by step,
and slightly generalize it to Question \ref{q:tangible-form}.

%%%%%%%%%%%%%%%%%%%%%%%%%%%%%%
\subsection{Cocycle Factor}
\label{subsec:cocycle-factor}
%%%%%%%%%%%%%%%%%%%%%%%%%%%%%%
To realize the appropriate commutation relations of the vertex operators $V_k(z)\sim\ :e^{\sqrt{-1}k\cdot X(z)}:\,$, in accordance with whether $V_k(z)$ is bosonic or fermionic,
we have to introduce a correction factor $c_k(p)$ to modify the commutation relations of $:e^{\sqrt{-1}k\cdot X(z)}:$'s.
As a result, an additional factor $\varepsilon(k,k')$ called a \emph{cocycle factor} appears in the OPEs of the vertex operators.
In this Section \ref{subsec:cocycle-factor}, we will review this story in the language of physics.
A good reference for cocycle factors is \cite[\S\S6.4.4-6.4.5]{Green:2012oqa}, but it deals with only even lattices.
The cases including odd lattices are discussed in for example \cite[Appendix]{Goddard-Olive}, \cite[Appendix A]{Gaberdiel:2013psa}.

Let $L$ be an integral lattice of rank $n$.
We would like to construct a VOA satisfying
\begin{align}
V_k(z_1) \cdot V_{k'}(z_2) \sim (-1)^{|k|^2|k'|^2} V_{k'}(z_2) \cdot V_{k}(z_1) \sim \varepsilon(k,k') (z_1-z_2)^{k\cdot k'} V_{k+k'}(z_2),
\label{eq:V_k-OPE}
\end{align}
for some $\varepsilon:L\times L\to\{\pm1\}$,
where $k,k'\in L$ and we dropped $O((z_1-z_2)^{k\cdot k'+1})$ terms.
Since $V_k(z)$ is a vertex operator of conformal dimension $\frac{1}{2}|k|^2$,
it is bosonic when $k$ is even and fermionic when $k$ is odd,
which accounts for the factor $(-1)^{|k|^2|k'|^2}$ of the commutation relation.
It immediately follows from (\ref{eq:V_k-OPE}) that $\varepsilon$ must satisfy
\begin{align}
\varepsilon(k,k') = (-1)^{k\cdot k'+|k|^2|k'|^2}\varepsilon(k',k).
\label{eq:2-cocycle-symmetricity}
\end{align}

Recalling that the OPE of operators $:e^{\sqrt{-1}k\cdot X(z)}:$ is
\begin{align}
:e^{\sqrt{-1}k\cdot X(z_1)}: \cdot :e^{\sqrt{-1}k'\cdot X(z_2)}: & = (z_1-z_2)^{k\cdot k'} :e^{\sqrt{-1}k\cdot X(z_1)}e^{\sqrt{-1}k'\cdot X(z_2)}:,
\end{align}
we can observe
\begin{align}
\hspace{-3pt}\scalebox{0.8}{
$:e^{\sqrt{-1}k\cdot X(z_1)}: \cdot :e^{\sqrt{-1}k'\cdot X(z_2)}: \ = (-1)^{k\cdot k'} :e^{\sqrt{-1}k'\cdot X(z_2)}: \cdot :e^{\sqrt{-1}k\cdot X(z_1)}: \ \sim (z_1-z_2)^{k\cdot k'} :e^{\sqrt{-1}(k+k')\cdot X(z_2)}:$
},
\label{eq:e^ik-OPE}
\end{align}
where $O((z_1-z_2)^{k\cdot k'+1})$ terms are dropped.
This (\ref{eq:e^ik-OPE}) differs from the desired commutation relation (\ref{eq:V_k-OPE}) only in the sign.
To modify it, let us assume $V_k(z)$ to be in the form of
\begin{align}
V_k(z) = \ :e^{\sqrt{-1}k\cdot X(z)}:c_k(p),
\end{align}
where $c_k(p)$ is an operator in the form of a function of the momentum operators $p^i = \alpha^i_0$ ($i=0,\ldots, n-1$).
Then (\ref{eq:V_k-OPE}) translates to the condition on this correction factor $c_k(p)$ as
\begin{align}
c_k(p+k')c_{k'}(p) = (-1)^{k\cdot k'+|k|^2|k'|^2} c_{k'}(p+k)c_k(p) = \varepsilon(k,k')c_{k+k'}(p),
\label{eq:c_k(p)-OPE}
\end{align}
where we used $c_k(p) \, \cdot :e^{\sqrt{-1}k'\cdot X(z)}: \, = \, :e^{\sqrt{-1}k'\cdot X(z)}: \cdot \, c_k(p+k')$, because $[p^i, :e^{\sqrt{-1}k'\cdot X(z)}:] =$ $(k')^i:e^{\sqrt{-1}k'\cdot X(z)}:$.

In addition, if we impose the associativity on $c_k(p)$ as
\begin{align}
(c_k(p+k'+k'')c_{k'}(p+k''))c_{k''}(p) = c_k(p+k'+k'')(c_{k'}(p+k'')c_{k''}(p)),
\label{eq:associativity-of-c_k(p)}
\end{align}
then the factor $\varepsilon$ should satisfy the condition
\begin{align}
\varepsilon(k,k')\varepsilon(k+k',k'') = \varepsilon(k,k'+k'')\varepsilon(k',k'').
\label{eq:2-cocycle-condition}
\end{align}
This means that the factor $\varepsilon:L\times L\to\{\pm1\}$ is a 2-cocycle in the language of group cohomology
(where $\{\pm1\}$ is regarded as an $L$-module by the trivial $L$-action),
and hence $\varepsilon$ is called a \emph{cocycle factor}.
A 2-cocycle satisfying (\ref{eq:2-cocycle-symmetricity}) is unique up to coboundary;
see Theorem \ref{thm:cent-ext-and-comm-map}.
In the language of Section \ref{subsec:cent-ext-and-comm-map},
$\varepsilon$ is a 2-cocycle for the commutator map $c(k,k')=(-1)^{k\cdot k'+|k|^2|k'|^2}$.

\vspace{\vertspace}

To construct $c_k(p)$ satisfying (\ref{eq:c_k(p)-OPE})  and (\ref{eq:associativity-of-c_k(p)}),
we choose a basis $\{e_i\}_{i=0,\ldots,n-1}$ of $L$
and introduce a bilinear non-commutative product $\ast:L\times L\to\Z$ on $L$ by
\begin{align}
k \ast k' := \sum_{i>j} k^i k'^j (e_i\cdot e_j + |e_i|^2 |e_j|^2),
\end{align}
for $k=\sum_i k^i e_i$ and $k'=\sum_j k'^j e_j$.
In the language of Section \ref{subsec:cent-ext-and-comm-map},
this definition follows the construction (\ref{eq:cocycle-from-comm-map}) of a cocycle from the commutator map.
This product $\ast$ depends on the choice of the basis $\{e_i\}_i$.
If the lattice $L$ is even,
or if the lattice $L$ is odd and we choose the basis so that $e_0$ is odd and $e_1,\ldots,e_{n-1}$ are even,\footnote{
This is always possible because (odd vector) $-$ (odd vector) $=$ (even vector).
}
then this product reduces modulo 2 to
\begin{align}
k \ast k' \equiv \sum_{i>j} k^i k'^j e_i\cdot e_j \pmod{2}.
\label{eq:ast-prod-reduced}
\end{align}

We can now construct $c_k(p)$ satisfying (\ref{eq:c_k(p)-OPE}) and (\ref{eq:associativity-of-c_k(p)}) as
\begin{align}
c_k(p) = (-1)^{k \ast p}, \label{eq:c_k(p)}
\end{align}
and then $\varepsilon$ is given by
\begin{align}
& \varepsilon(k,k') = (-1)^{k \ast k'},
\label{eq:varepsilon}
\end{align}
which of course satisfies (\ref{eq:2-cocycle-symmetricity}) and (\ref{eq:2-cocycle-condition}).

\begin{proof}[Proof that $c_k(p)$ in (\ref{eq:c_k(p)}) satisfy (\ref{eq:c_k(p)-OPE}) and (\ref{eq:associativity-of-c_k(p)})]
For the first equation of (\ref{eq:c_k(p)-OPE}), it suffices to check
\begin{align}
(-1)^{k\ast k'}=(-1)^{k\cdot k'+|k|^2|k'|^2}(-1)^{k'\ast k}.
\end{align}
This is already discussed around (\ref{eq:cocycle-from-comm-map}),
but we can check it explicitly as follows:
\begin{align}
(-1)^{k\ast k'+k'\ast k} & = (-1)^{\sum_{i\neq j}(k^ik'^je_i\cdot e_j+|k^ie_i|^2|k'^je_j|^2)} \\
 & = (-1)^{\sum_{i,j}(k^ik'^je_i\cdot e_j+|k^ie_i|^2|k'^je_j|^2)} = (-1)^{k\cdot k' + |k|^2|k'|^2},
\end{align}
where the first equation follows from
$k^i \equiv (k^i)^2 \mod 2$,
the second equation follows from
$k^ik'^i|e_i|^2 + (k^ik'^i|e_i|^2)^2 \equiv 0 \mod 2$,
and the last equation follows from
$\sum_i|k^ie_i|^2 \equiv |k|^2 \mod 2$.

The second equation of (\ref{eq:c_k(p)-OPE}), and (\ref{eq:associativity-of-c_k(p)}), are obvious.
\end{proof}

%%%%%%%%%%%%%%%%%%%%%%%%%%%%%%
\subsection{Brief Description of This Section}
\label{subsec:brief-description}
%%%%%%%%%%%%%%%%%%%%%%%%%%%%%%
In this Section \ref{subsec:brief-description}, we will outline what we are doing in the following Sections \ref{subsec:review-of-FLM}-\ref{subsec:paraphrase-question} with a simplified argument.
The discussions here will be refined in a mathematical language in the following Sections \ref{subsec:review-of-FLM} and \ref{subsec:paraphrase-question}.

Let us assume that an isometry $g$ of the lattice $L$ can be lifted to an automorphism of the lattice VOA $V_L$, which consists of $V_k(z)$'s in (\ref{eq:V_k-OPE}), as
\begin{align}
g(V_k(z)) = \zeta_g(k) V_{g(k)}(z),
\end{align}
where $\zeta_g(k)$ is some factor taking values in $\Z_2=\{\pm1\}$, or more generally, in $U(1)$.
Here, we abused notation and also wrote the automorphism of $V_L$ lifted from the isometry $g$ of $L$ as $g$.
Since $g$ is an automorphism, it must preserve the multiplication (\ref{eq:V_k-OPE}), and hence $\zeta_g$ must satisfy
\begin{align}
\varepsilon(k,k') \zeta_g(k+k') = \zeta_g(k) \zeta_g(k') \varepsilon(g(k),g(k')).
\label{eq:zeta-vs-varepsilon-easy}
\end{align}

If we specify the values of $\zeta_g(e_0),\ldots,\zeta_g(e_{n-1})$ for a basis $\{e_i\}_{i=0,\ldots,n-1}$ of $L$,
then (\ref{eq:zeta-vs-varepsilon-easy}) recursively determines all the values $\zeta_g(k)$ for $k\in L$.
The resulting $\zeta_g$ is well-defined, because
\begin{align}
\zeta_g((k+k')+k'') & = \zeta_g(k+k')\zeta_g(k'')\frac{\varepsilon(g(k+k'),g(k''))}{\varepsilon(k+k',k'')} \label{eq:recursive-well-defined-1}\\
& = \zeta_g(k)\zeta_g(k')\zeta_g(k'')\frac{\varepsilon(g(k),g(k'))}{\varepsilon(k,k')}\frac{\varepsilon(g(k+k'),g(k''))}{\varepsilon(k+k',k'')}
\end{align}
and
\begin{align}
\zeta_g(k+(k'+k'')) & = \zeta_g(k)\zeta_g(k'+k'')\frac{\varepsilon(g(k),g(k'+k''))}{\varepsilon(k,k'+k'')}\\
& = \zeta_g(k)\zeta_g(k')\zeta_g(k'')\frac{\varepsilon(g(k),g(k'+k''))}{\varepsilon(k,k'+k'')}\frac{\varepsilon(g(k'),g(k''))}{\varepsilon(k',k'')}
\label{eq:recursive-well-defined-4}
\end{align}
are equal since $\varepsilon$ is a 2-cocycle (\ref{eq:2-cocycle-condition}).
Conversely, any $\zeta_g$ determined in such a way by (\ref{eq:zeta-vs-varepsilon-easy}) can define an automorphism of $V_L$.
Therefore, there are $\Z_2^n$, or more generally $U(1)^n$, degrees of freedom in the way of lifting a lattice isometry $g$ to an automorphism of the lattice VOA.

\begin{rem*}
Here is another derivation of $\Z_2^n$.
If both $\zeta_g$ and $\zeta'_g$ satisfy (\ref{eq:zeta-vs-varepsilon-easy}),
then their difference $\eta(k)=\zeta'_g(k)\zeta_g(k)^{-1}$ turns out to be just a linear map in $\Hom(L,\Z_2)$:
\begin{align}
\eta(k+k') = \eta(k)\eta(k').
\end{align}
Therefore, the degree of freedom in choosing $\zeta_g$ is $\Hom(L,\Z_2)\cong\Z_2^n$.
\textit{(Remark ends.)}
\end{rem*}

Our main Question raised in Section \ref{sec:intro} was whether we can lift each isometry of $L$ to an automorphism of $V_L$, while preserving the whole group structure of the isometry group $O(L)$.
If this is the case, we should have
\begin{align}
g_2(g_1(V_k(z)) = (g_2g_1)(V_k(z)) \quad \text{for any $g_1,g_2\in O(L)$},
\end{align}
and hence
\begin{align}
\zeta_{g_2}(g_1(k))\zeta_{g_1}(k) = \zeta_{g_2g_1}(k) \quad \text{for any $g_1,g_2\in O(L)$}.
\label{eq:grp-hom-condition-on-zeta-easy}
\end{align}

Therefore, our Question can be restated as follows.
``Can we find a collection $\{\zeta_g:L\to\Z_2 \text{\ (or $U(1)$)} \mid \text{$\zeta_g$ satisfies (\ref{eq:zeta-vs-varepsilon-easy})}\}_{g\in O(L)}$ satisfying the condition (\ref{eq:grp-hom-condition-on-zeta-easy})?
Here, we can adjust the values of $\zeta_g(e_0),\ldots,\zeta_g(e_{n-1})\in\Z_2\text{\ (or $U(1)$)}$ for each $g\in O(L)$, from which $\zeta_g$ is completely determined by (\ref{eq:zeta-vs-varepsilon-easy}).''
This is the paraphrased Question \ref{q:tangible-form-Z2} or \ref{q:tangible-form}
in the following Sections \ref{subsec:review-of-FLM} and \ref{subsec:paraphrase-question}.

%%%%%%%%%%%%%%%%%%%%%%%%%%%%%%
\subsection{Lift of Lattice Isometries to Automorphisms of Lattice VOA: Part 1}
\label{subsec:review-of-FLM}
%%%%%%%%%%%%%%%%%%%%%%%%%%%%%%
An isometry of the lattice $L$ can be lifted to an automorphism of the lattice VOA $V_L$ \cite{MR0996026};
our Question was whether such lifts preserve the group structure or not.
That is,

\begin{question}
\label{q:lift-group-hom}
Can we lift the isometry group $O(L)$ of a lattice $L$ to a subgroup of the automorphism group $\Aut(V_L)$ of the lattice VOA $V_L$?
\end{question}

To begin with, we review some general theory on the lift from \cite{MR0996026} in this Section \ref{subsec:review-of-FLM}.
See Appendix \ref{sec:group-extension} for some basic facts on group extensions, although this section is intended to be self-contained.
We also note that we will postpone reviewing the definition of the lattice VOA $V_L$ itself until a later Section \ref{subsec:autom-of-lattice-VOA};
for a while, it suffices to consider it to be some algebra consisting of $V_k(z)$'s in (\ref{eq:V_k-OPE}).

\vspace{\vertspace}

Let $\hat{L}$ be the central extension\footnote{
\label{fn:more-general-extension}
More generally, \cite{MR0996026} deals with the central extension $\hat{L}$ of $L$ by $\Zgen{s}{\kappa}=\langle\kappa\mid\kappa^s=1\rangle$.
Furthermore, we would like to consider the central extension by $U(1) \cong \R/2\Z = \langle\kappa\rangle_\R / \langle \kappa \rangle_{2\Z}$ later.
Therefore, the discussions below avoid using special properties for $\Z_2$, such as $\hat{\varepsilon}(k,k')^{-1}=\hat{\varepsilon}(k,k')$.
} 
of the free abelian group $L$ by $\Zgen{2}{\kappa}=\langle\kappa\mid\kappa^2=1\rangle$
\begin{align}
1 \to \Zgen{2}{\kappa} \to \hat{L} \overset{\bar{}}{\to} L \to 0,
\label{eq:lift-of-L}
\end{align}
specified by a 2-cocycle $\hat{\varepsilon}:L\times L\to\Zgen{2}{\kappa}$.
In other words, $\hat{L}$ is $\Zgen{2}{\kappa}\times L$ as a set, and if we write its element as $\kappa^me^k$ $(\kappa^m\in\Zgen{2}{\kappa}, k\in L)$, then $\hat{L}$ is a group specified by the multiplication
\begin{align}
\kappa^m e^k \cdot \kappa^{m'} e^{k'} = \hat{\varepsilon}(k,k') \kappa^{m+m'} e^{k+k'}.
\end{align}
So, $V_k(z)$ and $\varepsilon$ in (\ref{eq:V_k-OPE}) correspond to $e^k\in\hat{L}$ and $\hat{\varepsilon}$ with $\kappa=e^{\sqrt{-1}\pi}$ here, respectively.

Here, it follows from (\ref{eq:2-cocycle-condition}) that any 2-cocycle $\hat{\varepsilon}:L\times L\to\Zgen{2}{\kappa}$ satisfies
\begin{align}
\hat{\varepsilon}(k,0) = \hat{\varepsilon}(0,k) = \hat{\varepsilon}(0,0) \quad \text{for any $k\in L$}.
\label{eq:2-cocycle-for-0}
\end{align}
Furthermore, it is known that there exists a 2-cocycle $\hat{\varepsilon}$ satisfying the normalization condition
\begin{align}
\hat{\varepsilon}(0,0) = \kappa^0,
\label{eq:normalized-cocycle-lattice}
\end{align}
in any cohomology class in $H^2(L,\Zgen{2}{\kappa})$,
and it defines an equivalent extension $\hat{L}$ to any 2-cocycle in the same cohomology class (see the last paragraph of Section \ref{subsec:grp-ext-and-grp-coh}).
Therefore, we will always assume that $\hat{\varepsilon}$ is a normalized one as in (\ref{eq:normalized-cocycle-lattice}), without loss of generality.
Also note that there is no problem in applying the general theory here to the specific cocycle $\varepsilon$ in (\ref{eq:varepsilon}), because it also satisfies the normalization $\varepsilon(0,0)=1$.
Then, we can observe that the multiplication of $\hat{L}$ is well-behaved in the sense that
\begin{align}
& \kappa^me^0 \cdot \kappa^{m'}e^k = \kappa^{m+m'}e^k,\\
& \kappa^me^k \cdot \kappa^{m'}e^0 = \kappa^{m+m'}e^k,
\end{align}
and hence there is no confusion if we just write $\kappa^m$ instead of $\kappa^me^0$.
We will also just write $e^k$ instead of $\kappa^0e^k$.

We define the \emph{commutator map} $\hat{c}:L\times L\to\Zgen{2}{\kappa}$ by
\begin{align}
\hat{c}(k,k') := \hat{\varepsilon}(k,k') \hat{\varepsilon}(k',k)^{-1}.
\label{eq:commutator-vs-2-cocycle}
\end{align}
If $\hat{\varepsilon}$ here is the specific one $\varepsilon$ in (\ref{eq:V_k-OPE}), we have $\hat{c}(k,k')=\kappa^{k\cdot k'+|k|^2|k'|^2}\mid_{\kappa=-1}$ from (\ref{eq:2-cocycle-symmetricity}).

\vspace{\vertspace}

Forgetting the symmetric bilinear form on $L$, for a moment, we focus on the automorphism group $\Aut(L)$ of just a free abelian group $L$, instead of the isometry group $O(L)$ of the lattice $L$.
The proposition \cite[Prop.\ 5.4.1]{MR0996026} states that
\begin{align}
1 \to \Hom(L,\Zgen{2}{\kappa}) \overset{\widetilde{}}{\to} \Aut(\hat{L},\kappa) \overset{\bar{}}{\to} \Aut(L,\hat{c}) \to 1
\label{eq:lift-of-Aut(L,c)}
\end{align}
is exact.
(The proof is also reviewed in Appendix \ref{subsec:thm-on-aut-grp-of-central-ext}.)
The details of (\ref{eq:lift-of-Aut(L,c)}) are as follows.
$\Hom(L,\Zgen{2}{\kappa}) \tilde{\to} \Aut(\hat{L},\kappa)$ maps
$\eta\in\Hom(L,\Zgen{2}{\kappa})$ to
\begin{align}
\tilde{\eta}: \hspace{20pt} \hat{L} & \to \hat{L} \label{eq:def-of-tilde-pre}\\
\kappa^me^k & \mapsto \eta(k)\kappa^{m}e^k. \label{eq:def-of-tilde}
\end{align}
Note that $\eta\in\Hom(L,\Z_2)$ is determined only from the values of $\eta(e_0), \ldots, \eta(e_{n-1})$,
where $e_0,\ldots,e_{n-1}$ is a basis of $L$,
and hence
\begin{align}
\Hom(L,\Z_2) \cong (\Z_2)^n.
\end{align}
$\Aut(\hat{L},\kappa)$ is defined as
\begin{align}
\Aut(\hat{L},\kappa) := \{f\in\Aut(\hat{L}) \mid f(\kappa) = \kappa\},
\end{align}
and $\Aut(L,\hat{c})$ is defined as
\begin{align}
\Aut(L,\hat{c}) := \{g \in \Aut(L) \mid \hat{c}(g(k), g(k')) = \hat{c}(k, k')\}.
\label{eq:def-of-Aut(L,c)}
\end{align}
$\Aut(\hat{L},\kappa) \bar{\to} \Aut(L,\hat{c})$ maps $f\in\Aut(\hat{L})$ to
\begin{align}
\bar{f} : L & \to L\\
k & \mapsto \overline{f(e^k)}, \label{eq:def-of-barf}
\end{align}
where the natural projection $\hat{L}\bar{\to} L$ of (\ref{eq:lift-of-L}) is used.

Since $1$ and $\kappa$ are the only elements of finite order in $\hat{L}$,
it follows that any $f\in\Aut(\hat{L})$ satisfies $f(\kappa)=\kappa$,
and hence $\Aut(\hat{L},\kappa)$ reduces to $\Aut(\hat{L})$ in the case at hand.
However, this is the special property for the extension by $\Z_{s=2}$.
If we consider a more general extension, say by $\Z_{s>2}$,
then we cannot reduce $\Aut(\hat{L},\kappa)$ to $\Aut(\hat{L})$
(see footnote \ref{fn:more-general-extension}).

\vspace{\vertspace}

Let us recall that $L$ is a lattice, more than just a free abelian group, and move on to the isometry group $O(L)$ of the lattice from the automorphism group $\Aut(L)$ of the free abelian group.
If the commutator map $\hat{c}$ depends only on the bilinear form of the lattice $L$, say $\hat{c}(k,k')=\kappa^{k\cdot k'+|k|^2|k'|^2}$,
then the isometry group $O(L)$ is a subgroup of $\Aut(L,\hat{c})$ defined in (\ref{eq:def-of-Aut(L,c)}).
In addition, if we define
\begin{align}
O(\hat{L}) := \{f\in\Aut(\hat{L},\kappa) \mid \bar{f}\in O(L)\},
\label{eq:def-of-O(Lhat)}
\end{align}
then we obtain the exact sequence \cite[Prop.\ 6.4.1]{MR0996026}
\begin{align}
1 \to \Hom(L,\Zgen{2}{\kappa}) \overset{\tilde{}}\to O(\hat{L}) \overset{\bar{}}{\to} O(L) \to 1
\label{eq:lift-of-O(L)}
\end{align}
from (\ref{eq:lift-of-Aut(L,c)}).

As we will review in Section \ref{subsec:autom-of-lattice-VOA},
any element of $O(\hat{L})$ can be naturally extended to an automorphism of the lattice VOA $V_L$,
and then $O(\hat{L})$ can be regarded as a subgroup of $\Aut(V_L)$.
The whole $\Aut(V_L)$ is determined in \cite{MR1745258}.

Now, Question \ref{q:lift-group-hom} is reduced to how $O(L)$ is lifted into $O(\hat{L})$ in (\ref{eq:lift-of-O(L)}).
More precisely,

\begin{question}
\label{q:ses-split}
Does the exact sequence (\ref{eq:lift-of-O(L)}) have a section $O(L)\to O(\hat{L})$ such that it is a group homomorphism?
\end{question}
\noindent If the answer is yes, we say that the exact sequence (\ref{eq:lift-of-O(L)}) splits,
and $O(\hat{L})$ is a semidirect product of $\Hom(L,\Zgen{2}{\kappa})$ and $O(L)$.
Otherwise, $O(\hat{L})$ is some non-split extension of $O(L)$,
and $O(L)$ does not lift to a subgroup of $O(\hat{L})$ or $\Aut(V_L)$.

\vspace{\vertspace}

\begin{rem*}
For any $g\in O(L)$, it is known that there exists a lift $\hat{g}\in O(\hat{L})$ of $g$ such that $\hat{g}(e^k)=e^k$ for any $k\in L$ fixed by $g$ as $g(k)=k$ \cite[\S5]{MR820716}.
Such lift $\hat{g}$ is called the \emph{standard lift} of $g$,
and is sometimes of use in research.
See for example \cite[Lemma 12.1]{MR1172696}, \cite[\S5.3]{Moller:2016wzp}, \cite[\S7]{MR4058176} for some properties of the standard lift.
\textit{(Remark ends.)}
\end{rem*}

%%%%%%%%%%%%%%%%%%%%%%%%%%%%%%
\subsection{Making the Question More Concrete}
\label{subsec:paraphrase-question}
%%%%%%%%%%%%%%%%%%%%%%%%%%%%%%
We will paraphrase Question \ref{q:ses-split} into a more concrete form.

Let $S:O(L)\to O(\hat{L})\ ;g\mapsto S_g$ be a section of (\ref{eq:lift-of-O(L)}).
Then for each $g\in O(L)$, we can define $\zeta_g:L\to \Zgen{2}{\kappa}$ by
\begin{align}
S_g(e^k) = \zeta_g(k)e^{g(k)},
\label{eq:def-of-zeta}
\end{align}
because $\overline{S_g(e^k)}=g(k)$.
Since $S_g$ is an automorphism of $\hat{L}$, it must satisfy
\begin{align}
S_g(e^k \cdot e^{k'}) = S_g(e^k) \cdot S_g(e^{k'}),
\end{align}
which translates to the condition on $\zeta_g$ as
\begin{align}
\hat{\varepsilon}(k,k')\zeta_g(k+k') = \zeta_g(k)\zeta_g(k')\hat{\varepsilon}(g(k),g(k')),
\label{eq:zeta-vs-varepsilon}
\end{align}
where we also used $S_g\in O(\hat{L})$ satisfies $S_g(\kappa)=\kappa$.

\begin{rem*}
The equation (\ref{eq:zeta-vs-varepsilon}) is equivalent to $\hat{\varepsilon}(k,k')=\hat{\varepsilon}(g(k),g(k'))d\zeta_g(k,k')$,
which means that the cocycles $\hat{\varepsilon}({-},{-})$ and $\hat{\varepsilon}(g({-}),g({-}))$ are in the same cohomology class.
\textit{(Remark ends.)}
\end{rem*}

Take another section $S':O(L)\to O(\hat{L})$ of (\ref{eq:lift-of-O(L)})
and define $\zeta'_g:L\to\Zgen{2}{\kappa}$ in the same way.
Since $S'_g(S_g)^{-1}$ is in the kernel of $O(\hat{L})\bar{\to}O(L)$ in (\ref{eq:lift-of-O(L)}),
there exists $\eta_g\in\Hom(L,\Zgen{2}{\kappa})$ satisfying
\begin{align}
\tilde{\eta}_g = S'_g(S_g)^{-1},
\label{eq:DoF-in-S_g}
\end{align}
which shows that the degree of freedom in taking the value $S_g$ of the section $S:O(L)\to O(\hat{L})$ is the linear map $\eta_g\in\Hom(L,\Zgen{2}{\kappa})$ for each $g\in O(L)$.
In terms of the factors $\zeta_g$ and $\zeta_g':L\to\Z_2$,
(\ref{eq:DoF-in-S_g}) reduces to
\begin{align}
\eta_g = \zeta'_g(\zeta_g)^{-1},
\end{align}
where the degree of freedom in taking the value $S_g$ is just parametrized by $\zeta_g$.

We can rephrase this degree of freedom $\Hom(L,\Zgen{2}{\kappa}) \cong (\Z_2)^n$ in terms of the factor $\zeta_g:L\to\Z_2$ in the following way.
If we define the values of $\zeta_g(e_0), \ldots, \zeta_g(e_{n-1})\in\Zgen{2}{\kappa}$, where $e_0,\ldots,e_{n-1}$ is a basis of $L$,
then the condition (\ref{eq:zeta-vs-varepsilon}) recursively determines all the values of $\zeta_g(k)$ for $k\in L$.
This recursive way to determine $\zeta_g$ is well-defined thanks to the fact that $\hat{\varepsilon}$ is a 2-cocycle, as we saw in (\ref{eq:recursive-well-defined-1})--(\ref{eq:recursive-well-defined-4}).
Therefore, the value $S_g\in O(\hat{L})$ is parametrized by only $\zeta_g(e_0), \ldots, \zeta_g(e_{n-1})$, and hence the parameter space is $(\Z_2)^n$.

\vspace{\vertspace}

If there exists a section $S:O(L)\to O(\hat{L})$ such that it is a group homomorphism, as in Question \ref{q:ses-split},
then it should satisfy
\begin{align}
S_{g_2} \circ S_{g_1} = S_{g_2g_1} \quad \text{for any $g_1,g_2\in O(L)$}.
\end{align}
This translates to the condition on $\zeta_g$ as
\begin{align}
\zeta_{g_2}(g_1(k))\zeta_{g_1}(k) = \zeta_{g_2g_1}(k) \quad \text{for any $g_1,g_2\in O(L)$}.
\label{eq:grp-hom-condition-on-zeta}
\end{align}
Therefore, Question \ref{q:ses-split} can be paraphrased as

\begin{question}
\label{q:tangible-form-Z2}
Can we find a collection $\{\zeta_g:L\to\Zgen{2}{\kappa} \mid \text{$\zeta_g$ satisfies (\ref{eq:zeta-vs-varepsilon})}\}_{g\in O(L)}$ satisfying the condition (\ref{eq:grp-hom-condition-on-zeta}),
by adjusting each $\zeta_g$ within the degree of freedom $\Hom(L,\Zgen{2}{\kappa})$?
More specifically, we are allowed to adjust the values of $\zeta_g(e_0),\ldots,\zeta_g(e_{n-1})\in\Zgen{2}{\kappa}$ for each $g\in O(L)$, from which $\zeta_g$ is completely determined by (\ref{eq:zeta-vs-varepsilon}).
\end{question}

\begin{rem*}
Note that even if we take another cocycle $\hat{\varepsilon}'$ cohomologous to $\hat{\varepsilon}$, in the equation (\ref{eq:zeta-vs-varepsilon}) from which $\zeta_g$ is determined,
the answer to Question \ref{q:tangible-form-Z2} does not change.
This is just because cocycles in the same cohomology class define equivalent extensions $\hat{L}$,
but we can also check it explicitly as follows.
If $\zeta_g$ satisfies (\ref{eq:zeta-vs-varepsilon}),
then $\zeta'_g$ satisfying (\ref{eq:zeta-vs-varepsilon}) for $\hat{\varepsilon}'= \hat{\varepsilon}d\delta$,
where $d\delta(k,k')=\delta(k')\delta(k+k')^{-1}\delta(k)$ is a coboundary,
can be constructed as
\begin{align}
\zeta'_g(k) = \zeta_g(k)\delta(k)\delta(g(k))^{-1}.
\label{eq:zeta-for-cohomologous-cocycle}
\end{align}
In addition, if $\{\zeta_g \mid (\ref{eq:zeta-vs-varepsilon})\}_g$ satisfies (\ref{eq:grp-hom-condition-on-zeta}), then it is easy to check that $\{\zeta'_g\}_g$ constructed by (\ref{eq:zeta-for-cohomologous-cocycle}) also satisfies (\ref{eq:grp-hom-condition-on-zeta}).
\textit{(Remark ends.)}
\end{rem*}

\vspace{\vertspace}

Finally, we mention one slight generalization of the question.
We can generalize the discussions so far
from $\Zgen{2}{\kappa} = \langle \kappa \mid \kappa^2 = 1 \rangle$
to $U(1) \cong \R/2\Z = \{\kappa^r \mid r\in\R \} / \langle \kappa^2 \rangle_\Z$ (see footnote \ref{fn:more-general-extension}).
As a result, we may consider another question

\begin{questionprime}{q:tangible-form-Z2}
\label{q:tangible-form}
Can we find a collection $\{\zeta_g:L\to U(1) \mid \text{$\zeta_g$ satisfies (\ref{eq:zeta-vs-varepsilon})}\}_{g\in O(L)}$ satisfying the condition (\ref{eq:grp-hom-condition-on-zeta}),
by adjusting each $\zeta_g$ within the degree of freedom $\Hom(L,U(1))$?
More specifically, We are allowed to adjust the values of $\zeta_g(e_0),\ldots,\zeta_g(e_{n-1})\in U(1)$ for each $g\in O(L)$, from which $\zeta_g$ is completely determined by (\ref{eq:zeta-vs-varepsilon}).
\end{questionprime}

Of course, if the answer to Question \ref{q:tangible-form-Z2} is yes,
then so is the answer to Question \ref{q:tangible-form},
and if the answer to Question \ref{q:tangible-form} is no,
then so is that to Question \ref{q:tangible-form-Z2}.

%%%%%%%%%%%%%%%%%%%%%%%%%%%%%%
\section{Determine Whether the Lift Preserves Group Structures or Not}
\label{sec:answer-to-the-question}
%%%%%%%%%%%%%%%%%%%%%%%%%%%%%%
In this Section \ref{sec:answer-to-the-question}, we negatively settle Question \ref{q:tangible-form} in the cases of the odd Leech lattice and the Leech lattice.

%%%%%%%%%%%%%%%%%%%%%%%%%%%%%%
\subsection{Closed Form of the Factor \texorpdfstring{$\zeta_g$}{zeta g}}
\label{subsec:closed-form-of-zeta}
%%%%%%%%%%%%%%%%%%%%%%%%%%%%%%
Although we have pointed out that the relation (\ref{eq:zeta-vs-varepsilon}) determines the value of $\zeta_g(k)$ for any $k\in L$ recursively from the values of $\zeta_g(e_0),\ldots,\zeta_g(e_{n-1})$,
it would be convenient if we have a closed form of $\zeta_g(k)$.
It can be done at least in the case where the cocycle $\hat{\varepsilon}$ is the specific one $\varepsilon$ in (\ref{eq:varepsilon}), and the lattice $L$ satisfies the following assumption.

\begin{prop}
\label{prop:closed-form-of-zeta}
Let $L$ be an integral lattice of rank $n$.
Assume that we can take an integral basis $e_0,\ldots,e_{n-1}$ of $L$ such that,\footnote{
The existence of a basis $e_0,\ldots,e_{n-1}$ of a lattice $L$ satisfying $|e_i|^2\in 4\Z$ for any $i=0,\ldots,n-1$ does not imply that $|k|^2\in 4\Z$ for any $k\in L$.
In fact, we can take such a basis of the Leech lattice,
as we can see from the symmetric bilinear form (\ref{eq:sym-bilin-form-of-Leech}),
but the Leech lattice has vectors whose squared lengths are not multiples of 4.
}
if $L$ is even, $|e_i|^2\in 4\Z$ for any $i=0,\ldots,n-1$,
and if $L$ is odd, $e_0$ is odd and $|e_i|^2\in 4\Z$ for any $i=1,\ldots,n-1$.
If the 2-cocycle $\hat{\varepsilon}:L\times L\to U(1)$ in (\ref{eq:zeta-vs-varepsilon}) (with $\kappa=e^{\sqrt{-1}\pi}$) is the one $\varepsilon$ in (\ref{eq:varepsilon})
\begin{align}
\varepsilon(k,k') = (-1)^{k \ast k'}
\label{eq:varepsilon-again}
\end{align}
where the product $\ast$ is specified by the above choice of a basis,
then the equation (\ref{eq:zeta-vs-varepsilon}) determines the value of $\zeta_g(k)$ for $g\in O(L)$ and $k=\sum_ik^ie_i\in L$ as
\begin{align}
\zeta_g(k) = \prod_{i=0}^{n-1}\zeta_g(e_i)^{k^i} \cdot \prod_{\substack{ i'<j' \\ i',j' \in \{i \mid \text{$k^i$ is odd}\} }}\varepsilon(g(e_{i'}),g(e_{j'})).
\label{eq:closed-form-of-zeta}
\end{align}
\end{prop}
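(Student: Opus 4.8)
The plan is to prove the formula by an explicit telescoping induction, building up $k=\sum_i k^i e_i$ from the basis one coordinate at a time and repeatedly applying the recursion (\ref{eq:zeta-vs-varepsilon}). The starting observation is that, since the product $\ast$ is $\Z$-bilinear in its two arguments, the cocycle $\varepsilon(k,k')=(-1)^{k\ast k'}$ is \emph{bimultiplicative}: $\varepsilon(k_1+k_2,k')=\varepsilon(k_1,k')\varepsilon(k_2,k')$ and likewise in the second slot, so that $\varepsilon(\sum_p a^p e_p,\sum_q b^q e_q)=\prod_{p,q}\varepsilon(e_p,e_q)^{a^p b^q}$ for integers $a^p,b^q$. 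Because $g$ is linear, the same expansion applies to $\varepsilon(g(k),g(k'))$ with $e_p,e_q$ replaced by $g(e_p),g(e_q)$. I would also record two facts read off from the definition of $\ast$ and the reduction (\ref{eq:ast-prod-reduced}): first, $\varepsilon(e_p,e_q)=1$ whenever $p\le q$, since the surviving term in the defining sum over $i>j$ would need $i=p>j=q$; and second, $\varepsilon(e_p,e_q)=(-1)^{e_p\cdot e_q}$ for $p>q$.

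The crucial input, and the place where the hypothesis $|e_i|^2\in4\Z$ is genuinely used, is the claim that $\varepsilon(g(e_i),g(e_i))=1$ for every $i$. Writing $v=g(e_i)$ and using (\ref{eq:ast-prod-reduced}) one has $v\ast v\equiv\sum_{a>b}v^a v^b\, e_a\cdot e_b\equiv\tfrac12\bigl(|v|^2-\sum_a (v^a)^2|e_a|^2\bigr)\pmod 2$. I would then show the right-hand side vanishes mod $2$ by a short congruence mod $4$: since $g$ is an isometry, $|v|^2=|e_i|^2$, and the only basis vector of odd square length is $e_0$ in the odd case, so the parity of $v^0$ is forced by that of $|v|^2$, and one checks in each case that $\sum_a (v^a)^2|e_a|^2\equiv|v|^2\pmod 4$, whence $|v|^2-\sum_a(v^a)^2|e_a|^2\in4\Z$ and its half is even. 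This is exactly where $4\Z$, rather than merely ``even'', is needed, and the footnote's warning that such a basis need not make every lattice vector have square length in $4\Z$ is what makes the claim nontrivial. Granting it, $\varepsilon(g(e_i),g(e_i))=1$, and feeding $k=m\,e_i$, $k'=e_i$ into (\ref{eq:zeta-vs-varepsilon}) shows by induction that $\zeta_g(m\,e_i)=\zeta_g(e_i)^{m}$ for all $m\in\Z$ (with $\zeta_g(0)=1$).

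With these preliminaries the telescoping is mechanical. I would set $s_m=\sum_{i<m}k^i e_i$, so $s_0=0$ and $s_n=k$, and apply (\ref{eq:zeta-vs-varepsilon}) to $s_{m+1}=s_m+k^m e_m$ to obtain $\zeta_g(s_{m+1})=\zeta_g(s_m)\,\zeta_g(k^m e_m)\,\varepsilon(g(s_m),g(k^m e_m))\,\varepsilon(s_m,k^m e_m)^{-1}$. Bimultiplicativity evaluates the correction factor as $\prod_{p<m}\varepsilon(g(e_p),g(e_m))^{k^p k^m}$, the denominator being trivial because $\varepsilon(e_p,e_m)=1$ for $p<m$. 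Multiplying over $m$ and inserting $\zeta_g(k^m e_m)=\zeta_g(e_m)^{k^m}$ yields $\zeta_g(k)=\prod_m\zeta_g(e_m)^{k^m}\cdot\prod_{p<m}\varepsilon(g(e_p),g(e_m))^{k^p k^m}$. Finally, since each $\varepsilon(\cdots)\in\{\pm1\}$, the exponent $k^p k^m$ only matters mod $2$, so the second product collapses to a product over pairs $p<m$ with both $k^p$ and $k^m$ odd, which is precisely (\ref{eq:closed-form-of-zeta}). I expect the main obstacle to be the diagonal lemma $\varepsilon(g(e_i),g(e_i))=1$ of the middle paragraph: the bimultiplicativity and the telescoping are routine bookkeeping, whereas the vanishing of $v\ast v$ mod $2$ is what the $4\Z$ hypothesis secures and requires the careful mod-$4$ case analysis, including the distinguished behaviour of the odd basis vector $e_0$.
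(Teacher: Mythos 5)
Your proposal is correct and takes essentially the same route as the paper: your diagonal lemma $\varepsilon(g(e_i),g(e_i))=1$ is the paper's Lemma \ref{lemma:property-of-varep-and-zeta} (\ref{lemma:property-of-varep-and-zeta-3}) specialized to basis vectors, proved by the same mod-4 congruence that is the sole place the $|e_i|^2\in 4\Z$ hypothesis enters, and your telescoping using $\varepsilon(e_p,e_q)=1$ for $p\le q$ mirrors the paper's final computation. The only divergence is bookkeeping — the paper first strips off the even parts $2\lfloor k^i/2\rfloor e_i$ via Lemma \ref{lemma:property-of-varep-and-zeta} (\ref{lemma:property-of-varep-and-zeta-2}), (\ref{lemma:property-of-varep-and-zeta-4}) and telescopes only the odd-coefficient basis vectors, whereas you telescope all coordinates at once and reduce the exponents $k^p k^m$ mod 2 at the end, landing on the same formula (\ref{eq:closed-form-of-zeta}).
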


Before getting into the proof of this Proposition \ref{prop:closed-form-of-zeta}, we first recall that the bilinear non-commutative product $\ast$ used in (\ref{eq:varepsilon-again}) can be reduced modulo 2 to (\ref{eq:ast-prod-reduced})
\begin{align}
k \ast k' \equiv \sum_{i>j} k^i k'^j e_i\cdot e_j \pmod{2}
\label{eq:ast-prod-reduced-again}
\end{align}
under the choice of the basis $e_0,\ldots,e_{n-1}$ in Proposition \ref{prop:closed-form-of-zeta}.

We also prepare the following lemma.

\begin{lemma}
\label{lemma:property-of-varep-and-zeta}
If the 2-cocycle $\hat{\varepsilon}:L\times L\to U(1)$ in (\ref{eq:zeta-vs-varepsilon}) is the one $\varepsilon$ in (\ref{eq:varepsilon-again}) where the product $\ast$ is specified by the choice of the basis $e_0,\ldots,e_{n-1}$ in Proposition \ref{prop:closed-form-of-zeta},
then we have
\begin{enumerate}
\item $\varepsilon(2k,k') = 1$ for any $k,k'\in L$.
\label{lemma:property-of-varep-and-zeta-1}
\item $\zeta_g(2k+k') = \zeta_g(2k)\zeta_g(k')$ for any $k,k'\in L$ and $g\in O(L)$.
\label{lemma:property-of-varep-and-zeta-2}
\item $\varepsilon(k,k) = \varepsilon(g(k),g(k))$ for any $k\in L$ and $g\in O(L)$.\\
(More generally, we can show $\varepsilon(k,k) = \varepsilon(k',k')$ for $k,k'\in L$ satisfying $|k|^2 \equiv |k'|^2 \mod 4$.)
\label{lemma:property-of-varep-and-zeta-3}
\item $\zeta_g((t_1+t_2)k) = \zeta_g(t_1k)\zeta_g(t_2k)$ for any $t_1,t_2\in\Z$, $k\in L$, and $g\in O(L)$.
\label{lemma:property-of-varep-and-zeta-4}
\end{enumerate}
\end{lemma}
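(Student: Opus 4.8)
The plan is to obtain parts \ref{lemma:property-of-varep-and-zeta-1}, \ref{lemma:property-of-varep-and-zeta-2} and \ref{lemma:property-of-varep-and-zeta-4} by specializing the defining relation (\ref{eq:zeta-vs-varepsilon}) to well-chosen inputs, using the bilinearity of $\ast$ together with the fact that any $g\in O(L)$ is a lattice isometry, so that $g(tk)=t\,g(k)$ and $|g(k)|^2=|k|^2$. The only substantial point is part \ref{lemma:property-of-varep-and-zeta-3}, which I would prove in standalone fashion because part \ref{lemma:property-of-varep-and-zeta-4} invokes it; the other three reduce to the cocycle bookkeeping.

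For part \ref{lemma:property-of-varep-and-zeta-1} I would read off from the reduced form (\ref{eq:ast-prod-reduced-again}) that $(2k)\ast k' \equiv \sum_{i>j} 2k^i k'^j\, e_i\cdot e_j \equiv 0 \pmod 2$, since every summand carries an explicit factor $2$; hence $\varepsilon(2k,k')=1$. Part \ref{lemma:property-of-varep-and-zeta-2} then follows by putting $(k_1,k_2)=(2k,k')$ into (\ref{eq:zeta-vs-varepsilon}): the prefactor $\varepsilon(2k,k')$ and the right-hand factor $\varepsilon(g(2k),g(k'))=\varepsilon(2g(k),g(k'))$ are both $1$ by part \ref{lemma:property-of-varep-and-zeta-1} (using $g(2k)=2g(k)$), leaving exactly $\zeta_g(2k+k')=\zeta_g(2k)\zeta_g(k')$.

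The heart of the argument is part \ref{lemma:property-of-varep-and-zeta-3}, where I would show that $\varepsilon(k,k)=(-1)^{k\ast k}$ depends only on $|k|^2 \bmod 4$. Expanding $|k|^2 = \sum_i (k^i)^2|e_i|^2 + 2\sum_{i>j}k^ik^j\,e_i\cdot e_j$ and comparing with (\ref{eq:ast-prod-reduced-again}) gives the exact integer identity $k\ast k = \tfrac{1}{2}\big(|k|^2 - \sum_i (k^i)^2|e_i|^2\big)$, so it suffices to determine $|k|^2 - \sum_i (k^i)^2|e_i|^2$ modulo $4$. Here the basis hypothesis does the work: for every $i\geq 1$ (and for $i=0$ in the even case) $|e_i|^2\in 4\Z$ forces $(k^i)^2|e_i|^2\equiv 0\pmod 4$, so the only possibly surviving contribution is the $i=0$ term in the odd case, where $|e_0|^2$ is odd. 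Since an odd square is $\equiv 1 \pmod 8$, one finds $(k^0)^2|e_0|^2 \equiv |e_0|^2\pmod 4$ when $k^0$ is odd and $\equiv 0$ when $k^0$ is even; and because the parity of $k^0$ is itself governed by $|k|^2 \bmod 2$, the residue $\sum_i (k^i)^2|e_i|^2 \bmod 4$, hence $k\ast k \bmod 2$, is a function of $|k|^2 \bmod 4$ alone. Thus $\varepsilon(k,k)=\varepsilon(k',k')$ whenever $|k|^2\equiv|k'|^2\pmod 4$, and in particular $\varepsilon(k,k)=\varepsilon(g(k),g(k))$ since $g$ is an isometry.

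Finally, for part \ref{lemma:property-of-varep-and-zeta-4} I would set $(k_1,k_2)=(t_1k,t_2k)$ in (\ref{eq:zeta-vs-varepsilon}). Bilinearity of $\ast$ gives $(t_1k)\ast(t_2k)=t_1t_2\,(k\ast k)$, so $\varepsilon(t_1k,t_2k)=\varepsilon(k,k)^{t_1t_2}$, and likewise $\varepsilon(t_1 g(k),t_2 g(k))=\varepsilon(g(k),g(k))^{t_1t_2}$; these coincide by part \ref{lemma:property-of-varep-and-zeta-3}, so they cancel and leave $\zeta_g((t_1+t_2)k)=\zeta_g(t_1k)\zeta_g(t_2k)$. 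I expect part \ref{lemma:property-of-varep-and-zeta-3} to be the only delicate step, precisely because of the odd basis vector $e_0$ and the need to track residues modulo $4$ (invoking that odd squares are $\equiv 1 \pmod 8$); the remaining parts are formal consequences of the cocycle relation and bilinearity.
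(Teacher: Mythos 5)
Your proof is correct and follows essentially the same route as the paper's: part (1) from the mod-2 reduction of $\ast$, parts (2) and (4) by specializing the cocycle relation (\ref{eq:zeta-vs-varepsilon}) and using bilinearity, and part (3) by expanding $|k|^2$ and exploiting the basis hypothesis modulo 4, with the isometry property $|g(k)|^2=|k|^2$ finishing the argument. One small inaccuracy worth fixing: $k\ast k = \tfrac{1}{2}\bigl(|k|^2 - \sum_i (k^i)^2|e_i|^2\bigr)$ is not an exact integer identity, since the definition of $\ast$ carries the extra term $\sum_{i>j}k^ik^j|e_i|^2|e_j|^2$; under the basis hypothesis that term lies in $4\Z$, so the congruence modulo 2 that your argument actually uses is valid and the proof goes through unchanged.
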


\begin{proof}
(\ref{lemma:property-of-varep-and-zeta-1}) This is because $2k \ast k' \equiv 0 \pmod{2}$.

(\ref{lemma:property-of-varep-and-zeta-2}) Use (\ref{lemma:property-of-varep-and-zeta-1}) to (\ref{eq:zeta-vs-varepsilon}).

(\ref{lemma:property-of-varep-and-zeta-3}) We first observe that
\begin{align}
|k|^2 = 2\sum_{i>j}k^ik^j e_i \cdot e_j +\sum_{i=0}^{n-1}(k^i)^2|e_i|^2.
\label{eq:norm-in-terms-of-ast-prod}
\end{align}
If $L$ is even, recall that we took the basis satisfying $|e_i|^2 \equiv 0 \mod 4$.
If $L$ is odd, since we took the basis such that $e_0$ is odd and $e_{i\neq0}$ are even, $k$ is odd if and only if $k^0$ is odd.
Moreover, we have $|e_{i\neq0}|^2 \equiv 0 \mod 4$.
By applying these facts to the second term of the right-hand side of (\ref{eq:norm-in-terms-of-ast-prod}),
we have
\begin{align}
|k|^2 \equiv 2\sum_{i>j}k^ik^j e_i \cdot e_j + \left\{\begin{array}{ll}
0 & \text{($k$ is even)}\\
|e_0|^2 & \text{($k$ is odd)}
\end{array}\right. \pmod{4}.
\end{align}
Therefore, $|k|^2=|g(k)|^2$ leads to
\begin{align}
\sum_{i>j}k^ik^j e_i \cdot e_j \equiv \sum_{i>j}(g(k))^i(g(k))^j e_i \cdot e_j \pmod{2},
\end{align}
and hence $\varepsilon(k,k)=\varepsilon(g(k),g(k))$ follows from (\ref{eq:ast-prod-reduced-again}).

(\ref{lemma:property-of-varep-and-zeta-4}) Use (\ref{lemma:property-of-varep-and-zeta-3}) to (\ref{eq:zeta-vs-varepsilon}).
Note that $\varepsilon(t_1k,t_2k)=\varepsilon(k,k)^{t_1t_2}$ follows from the bilinearity of the product $\ast$.
\end{proof}

Now, we prove Proposition \ref{prop:closed-form-of-zeta}.
\begin{proof}[Proof of Proposition \ref{prop:closed-form-of-zeta}]
By using Lemma \ref{lemma:property-of-varep-and-zeta} (\ref{lemma:property-of-varep-and-zeta-2}) and (\ref{lemma:property-of-varep-and-zeta-4}),
we have
\begin{align}
\zeta_g(k) = \Biggl(\prod_{i=0}^{n-1}\zeta_g(e_i)^{2\lfloor\frac{k^i}{2}\rfloor}\Biggr) \cdot \zeta_g\Bigl(\sum_{i'\in\{i\mid\text{$k^i$ is odd}\}}e_{i'}\Bigr).
\label{eq:zeta-after-lemma}
\end{align}
If we write the elements of $\{i\mid\text{$k^i$ is odd}\}$ as $i'_1, \ldots, i'_l$ where $i'_1<\cdots<i'_l$,
then by using (\ref{eq:zeta-vs-varepsilon}) and $\varepsilon(e_i,e_j)=1$ for $i<j$ repeatedly,
the last factor of (\ref{eq:zeta-after-lemma}) becomes
\begin{align}
\zeta_g(e_{i'_1}+\cdots+e_{i'_l}) & = \zeta_g(e_{i'_1}) \zeta_g(e_{i'_2}+\cdots+e_{i'_l}) \varepsilon(g(e_{i'_1}),g(e_{i'_2}+\cdots+e_{i'_l})) \\
& = \zeta_g(e_{i'_1}) \zeta_g(e_{i'_2}+\cdots+e_{i'_l}) \prod_{1<m'\leq l}\varepsilon(g(e_{i'_1}),g(e_{i'_{m'}})) \\
& \ \ \vdots \\
& = \prod_{m=1}^l\zeta_g(e_{i'_m}) \prod_{1\leq m<m'\leq l}\varepsilon(g(e_{i'_m}),g(e_{i'_{m'}})).
\end{align}
Therefore, we obtain (\ref{eq:closed-form-of-zeta}).
\end{proof}

%%%%%%%%%%%%%%%%%%%%%%%%%%%%%%
\subsection{The Mathieu Group of Odd Leech Lattice}
\label{subsec:answer-for-odd-Leech}
%%%%%%%%%%%%%%%%%%%%%%%%%%%%%%
Let us see that the answer to Question \ref{q:tangible-form} is no
in the case of the odd Leech lattice $O_{24}$.

We take $e_0,\ldots,e_{23}$ in (\ref{eq:odd-Leech-basis}) as a basis of the odd Leech lattice $O_{24}$.
This basis satisfies the assumption of Proposition \ref{prop:closed-form-of-zeta}.
The isometry group $O(O_{24})$ contains the automorphism group $M_{24}$ of the binary Golay code $G_{24}$ as a subgroup,
and its element $g \in M_{24} \subset S_{24}$ acts on $k=(k_0,\ldots,k_{23})\in O_{24} \subset \R^{24}$ from left as
\begin{align}
g(k) = (k_{g^{-1}(0)},\ldots,k_{g^{-1}(23)}), \quad \text{or equivalently,} \quad (g(k))_{g(i)} = k_i.
\end{align}

Recall that the Mathieu group $M_{24}$ is generated by the two generators $a,b$
satisfying the relations in (\ref{eq:presentation-of-M24}):
\begin{align}
a^2 & = 1, \label{eq:rel-of-M24-1}\\
b^3 & = 1,\\
(ba)^{23} & = 1,\\
(b^2aba)^{12} & = 1,\\
(b^2ab^2ababa)^5 & = 1,\\
(b^2ab^2aba)^3(b^2ababa)^3 & = 1, \label{eq:rel-of-M24-6}\\
((b^2aba)^3ba)^4 & = 1. \label{eq:rel-of-M24-7}
\end{align}
In our choice of a basis of $O_{24}$, we can take explicit elements of $a, b$ as in (\ref{eq:specific-generator-of-M24-a}), (\ref{eq:specific-generator-of-M24-b}).

For the answer to Question \ref{q:tangible-form} to be yes,
we must at least be able to find the values of $\zeta_a(e_i)$ and $\zeta_b(e_i)\in U(1)$ for $i=0,\ldots,23$,
which are compatible with the relations (\ref{eq:rel-of-M24-1})--(\ref{eq:rel-of-M24-7}).
For example, by applying the relation (\ref{eq:rel-of-M24-1}) to
the condition (\ref{eq:grp-hom-condition-on-zeta}) that $\{\zeta_g\}_g$ preserve the group structure, we have
\begin{align}
\zeta_a(a(e_i))\zeta_a(e_i) = \zeta_1(e_i),
\label{eq:example-of-eq-to-find-zeta}
\end{align}
for any $i=0,\ldots,23$.
By using the explicit form (\ref{eq:closed-form-of-zeta}) of $\zeta_g(k)$, the left-hand side of (\ref{eq:example-of-eq-to-find-zeta}) can be factorized in the form of
\begin{align}
(-1)^{\mu^i}\zeta_a(e_0)^{m^i_0}\cdots\zeta_a(e_{23})^{m^i_{23}},
\label{eq:LHS-of-example-of-eq-to-find-zeta}
\end{align}
where $\mu^i\in\Z_2, m^i_0,\ldots,m^i_{23}\in\Z$.
We further write $\zeta_a(e_j)\in U(1)$ as $(e^{\sqrt{-1}\pi})^{\rho_j}$ with $\rho_j\in\R/2\Z$.
Then (\ref{eq:LHS-of-example-of-eq-to-find-zeta}) is rewritten as
\begin{align}
(e^{\sqrt{-1}\pi})^{m^i_0\rho_0+\cdots+m^i_{23}\rho_{23}+\mu^i}.
\end{align}
On the other hand, the right-hand side of (\ref{eq:example-of-eq-to-find-zeta}) is 1, because we can show that $\zeta_1(k)=1$ for any $k\in L$ from the condition (\ref{eq:grp-hom-condition-on-zeta}).
Therefore, the equation (\ref{eq:example-of-eq-to-find-zeta}) is equivalent to
\begin{align}
m^i_0\rho_0+\cdots+m^i_{23}\rho_{23} = \mu^i \ \text{in $\R/2\Z$}.
\label{eq:eq-to-find-zeta-example-a}
\end{align}
In this way, from the relation (\ref{eq:rel-of-M24-1}), we have obtained 24 necessary conditions (\ref{eq:eq-to-find-zeta-example-a}; $i=0,\ldots,23$) for $\zeta_a(e_j)=(e^{\sqrt{-1}\pi})^{\rho_j}$ to preserve the group structure of $M_{24}$.

In the same way, from the seven relations (\ref{eq:rel-of-M24-1})--(\ref{eq:rel-of-M24-7}),
we obtain $7\times 24$ equations, which we label by $\iota=1,\cdots,7\times 24$,
\begin{align}
m^\iota_0\rho_0+\cdots+m^\iota_{23}\rho_{23} + m^\iota_{24}\rho_{24}+\cdots+m^\iota_{47}\rho_{47} = \mu^\iota \ \text{in $\R/2\Z$},
\end{align}
with respect to variables $\rho_i\in\R/2\Z$ defined by $\zeta_a(e_j)=(e^{\sqrt{-1}\pi})^{\rho_j}$ and $\zeta_b(e_j)=(e^{\sqrt{-1}\pi})^{\rho_{24+j}}$,
with coefficients $m^\iota_j\in\Z$ and $\mu^\iota_j\in\Z/2\Z\subset\R/2\Z$.
We write these equations in a matrix form
\begin{align}
M \vec{\rho} = \vec{\mu},
\label{eq:eq-to-find-zeta}
\end{align}
where the coefficient matrix $M$ is $\Z$-valued and has $7\times24$ rows and 48 columns,
the variable $\vec{\rho}$ is the $\R/2\Z$-valued column vector with 48 entries,
and $\vec{\mu}$ is the $\Z/2\Z$-valued and hence $\R/2\Z$-valued column vector with $7\times24$ entries.

It is known that for any $t \times u$ integer matrix $M\in\Z^{t\times u}$,
there exist invertible square matrices $P\in\GL(t,\Z)$ and $Q\in\GL(u,\Z)$,
such that $D=PMQ$ is in the form of
\begin{align}
D = \left(\begin{array}{ccc|c}
d_0 & & & \\
 & \ddots & & O\\
 & & d_{r-1} & \\\hline
 & O & & O
\end{array}\right),\quad d_0,\ldots,d_{r-1}\in\Z_{\neq0},\quad d_i\mid d_{i+1}\ \text{for any $i$}.
\end{align}
The entries $d_0,\ldots,d_{r-1}$ are unique for $M$ up to multiplication by elements in $\Z^\times=\{\pm1\}$,
and called the \emph{elementary divisors} of $M$.
$D$ is called the \emph{Smith normal form} of $M$,
and $M=P^{-1}DQ^{-1}$ is called the \emph{Smith decomposition} of $M$.
The matrices $P, D, Q$ for a given $M$ can be calculated by Mathematica command \texttt{SmithDecomposition[$M$]}.

Let $D=PMQ$ be the Smith normal form of the coefficient matrix $M$ in (\ref{eq:eq-to-find-zeta}).
Then the equation (\ref{eq:eq-to-find-zeta}) is equivalent to
\begin{align}
DQ^{-1}\vec{\rho} = P\vec{\mu}.
\label{eq:eq-to-find-zeta-deformed}
\end{align}
If the rank of $D$, or equivalently that of $M$, is $r$,
and if $P\vec{\mu}$ contains any non-zero element of $\R/2\Z$ somewhere in its $(r+1),\ldots,(7\times24)$-th entries,
then the equation (\ref{eq:eq-to-find-zeta-deformed}) does not have any solution $\vec{\rho}$,
and hence we cannot find $\zeta_a(e_i)$ and $\zeta_b(e_i)$ $(i=0,\ldots,23)$ preserving the group structure of $M_{24}$ and compatible with the relations (\ref{eq:rel-of-M24-1})--(\ref{eq:rel-of-M24-7}).
This means that the answer to Question \ref{q:tangible-form} is no,
and hence the Mathieu group $M_{24}$, a subgroup of the isometry group of the odd Leech lattice $O_{24}$,
does not lift to a subgroup of the automorphism group of the lattice VOA $V_{O_{24}}$.

In fact, we only need the $2\times 24$ equations from the relations (\ref{eq:rel-of-M24-1}) and (\ref{eq:rel-of-M24-6}) to bring this matter to a conclusion.
The coefficient matrix $M$ and $\vec{\mu}$ for these $2\times 24$ equations are calculated by computer (see \texttt{M24\_of\_odd\_Leech.ipynb} for the source code)
and recorded in (\ref{eq:M24-coef-mat}, \ref{eq:M24-RHS-vec}).
By the Smith decomposition $M=P^{-1}DQ^{-1}$,
we obtain the elementary divisors of $M$ as
\begin{align}
1, 1, 1, 1, 1, 1, 1, 1, 1, 1, 1, 1, 1, 1, 1, 1, 1, 1, 1, 1, 1, 1, 1, 2, 54,
\end{align}
and hence the rank of $D$ is 25, whereas 
\begin{align}
\begin{array}{rrrr rrrr l}
P\vec{\mu} = (\ 16018 & 16019 & 22266 & 16022 & 22269 & 22266 & 16020 & 22268 & \\
16020 & 16019 & 16020 & 16020 & 31230 & 77430 & 48739 & 61107 & \\
51039 & 77168 & 54837 & 70020 & 78108 & 53946 & 76206 & 62456 & \\
31928 & 0 & 0 & -2 & 0 & 0 & 22171 & 13702 & \\
23183 & 2 & 5974 & 17416 & 17762 & 0 & 0 & 7460 & \\
28878 & -2 & 2 & 0 & 5156 & 23924 & 0 & 33476 & )^T,
\end{array}
\end{align}
where the 31st and 33rd entries are nonzero in $\R/2\Z$.
Therefore, we cannot find the values of $\zeta_a(e_i)$ and $\zeta_b(e_i)$ $(i=0,\ldots,23)$ compatible with the relations (\ref{eq:rel-of-M24-1}) and (\ref{eq:rel-of-M24-6}),
and hence the subgroup $M_{24}$ of the isometry group of $O_{24}$
does not lift to a subgroup of the automorphism group of the lattice VOA $V_{O_{24}}$.

\vspace{\vertspace}

In a similar way, we can see that the subgroup $M_{23}$ of $M_{24}$ also does not lift to a subgroup of the automorphism group of $V_{O_{24}}$.
Recall that the Mathieu group $M_{23}$ is generated by two generators $a,b$ satisfying the relations in (\ref{eq:presentation-of-M23}):
\begin{align}
a^2 & = 1, \\
b^4 & = 1, \\
(ba)^{23} & = 1, \\
(b^2a)^6 & = 1, \\
(b^3aba)^6 & = 1, \\
(b^2ab^3aba)^4 & = 1, \\
(b^3a)^3(ba)^3(b^3aba)^2b^2ab^3a(ba)^3 & = 1, \label{eq:rel-of-M23-7}\\
b^2ab^3abab^2aba(b^3ab^2a)^2(b^2aba)^3 & = 1. \label{eq:rel-of-M23-8}
\end{align}

From the relations (\ref{eq:rel-of-M23-7}) and (\ref{eq:rel-of-M23-8}),
we obtain the equation $M\vec{\rho} = \vec{\mu}$ which
$\zeta_a(e_i) = (e^{\sqrt{-1}\pi})^{\rho_i}$ and $\zeta_b(e_i) = (e^{\sqrt{-1}\pi})^{\rho_{24+i}}$ should satisfy,
where the coefficient matrix $M$ and $\vec{\mu}$ are calculated by computer (see \texttt{M23\_of\_odd\_Leech.ipynb} for the source code)
and recorded in (\ref{eq:M23-coef-mat}, \ref{eq:M23-RHS-vec}).
By the Smith decomposition $M=P^{-1}DQ^{-1}$,
we obtain the elementary divisors of $M$ as
\begin{align}
1, 1, 1, 1, 1, 1, 1, 1, 1, 1, 1, 1, 1, 1, 1, 1, 1, 1, 1, 1, 1, 1, 1, 1,
\end{align}
and hence the rank of $D$ is 24, whereas 
\begin{align}
\scalebox{0.8}{$
\begin{array}{r rrrr rrrr l}
P\vec{\mu} = ( & 0 & -184045 & -57967 & -121500 & -94368 & -45774 & -62078 & -153328 & \\
& -2276 & -572 & -94178 & -81760 & -132488 & -197704 & -197578 & -182388 & \\
& -59707 & -185810 & -9243 & -5733 & -86582 & -170898 & -161384 & -99556 & \\
& 0 & -2422 & -198226 & -188029 & -118824 & -38728 & -109273 & -199800 & \\
& -172362 & -200168 & -11403 & -113189 & -102530 & -173862 & -50724 & -20268 & \\
& -165812 & -169514 & -123120 & -137854 & -136654 & -81506 & -25294 & -201016 & )^T,
\end{array}
$}
\end{align}
where the 28th, 31st, 35th, and 36th entries are nonzero in $\R/2\Z$.
Therefore, we cannot find the values of $\zeta_a(e_i)$ and $\zeta_b(e_i)$ $(i=0,\ldots,23)$ compatible with the relations (\ref{eq:rel-of-M23-7}) and (\ref{eq:rel-of-M23-8}),
and hence the subgroup $M_{23}$ of the isometry group of $O_{24}$
does not lift to a subgroup of the automorphism group of the lattice VOA $V_{O_{24}}$.

\vspace{\vertspace}

Let us restate our conclusion in the form of a theorem.
\begin{thm}
\label{thm:main-oddLeech}
Let $L$ be an odd Leech lattice $O_{24}$.
The subgroups $(\Z_2)^{24}.M_{24}$ and $(\Z_2)^{24}.M_{23}$ of the automorphism group $\Aut(V_L)$ of the lattice VOA
are non-split extensions.
In particular, the subgroups $M_{24}$ and $M_{23}$ of the isometry group $O(L)$ do not lift to a subgroup of $\Aut(V_L)$.
\end{thm}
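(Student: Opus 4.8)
The plan is to obtain the theorem directly from the reformulation assembled in Section~\ref{sec:cocycle-factor-and-aut-lattice-VOA}. First I would identify the two subgroups precisely: restricting the exact sequence (\ref{eq:lift-of-O(L)}) to the preimages of $M_{24}$ and $M_{23}$ in $O(\hat L)$, and using $\Hom(O_{24},\Zgen{2}{\kappa})\cong(\Z_2)^{24}$ since $\mathrm{rank}\,O_{24}=24$, gives extensions $(\Z_2)^{24}.M_{24}$ and $(\Z_2)^{24}.M_{23}$ inside $O(\hat L)\subset\Aut(V_L)$. By the paraphrase of Question~\ref{q:ses-split} into Question~\ref{q:tangible-form-Z2}/\ref{q:tangible-form}, applied now to the subgroup $G\in\{M_{24},M_{23}\}$ rather than to all of $O(L)$, such an extension splits---equivalently $G$ lifts to a subgroup of $\Aut(V_L)$---if and only if there is a family $\{\zeta_g\}_{g\in G}$ obeying both the cocycle-compatibility (\ref{eq:zeta-vs-varepsilon}) and the homomorphism condition (\ref{eq:grp-hom-condition-on-zeta}). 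Thus the theorem is exactly the assertion that Question~\ref{q:tangible-form} has a negative answer for each of the two groups.

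Next I would exploit that $M_{24}$ and $M_{23}$ are finitely presented on two generators $a,b$, with defining relations (\ref{eq:rel-of-M24-1})--(\ref{eq:rel-of-M24-7}) and (\ref{eq:rel-of-M23-7})--(\ref{eq:rel-of-M23-8}) respectively. A candidate family $\{\zeta_g\}$ is pinned down by the values $\zeta_a(e_i),\zeta_b(e_i)\in U(1)$ on the basis (\ref{eq:odd-Leech-basis}), and conversely any such assignment compatible with the defining relations extends to a genuine homomorphic section. Writing $\zeta_a(e_i)=(e^{\sqrt{-1}\pi})^{\rho_i}$ and $\zeta_b(e_i)=(e^{\sqrt{-1}\pi})^{\rho_{24+i}}$ and substituting each relation into (\ref{eq:grp-hom-condition-on-zeta}), I would invoke the closed form of Proposition~\ref{prop:closed-form-of-zeta}---applicable because the basis (\ref{eq:odd-Leech-basis}) satisfies its hypotheses---to turn every relation into $24$ linear equations over $\R/2\Z$ in the $48$ unknowns $\rho_i$. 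Collecting these into a single integer system $M\vec\rho=\vec\mu$, whose right-hand side has entries in $\Z/2\Z\subset\R/2\Z$, reduces the whole existence question to linear algebra over the divisible group $\R/2\Z$.

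Finally I would decide consistency using the Smith normal form $D=PMQ$ with $P,Q$ integral and invertible. After the change of variable $\vec\sigma=Q^{-1}\vec\rho$ the system diagonalizes to $d_i\sigma_i=(P\vec\mu)_i$; because $\R/2\Z$ is divisible, every equation with $d_i\neq0$ is automatically solvable, so the sole obstruction is that $(P\vec\mu)_i$ must vanish in each coordinate beyond the rank $r$. The content of the computation is that it does not vanish: it already fails using only two of the relations, with rank $25$ but $P\vec\mu$ odd in (for instance) its $31$st and $33$rd entries for $M_{24}$, and rank $24$ but $P\vec\mu$ odd in its $28$th, $31$st, $35$th, and $36$th entries for $M_{23}$. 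Hence no compatible $\{\zeta_g\}$ exists for either group, the two extensions are non-split, and $M_{24}$, $M_{23}$ do not lift.

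The main obstacle is genuinely this last step. The two reductions---to generators and relations, then to a linear system---are formal, but \emph{a priori} the system $M\vec\rho=\vec\mu$ could be consistent, and proving it is not requires the explicit $48$-column integer matrices built from the relations together with their Smith decompositions. This is a finite but sizeable computation best handed to a computer; conceptually it simply exhibits a nonzero class of $\vec\mu$ in $\mathrm{coker}(M)$ over $\R/2\Z$, which is precisely the obstruction to splitting.
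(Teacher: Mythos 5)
Your proposal is correct and follows essentially the same route as the paper: reduce splitting of the extensions to Question \ref{q:tangible-form} for the subgroups $M_{24}$ and $M_{23}$, convert the defining relations on the generators $a,b$ into a linear system $M\vec\rho=\vec\mu$ over $\R/2\Z$ via Proposition \ref{prop:closed-form-of-zeta}, and detect inconsistency through the Smith normal form, with exactly the computational facts the paper records (rank $25$ with odd entries $31$, $33$ of $P\vec{\mu}$ for $M_{24}$; rank $24$ with odd entries $28$, $31$, $35$, $36$ for $M_{23}$). The only cosmetic difference is that you also note the converse (consistency would yield a homomorphic section), which the paper does not need since it argues purely from necessary conditions.
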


\noindent If we use $U(1)$ instead of $\Z_2$ in the construction of $\hat{L}$ in (\ref{eq:lift-of-L}) and hence in the construction of $V_L$,
then the subgroups $U(1)^{24}.M_{24}$ and $U(1)^{24}.M_{23}$ of $\Aut(V_L)$ are non-split.
We will see that $(\Z_2\text{ or }U(1))^{24}.M_{23}$ contains $(\Z_2\text{ or }U(1))^{23}.M_{23}$ as a subgroup
around (\ref{eq:def-of-stabsubgrp}) and Appendix \ref{sec:proof-of-subgroup-of-extension},
and of course it is also non-split.

%%%%%%%%%%%%%%%%%%%%%%%%%%%%%%
\subsection{The Conway Group of Leech Lattice}
\label{subsec:answer-for-Leech}
%%%%%%%%%%%%%%%%%%%%%%%%%%%%%%
Let us see that the answer to Question \ref{q:tangible-form} is no
in the case of the Leech lattice $\Lambda_{24}$.
In fact, it was already shown in \cite{MR0476878};
we will check it by the similar discussion to that of the previous Section \ref{subsec:answer-for-odd-Leech}.

Recall that the isometry group $O(\Lambda_{24})$ of a Leech lattice is the Conway group $\Co_0$,
and it is generated by the two generators $A,B$ in (\ref{eq:Co0-generator-A}, \ref{eq:Co0-generator-B}) as a subgroup of $\GL(24,\Z)$.
From the matrix form (\ref{eq:sym-bilin-form-of-Leech}) of the symmetric bilinear form $q$,
we can see that the standard basis $\{\bbe_i\}_{i=0,\ldots,23}$ is a basis of the Leech lattice $(\Z^{24}, q)$
satisfying the assumption of Proposition \ref{prop:closed-form-of-zeta}.

We need some relations of the generators $A$ and $B$ to use the discussion similar to that of Section \ref{subsec:answer-for-odd-Leech}.
Unfortunately, unlike the case of the Mathieu groups,
we could not find information on the set of relations constituting the presentation of $\Co_0$ in terms of $A$ and $B$,
but the way of finding relations is not important.
After a lot of trials, we found some sets of relations
which lead to the system of equations showing that
there are no values $\zeta_A(e_0),\ldots,\zeta_A(e_{23})$ and $\zeta_B(e_0),\ldots,\zeta_B(e_{23})$
for the isometry group $\Co_0$ of the Leech lattice
to lift to a subgroup of the automorphism group of its lattice VOA.
(It would be valuable if there exists some systematic way to find a set of such relations.)

For example, a set of relations\footnote{
The relation (\ref{eq:rel1-of-Co0-1}) is taken from the semi-presentation of $\Co_1$ on the Version 3 webpage of ATLAS of Finite Group Representations \url{https://brauer.maths.qmul.ac.uk/Atlas/v3/spor/Co1/},
(\ref{eq:rel1-of-Co0-2}) is taken from the representative of the conjugacy class 22A of $\Co_1$ on the same webpage,
and the relation (\ref{eq:rel1-of-Co0-3}) was found through a brute-force search by computer.
}
\begin{align}
(Z^{21}BBAB)^{11} & = 1, \label{eq:rel1-of-Co0-1}\\
((YX^2YX)^2XYX)^{22} & = 1, \label{eq:rel1-of-Co0-2}\\
Y^3(YX)^5Y^2XB & = 1, \label{eq:rel1-of-Co0-3}
\end{align}
where $X=BA$, $Y=BBA$, and $Z=(BBABA)^2BA$,
leads to the equation $M\vec{\rho}=\vec{\mu}$ with the coefficient matrix $M$ and $\vec{\mu}$ recorded in (\ref{eq:Co0-coef-mat}, \ref{eq:Co0-RHS-vec});
these $M$ and $\vec{\mu}$ are calculated by computer
(see \texttt{Co0\_of\_Leech.ipynb} for the source code).
By the Smith decomposition $M=P^{-1}DQ^{-1}$,
we obtain the elementary divisors of $M$ as
\begin{align}
1, 1, 1, 1, 1, 1, 1, 1, 1, 1, 1, 1, 1, 1, 1, 1, 1, 1, 1, 1, 1, 1, 1, 2,
\end{align}
and hence the rank of $D$ is 24, whereas 
\begin{align}
\scalebox{0.7}{$
\begin{array}{rrrrrrrrl}
P\vec{\mu} = (\ 27709164 & 2529357 & 7489045 & 30268789 & 4526426 & 20722962 & 15136149 & 30380229 & \\
5661159 & 9704077 & 3615025 & 17793711 & 1073374 & 8631464 & 30885927 & 27407271 & \\
7011428 & 2117131 & 25166374 & 9655442 & 645641 & 12408278 & 26657215 & -5342424 & \\
0 & 2 & 0 & 0 & 0 & 0 & 0 & 0 & \\
0 & 0 & 0 & 0 & 0 & 0 & 0 & 0 & \\
0 & 0 & 0 & 0 & 0 & 0 & 0 & 0 & \\
0 & 2 & 2 & 2 & 0 & 0 & 2 & 2 & \\
0 & 0 & 0 & 0 & 0 & 2 & 0 & 0 & \\
10378735 & 0 & 9006865 & 16286994 & 0 & 25479941 & 27049788 & 31316233 &\ )^T,
\end{array}
$}
\end{align}
where for example the last entry is nonzero in $\R/2\Z$.
Therefore, we cannot find the values of $\zeta_A(e_i)$ and $\zeta_B(e_i)$ $(i=0,\ldots,23)$ compatible with the relations (\ref{eq:rel1-of-Co0-1})--(\ref{eq:rel1-of-Co0-3}),
and hence the isometry group $\Co_0$ of the Leech lattice $\Lambda_{24}$
does not lift to a subgroup of the automorphism group of the lattice VOA $V_{\Lambda_{24}}$.

%%%%%%%%%%%%%%%%%%%%%%%%%%%%%%
\section{Currents Invariant under Specified Symmetry}
\label{sec:inv-current}
%%%%%%%%%%%%%%%%%%%%%%%%%%%%%%
Now that we understand the symmetry of the odd Leech lattice CFT contains $U(1)^{24}.M_{24}$, instead of $M_{24}$,
our next interest is in investigating what kind of structure can be preserved under its action or the action of its subgroup.
Specifically, we determine weight-1 and weight-$\frac{3}{2}$ elements of the odd Leech lattice VOA,
which could be ingredients of a superconformal algebra,
invariant under the action of $U(1)^{24}.M_{24}$ or its subgroup $U(1)^{23}.M_{23}$.

%%%%%%%%%%%%%%%%%%%%%%%%%%%%%%
\subsection{Lift of Lattice Isometries to Automorphisms of Lattice VOA: Part 2}
\label{subsec:autom-of-lattice-VOA}
%%%%%%%%%%%%%%%%%%%%%%%%%%%%%%
To determine which elements of the VOA are invariant under given automorphisms,
of course we have to know how automorphisms act on the elements.
It was halfway done in Section \ref{subsec:review-of-FLM},
but we have not seen how an element of $O(\hat{L})$ extends to an element of $\Aut(V_L)$;
in the first place, we have not explained the definition of $V_L$, either.
So we will review them in this Section \ref{subsec:autom-of-lattice-VOA}.

The foundational literature is \cite{MR0996026},
and \cite{MR1745258, Lam} also contain readable summaries.
These references deal with ordinary lattice VOAs constructed from even lattices,
but we also consider odd lattices in this paper,
so the resulting VOA could be a vertex operator \emph{super}algebra (VOSA) to be precise (see e.g.\ \cite[Remark 12.38]{MR1233387}).
However, this distinction is not significant for our purposes,
so let us be unconcerned about that point.
We assume that the lattice is Euclidean, and hence the resulting lattice VOA formulates a chiral CFT in the language of physics.
As for Lorentzian lattice VOAs or more general VOAs to describe non-chiral CFTs (full CFTs in other words),
see \cite{Huang:2005gz, Moriwaki:2020cxf, Singh:2023mom}.

As always, we assume $L$ is an integral lattice.
Let $\h=L\otimes_\Z\C$ be the abelian Lie algebra,
and $\hat{\h} = \h\otimes\C[t,t^{-1}] \oplus \C K$ be its affine Lie algebra with the Lie bracket
\begin{align}
[\alpha(m), \alpha'(m')] = (\alpha,\alpha') m \delta_{m+m',0}  K,
\end{align}
where $\alpha(m) := \alpha\otimes t^m$ with $\alpha\in\h$, $m\in\Z$,
and the symmetric bilinear form $({-},{-})$ on the lattice $L$ is extended to $\h$ by $\C$-linearity.

Let $U(\hat{\h})$ be the universal enveloping algebra of $\hat{\h}$
(the algebra where $\alpha(m)\alpha'(m')-\alpha'(m')\alpha(m) = [\alpha(m),\alpha'(m')]$ holds),
and define the $\hat{\h}$-module $M(1)$ as
\begin{align}
M(1) := U(\hat{\h}) \otimes_{\h\otimes\C[t] \oplus \C K} \C,
\end{align}
where $\h\otimes\C[t]$ acts on $\C$ as $\alpha(m)\cdot \C = 0$ ($m\geq0$),
and $K$ acts on $\C$ as multiplication by 1.
As a vector space, $M(1)$ is isomorphic to $U(\hat{\h}^-)$,
which is the universal enveloping algebra of the abelian subalgebra
\begin{align}
\hat{\h}^- := \h \otimes t^{-1}\C[t^{-1}],
\end{align}
of $\hat{\h}$.

Recall that $\hat{L}$ is the central extension of $L$ by $\Z_2$ or $U(1)$
whose generator was denoted by $\kappa$.
We may consider a representation $\C$ of the subgroup $\Z_2$ or $U(1)$ of $\hat{L}$,
where the generator $\kappa$ acts on $\C$ as multiplication by $e^{\sqrt{-1}\pi}$.
We define the $\hat{L}$-module $\C\{L\}$ as the induced representation of $\hat{L}$ from it:
\begin{align}
\C\{L\} := \mathrm{Ind}^{\hat{L}}_{\text{$\Z_2$ or $U(1)$}} \C = \C[\hat{L}] \otimes_{\C[\text{$\Z_2$ or $U(1)$}]} \C,
\end{align}
where $\C[G]$ denotes the group algebra of $G$ over $\C$.
Roughly speaking, $\C\{L\}$ is just a group algebra $\C[\hat{L}]$ with $\kappa^r \cdot e^k = e^{\sqrt{-1}\pi r}e^k$.

Now, the lattice VOA $V_L$ is defined as the $\C$ vector space
\begin{align}
V_L := M(1) \otimes_\C \C\{L\},
\end{align}
with the VOA structure such as the vacuum vector and the state-field correspondence, which we do not write down here.
Any element of $V_L$ therefore can be written as a $\C$-linear combination of elements in the form of
\begin{align}
\alpha_1(-m_1)\cdots\alpha_l(-m_l) e^k,
\label{eq:VOA-monomial}
\end{align}
where $\alpha_1,\ldots,\alpha_l \in \h = L\otimes_\Z\C$, $m_1,\ldots,m_l \in \Z_{>0}$, and $k\in L$.
The \emph{(conformal) weight} or the \emph{(conformal) dimension} of the element (\ref{eq:VOA-monomial}) is defined as
\begin{align}
\wt(\alpha_1(-m_1)\cdots\alpha_l(-m_l) e^k) := m_1 + \cdots + m_l + \frac{1}{2}|k|^2,
\end{align}
and we can introduce the grading by this weight to the VOA $V_L$.
An element of a VOA is sometimes called a \emph{current} because of some physical background.

Let $e_0,\ldots,e_{n-1}$ be an integral basis of $L$.
The element $e_{i_1}(-m_1)\cdots e_{i_l}(-m_l) e^k$ of $V_L$ is usually denoted by $\alpha^{i_1}_{-m_1}\cdots\alpha^{i_l}_{-m_l}|k\rangle$ in physics literature,
where the commutation relation of the creation-annihilation operators $\alpha^i_m$ is $[\alpha^i_m,\alpha^{i'}_{m'}] = (e_i,e_{i'})m\delta_{m+m',0}$.
We also often use the creation-annihilation operators $\bbalpha^i_m:=\bbe_i(m)$
with respect to the standard basis $\bbe_0,\ldots,\bbe_{n-1}$ of $\R^n$ where the lattice $L$ is embedded.
Their commutation relation is of course $[\bbalpha^i_m,\bbalpha^{i'}_{m'}] = (\bbe_i,\bbe_{i'})m\delta_{m+m',0}$.

The definition of an \emph{automorphism} of a VOA $V$ is
a map $F:V\to V$ such that
\vspace{-8pt}
\begin{enumerate}
\setlength{\itemsep}{-3pt}
\leftskip -8pt
\item it is a $\C$-linear automorphism on the $\C$-vector space $V$.
\item it preserves OPE; more precisely, $F(v_m(v')) = (F(v))_m(F(v'))$ for any $v,v'\in V$.
Here, $v_m\in\End(V)$ denotes the $m$-th mode of the mode expansion of the operator corresponding to the state $v\in V$.
\item it preserves the Virasoro element; $F(\omega)=\omega$.
Here, the \emph{Virasoro element} $\omega\in V$ is a specified state such that the modes of the corresponding operator, called the \emph{energy-momentum tensor} $T(z)$, satisfy the Virasoro algebra.
In a lattice VOA, $\omega$ is given by $\omega=\frac{1}{2}\sum_i\bbe_i(-1)^2 e^0$,
and the corresponding operator is $T(z)=-\frac{1}{2}\sum_i:(\partial X^i(z))^2:$ (up to scalar multiplication).
\end{enumerate}
\vspace{-8pt}
We refer the reader to \cite[\S8.10]{MR0996026} or \cite[\S2.3]{MR1745258} for more details.

Any element $f$ of $O(\hat{L})$, which was introduced in (\ref{eq:def-of-O(Lhat)}),
induces an automorphism $F$ of $V_L$ by
\begin{align}
F(\alpha_1(-m_1)\cdots\alpha_l(-m_l) e^k) = \bar{f}(\alpha_1)(-m_1) \cdots \bar{f}(\alpha_l)(-m_l) f(e^k),
\label{eq:autom-of-lattice-VOA}
\end{align}
where $\bar{f}\in O(L)$ was defined in (\ref{eq:def-of-barf}) and extended by $\C$-linearity here.
This preserves the group structure of $O(\hat{L})$,
and hence $O(\hat{L}) \cong (\text{$\Z_2$ or $U(1)$})^n.O(L)$ can be regarded as a subgroup of $\Aut(V_L)$.
The whole $\Aut(V_L)$ is determined in \cite{MR1745258}.

%%%%%%%%%%%%%%%%%%%%%%%%%%%%%%
\subsection{Invariant Currents of Odd Leech Lattice VOA}
\label{subsec:inv-current-odd-Leech}
%%%%%%%%%%%%%%%%%%%%%%%%%%%%%%
In this Section \ref{subsec:inv-current-odd-Leech},
let $L$ be the odd Leech lattice $O_{24}$ constructed as in (\ref{eq:odd-Leech-from-Golay-code}).
The automorphism group $\Aut(V_L)$ of the lattice VOA contains $O(\hat{L})=U(1)^{24}.O(L)$, and hence $U(1)^{24}.M_{24}$, as a subgroup.
Let us investigate the weight-1 and weight-$\frac{3}{2}$ currents of $V_L$, invariant under the action of $U(1)^{24}.M_{24}$.

To begin with, any weight-1 current of $V_L$ is in the form of
\begin{align}
\alpha(-1)e^0,
\label{eq:wt-1-in-odd-Leech}
\end{align}
where $\alpha\in L\otimes_\Z\C$.
Since $e^0$ is the identity element of $\hat{L}$, we have $f(e^0)=e^0$ for any $f\in O(\hat{L})$.
Therefore, an element (\ref{eq:wt-1-in-odd-Leech}) is invariant under the action (\ref{eq:autom-of-lattice-VOA}) of $U(1)^{24}.M_{24} \subset \Aut(V_L)$,
if and only if $\alpha\in L\otimes_\Z\C$ is invariant under the action of $M_{24}\subset O(L)$ extended by $\C$-linearity.
Since the action of $M_{24}$ on $L=O_{24}$ constructed as in (\ref{eq:odd-Leech-from-Golay-code}) is the permutation of the coordinates with respect to the standard basis $\bbe_0,\ldots,\bbe_{23}$ of $\R^{24}$ where $L$ is embedded,
and it is transitive,
such $\alpha$ is only $\underline{1}=(1,\ldots,1)=\sum_i\bbe_i$, up to multiplication by $\C$.
In conclusion, the weight-1 current of $V_L$ invariant under the action of $U(1)^{24}.M_{24}$ is
\begin{align}
j^{M_{24}} = \sum_{i=0}^{23}\bbe_i(-1)e^0,
\end{align}
up to multiplication by $\C$.
In physics notation,
\begin{align}
j^{M_{24}} = \sum_{i=0}^{23}\bbalpha^i_{-1}|0\rangle.
\end{align}

Next, any weight-$\frac{3}{2}$ current of $V_L$ is a linear combination of elements in the form of
\begin{align}
e^k \quad \text{with} \quad k = \frac{1}{2\sqrt{2}}(-1)^w, \quad w\in G_{24},
\label{eq:wt-3/2-in-odd-Leech}
\end{align}
where the codeword $w$ of the binary Golay code $G_{24}$ is regarded as a vector $(w_i)_i$ with entries 0 or 1, and $(-1)^w$ denotes the vector $((-1)^{w_i})_i$.
Such elements are not invariant under the action of $\Hom(L,U(1))\cong U(1)^{24}\subset \Aut(V_L)$, recalling its action (\ref{eq:def-of-tilde}).
Therefore, there is no weight-$\frac{3}{2}$ current of $V_L$ invariant under the action of $U(1)^{24}.M_{24}$.
Even if we imposed the invariance under the action of $\Z_2^{24}.M_{24}$ only,
there is still no invariant weight-$\frac{3}{2}$ current of $V_L$.
This is because $k=\frac{1}{2\sqrt{2}}(-1)^w$ is an odd vector,
and hence its component with respect to the basis $e_0$ in (\ref{eq:odd-Leech-basis}) is odd,
so its dual $(e_0)^\vee \in \Hom(L,\Z_2)$ always gives a minus sign to (\ref{eq:wt-3/2-in-odd-Leech}).

\vspace{\vertspace}

How about a smaller subgroup?
For example, the subgroup of $M_{24}$ stabilizing one coordinate, say $k_0$ of $k=\sum_ik_i\bbe_i\in L$, is $M_{23}$.
In addition, if we consider the subgroup
\begin{align}
\Stab_{\Hom(L,U(1))}(\bbe_0) := \{\eta\in\Hom(L,U(1)) \mid \eta(L\cap\R\bbe_0)=1\} \cong U(1)^{23},
\label{eq:def-of-stabsubgrp}
\end{align}
of $\Hom(L,U(1))$,
then $\Stab_{\Hom(L,U(1))}(\bbe_0).M_{23}$ is a subgroup of $\Hom(L,U(1)).M_{24} \subset O(\hat{L})$.
To see it,\footnote{
The fact that $\Stab_{\Hom(L,U(1))}(\bbe_0).M_{23}$ is a subgroup of $\Hom(L,U(1)).M_{24}$ is nontrivial in the following sense.
Let $\hat{G}$ be an group extension $N.G$ of $G$ by $N$.
For any subgroup $G'$ of $G$, $\hat{G}$ contains an extension $N.G'$ as a subgroup.
However, for a subgroup $N'$ of $N$, $N'.G$ is not a subgroup of $\hat{G}$ in general.
This is because the multiplication law of $\hat{G}$ is like
$(n_1, g_1) \cdot (n_2, g_2) = (n_1g_1(n_2)\epsilon(g_1,g_2), g_1g_2)$;
see (\ref{eq:multi-law-of-grp-ext}) for the precise expression.
} we check that the 2-cocycle $E:M_{24}\times M_{24}\to\Hom(L,U(1))$ associated to the extension $O(\hat{L})$ in (\ref{eq:lift-of-O(L)})
takes value in $\Stab_{\Hom(L,U(1))}(\bbe_0)$ when restricted to $E|_{M_{23}\times M_{23}}$.
See Appendix \ref{sec:proof-of-subgroup-of-extension} for the detailed proof.

Let us investigate the weight-1 and weight-$\frac{3}{2}$ currents of $V_L$,
invariant under the action of\\
$\Stab_{\Hom(L,U(1))}(\bbe_0).M_{23} \cong U(1)^{23}.M_{23}$.
By an analysis similar to the case of $U(1)^{24}.M_{24}$,
the weight-1 invariant currents are
\begin{align}
j^{M_{23}}_1 = \bbe_0(-1)e^0,\quad j^{M_{23}}_2 = \sum_{i=1}^{23}\bbe_i(-1)e^0,
\end{align}
and their $\C$-linear combinations.
In physics notation,
\begin{align}
j^{M_{23}}_1 = \bbalpha^0_{-1}|0\rangle,\quad j^{M_{23}}_2 = \sum_{i=1}^{23}\bbalpha^i_{-1}|0\rangle.
\label{eq:M23-inv-weight-1-physics}
\end{align}
There is no weight-$\frac{3}{2}$ invariant current again.
Even if we imposed the invariance under the action of $\Z_2^{23}.M_{23}$ only,
there is still no invariant weight-$\frac{3}{2}$ current.
This is because the basis $e_{23}$ in (\ref{eq:odd-Leech-basis}) is in fact a primitive vector of $L\cap\R\bbe_0$,
and hence $\Stab_{\Hom(L,\Z_2)}(\bbe_0)=\Span_{\Z_2}\{(e_0)^\vee,\ldots,(e_{22})^\vee\}$,
where $(e_i)^\vee$ are the dual basis of (\ref{eq:odd-Leech-basis}).
Therefore, $(e_0)^\vee$ is in $\Stab_{\Hom(L,\Z_2)}(\bbe_0)$,
and it always gives a minus sign to (\ref{eq:wt-3/2-in-odd-Leech})
by the same discussion as in the case of $\Z_2^{24}.M_{24}$.

%%%%%%%%%%%%%%%%%%%%%%%%%%%%%%
\section{Discussions}
\label{sec:discussions}
%%%%%%%%%%%%%%%%%%%%%%%%%%%%%%
In this paper, we have seen that
the symmetry $\Aut(V_{O_{24}})$ of the odd Leech lattice VOA
inherits the $M_{24}$ or $M_{23}$ symmetry of the odd Leech lattice $O_{24}$
in the form of the non-split extension $U(1)^{24}.M_{24}$ or $U(1)^{23}.M_{23}$,
and therefore $M_{24}$ and $M_{23}$ are not genuine symmetries of the lattice VOA.
We also determined weight-1 and weight-$\frac{3}{2}$ currents in the odd Leech lattice VOA
invariant under the symmetry $U(1)^{24}.M_{24}$ or its subgroup $U(1)^{23}.M_{23}$.
In particular, we concluded that there is no such weight-$\frac{3}{2}$ current.

Let us revisit the observation by \cite{Benjamin:2015ria}.
They pointed out that representation dimensions of $M_{24}$ appear in the decomposition of the $\cN=2$ extremal elliptic genus $Z_\mathrm{ext}^{m=4}(\tau, z)$ of central charge $c=6m=24$ into the $(-1)^F$-inserted characters of the Ramond representations of an $\cN=2$ SCA.
They further constructed a chiral $\cN=2$ SCFT
whose partition function of the R sector coincides with the extremal elliptic genus $Z_\mathrm{ext}^{m=4}(\tau, z)$,
although a slight modification is needed for their supercurrents as in Appendix \ref{sec:N=2-SCA-in-odd-Leech}.
Their theory is equivalent to the fermionization of the $(A_1)^{24}$ lattice CFT with respect to the shift $\Z_2$ symmetry,
and its NS sector can be described as the odd Leech lattice VOA $V_{O_{24}}$,
denoted by $V_L$ for short below.

\vspace{\vertspace}

Now, we notice that there are more obstacles in trying to explain the moonshine-type observation by this fermionized $(A_1)^{24}$ theory than we naively expected.
The apparent problem is why it is the representation dimensions of $M_{24}$ that appear in the decomposition of the partition function of the R sector into the $\cN=2$ SCA characters,
even though the genuine symmetry of the VOA of the NS sector is the non-split extension $(\Z_2\text{ or }U(1))^{24}.M_{24}$.
Let us still try to explore a perspective in which $M_{24}$ or its subgroup $M_{23}$ takes on a meaningful role.
As we will see, this is not straightforward at all.

Here, we note that the difference between the NS sector and the R sector is not important.
One reason is that, using the spectral flow of the $\cN=2$ SCA,
we see that the partition function of the NS sector should also exhibit the representation dimensions of $M_{24}$ under the decomposition into $\cN=2$ SCA characters.
Another reason is that,
since the R sector is not closed under the OPE,
we cannot consider a new CFT consisting of the R sector only,
and the NS sector always comes into the story,
so there is no hope that the symmetry of CFT contains $M_{24}$ as a subgroup.

\begin{rem*}
In light of VOA,
the R sector can be formulated as a twisted module $M_\rR$ of the NS sector $V_L$.
We may consider the automorphism group of this module $\Aut_{V_L}(M_\rR)=(\End_{V_L}(M_\rR))^\times$,
but it would be too small to provide any useful implications.
For example, if $M_\rR$ is an irreducible module of $V_L$,
then $\Aut_{V_L}(M_\rR)=\C^\times$ by Schur's lemma.
We can try to find a subalgebra $\cA$ of $V_L$,
say some algebra containing an $\cN=2$ SCA,
such that $\End_\cA(M_\rR)$ contains $\C[M_{24}]$ or $\C[M_{23}]$,
but this attempt is not much different from the following approach focusing on the NS sector.
\emph{(Remark ends.)}
\end{rem*}

\vspace{\vertspace}

As such, to explain the appearance of the representation dimensions of $M_{24}$,
we have to exclude the effect of the group extension.
One possible way to do so
is to construct some VOA-like object,
for example by using $\hat{L}/\Z_2\cong L$,
on which $(2^{24}.M_{24})/2^{24}\cong M_{24}$ can act.
The discussion on whether this idea works is beyond the scope of this paper,
and it might require further study.

\vspace{\vertspace}

Another way\footnote{
The author thanks the anonymous referee for providing this viewpoint
of focusing on a subVOA having an $M_{24}$ symmetry.
} is to consider a subVOA of $V_L$ fixed by $2^{24}\subset 2^{24}.M_{24}$.
Such a subVOA admits the action of $(2^{24}.M_{24})/2^{24}\cong M_{24}$.
Recalling how this $2^{24}\cong\Hom(L,\Z_2)$ acts on $V_L$ (\ref{eq:def-of-tilde}),
we can see that the fixed subVOA is the lattice VOA $V_{2L}$,
where $2L$ is the sublattice of $L$ spanned by twice the vectors of $L$.
We can also check that $M_{24}$ in fact acts on this $V_{2L}$
by observing that the cocycle $\varepsilon:L\times L\to\Z_2$ restricted to $2L$ is trivial.

This lattice subVOA $V_{2L}$ has $4^{24}$ irreducible modules $V_{\lambda+2L}$ labeled by $[\lambda] \in (2L)^\ast/2L$.
If we write a basis of the odd Leech lattice $L$ as $\{e_i\}_{i=0,\ldots,23}$ and its dual basis of $L^\ast$ as $\{(e_i)^\vee\}_{i=0,\ldots,23}$,
since the odd Leech lattice is self-dual,
we can take a representative of $[\lambda] \in (2L)^\ast/2L$ in the form of $\lambda = \sum_{i=0}^{23}\lambda^i(e_i)^\vee$ with $\lambda^i\in\{0,\frac{1}{2}, 1, \frac{3}{2}\}$.
Then, the original lattice VOA $V_L$ as a module of $V_{2L}$ is the sum $\oplus_{\lambda^i\in\{0,1\}}V_{\lambda+2L}$ of the $2^{24}$ irreducible modules.

\begin{rem*}
More generally, for each $(e_{j_0})^\vee+\cdots+(e_{j_k})^\vee \in \Hom(L,\Z) \cong 2^{24}$,
the sum\\
$V_{\frac{1}{2}(e_{j_0})^\vee+\cdots+\frac{1}{2}(e_{j_k})^\vee+L}:=\oplus_{\lambda^i\in\{0,1\}}V_{\lambda+\frac{1}{2}(e_{j_0})^\vee+\cdots+\frac{1}{2}(e_{j_k})^\vee+2L}$ of the $2^{24}$ irreducible modules
is the $((e_{j_0})^\vee+\cdots+(e_{j_k})^\vee)$-twisted module of the original lattice VOA $V_L$.
\textit{(Remark ends.)}
\end{rem*}

Since this lattice VOA $V_{2L}$ is rational and $C_2$-cofinite,
the characters (bosonic partition functions) of these irreducible modules of $V_{2L}$ are closed under the modular transformations.
Note that $V_{2L}$ itself is also a module of $V_{2L}$.
Since $V_{2L}$ has a genuine $M_{24}$ symmetry,
the character of $V_{2L}$ itself exhibits representation dimensions of $M_{24}$ under the decomposition into Virasoro characters.
We may even define the analogues of the McKay-Thompson series, and their twisted versions, for this module $V_{2L}$.
However, since the modules other than $V_{2L}$ do not admit the action of $M_{24}$ in general
(otherwise, $V_L$ would admit the $M_{24}$-action, which is a contradiction),
it is difficult to directly explain the appearance of representation dimensions of $M_{24}$ in the partition function of $V_L$,
or to define such $M_{24}$-inserted characters for every module of $V_{2L}$ and expect their modular properties.
In fact, $M_{24}$ acts\footnote{
Let $V_{\lambda+2L}$ be an irreducible module of $V_{2L}$, that is, a vector space with a $V_{2L}$-action.
By applying $g\in M_{24}$ on $v\in V_{2L}$ before $v$ acts on $V_{\lambda+2L}$,
we can define another $V_{2L}$-action on $V_{\lambda+2L}$,
which makes it isomorphic to another irreducible module $V_{\lambda'+2L}$.
This is how $M_{24}$ acts on the set of the irreducible modules of $V_{2L}$.
} on the set $(2L)^\ast/2L$ of (the labels of) the irreducible modules,
but the module $V_{\lambda+2L}$ is just a projective representation of the subgroup $G_{[\lambda]} \subset M_{24}$ stabilizing $[\lambda]\in (2L)^\ast/2L$
because of the non-trivial cocycle factor,
and by putting all these information together,
we just come back to the $2^{24}.M_{24}$-action on $V_L$.

\vspace{\vertspace}

Another rather haphazard way
is to forget the information of cocycle factors,
even though it destroys the OPE structure.
Let us investigate this possibility step by step.

\vspace{\vertspace}

\noindent \textbf{Review of the definition of an automorphism of VOA}

As reviewed in Section \ref{subsec:autom-of-lattice-VOA},
the definition of an automorphism of the VOA $V_L$ is a map $F:V_L\to V_L$ such that
\begin{enumerate}
\setlength{\itemsep}{-3pt}
\leftskip -8pt
\item \label{cond:autom-C-linearity} it is a $\C$-linear automorphism on the $\C$-vector space $V_L$,
\item it preserves OPE; more precisely, $F(v_m(v')) = (F(v))_m(F(v'))$ for any $v,v'\in V_L$,
\item \label{cond:autom-preserve-Virasoro} it preserves the Virasoro element; $F(\omega)=\omega$.
\end{enumerate}
Such maps constitute the automorphism group $\Aut(V_L)$,
and it contains $\Hom(L,\Z_2).O(L)$ as a subgroup as reviewed in Section \ref{subsec:review-of-FLM}.

\vspace{\vertspace}

\noindent \textbf{Modifying the conditions to discard the effect of cocycle factors}

To discard the effect of cocycle factors,
let us modify these conditions (\ref{cond:autom-C-linearity})--(\ref{cond:autom-preserve-Virasoro})
and consider a map $F:V_L\to V_L$ such that
\begin{enumerate}
\setlength{\itemsep}{-3pt}
\leftskip -8pt
\item \label{cond:C-linearity} it is a $\C$-linear automorphism on the $\C$-vector space $V_L$,
\item \label{cond:preserve-OPE-up-to-cocycle} it can be written in the form of (\ref{eq:autom-of-lattice-VOA}):
\begin{align*}
F(\alpha_1(-m_1)\cdots\alpha_l(-m_l) e^k) = \bar{f}(\alpha_1)(-m_1) \cdots \bar{f}(\alpha_l)(-m_l) f(e^k),
\end{align*}
by using some map $f:\hat{L}\to\hat{L}$ such that $\bar{f}:L\to L; k\mapsto \overline{f(e^k)}$ is an isometry on $L$,
\item \label{cond:preserve-Virasoro} it preserves the Virasoro element; $F(\omega)=\omega$,
\item \label{cond:preserve-Virasoro-OPE} it preserves the OPE action by $\omega$; $F(\omega_m(v')) = \omega_m(F(v'))$ for any $v'\in V_L$.
\end{enumerate}
In the condition (\ref{cond:preserve-OPE-up-to-cocycle}),
such $f:\hat{L}\to\hat{L}$ satisfies
\begin{align}
f(e^ke^{k'})=\kappa^\text{0 or 1}f(e^k)f(e^{k'})
\end{align}
in $\hat{L}$.
This means that $F$ preserves OPE of monomial elements
up to cocycle factors;
more precisely, for any $v,v'\in V_L$ in the form of $\alpha_1(-m_1)\cdots\alpha_l(-m_l) e^k$,
\begin{align}
F(v_m(v'))=\pm(F(v))_m(F(v'))
\end{align}
holds,
which can be seen from the explicit formula \cite[\S3.4]{MR1745258} of $(e^k)_m(e^{k'})$.
The set of all maps $F:V_L\to V_L$ satisfying these conditions (\ref{cond:C-linearity})--(\ref{cond:preserve-Virasoro-OPE}) constitute a group under the composition,
which we write as $\LinAut(V_L;\Vir)_{O(L)}$.

The last condition (\ref{cond:preserve-Virasoro-OPE})  guarantees that the action of this group $\LinAut(V_L;\Vir)_{O(L)}$ does not change the $L_0$-eigenvalue of states of $V_L$,
where $L_0$ is the 0-th mode of the operator $T(z)$ corresponding to the Virasoro element $\omega$,
so that we can expect that $\LinAut(V_L;\Vir)_{O(L)}$ accounts for some properties of the decomposition of a partition function $\Tr_{V_L}q^{L_0-\frac{c}{24}}$ into the Virasoro characters.
Here we considered a simple partition function rather than $\Tr_{V_L}(-1)^Fy^{J_0}q^{L_0-\frac{c}{24}}$ respecting the $\cN=2$ SCA,
just for simplicity.
Fortunately,
this condition (\ref{cond:preserve-Virasoro-OPE}) automatically follows from the other conditions (\ref{cond:C-linearity})--(\ref{cond:preserve-Virasoro}),
because the action of modes $\omega_m$ of the operator $T(z) \propto -\frac{1}{2}\sum_i:(\partial X^i(z))^2:$ does not change the lattice vector $k$ of an element $\alpha_1(-m_1)\cdots\alpha(-m_l)e^k\in V$,
and hence this condition (\ref{cond:preserve-Virasoro-OPE}) does not lead to any nontrivial constraint caused by cocycle factors.

As a result, this group $\LinAut(V_L;\Vir)_{O(L)}$ turns out to be $\Map(L,\Z_2):O(L)$,
where $\Map(L,\Z_2)$ is the set of all the maps $L\to\Z_2$,
which constitutes a group under the multiplication.
It is greatly enlarged from $\Hom(L,\Z_2).O(L)$,
and in particular contains $M_{24}\subset O(L)$ preserving the momentum lattice $L=O_{24}$.

\vspace{\vertspace}

\noindent \textbf{Taking the $\cN=2$ SCA into account}

We only focused on the Virasoro algebra so far.
Can we apply a similar discussion to $\cN=2$ SCA?
The generators of $\cN=2$ SCA are the energy-momentum tensor $T(z)$, a $U(1)$ current $J(z)$ of weight 1, and supercurrents $G^\pm(z)$ of weight $\frac{3}{2}$.
As discussed in \cite{Benjamin:2015ria} and reviewed around (\ref{eq:lower-order-terms-of-partition-func}),
we can choose the $U(1)$ current $J(z)$ as $2\sqrt{-2}\partial X^0(z)$,
to reproduce the extremal elliptic genus as the partition function $\Tr_\rR(-1)^Fy^{J_0}q^{L_0-\frac{c}{24}}$ of the R sector.
This $U(1)$ current is exactly $j_1^{M_{23}}=\bbalpha^0_{-1}|0\rangle$ in (\ref{eq:M23-inv-weight-1-physics}) up to scalar multiplication,
and invariant under the action of $2^{23}.M_{23} \subset \Aut(V_L)$.
However, there is no weight-$\frac{3}{2}$ current invariant under $2^{23}.M_{23}$ as we have seen in Section \ref{subsec:inv-current-odd-Leech}.
In addition, any weight-$\frac{3}{2}$ currents invariant under the subgroup $M_{23}$ of $\LinAut(V_L;\Vir)_{O(L)}$ 
does not satisfy the OPEs of supercurrents of $\cN=2$ SCA,
as we will show in Remark of Appendix \ref{sec:N=2-SCA-in-odd-Leech}.
The subgroup of $2^{24}.M_{24}\subset\Aut(V_L)$ preserving the supercurrents we will construct in Appendix \ref{sec:N=2-SCA-in-odd-Leech} would be too small to provide any useful implications
(see the last paragraph of Appendix \ref{sec:N=2-SCA-in-odd-Leech}).

Therefore, what we can do is to deal with the subalgebra $\cA$ of the $\cN=2$ SCA generated by $T(z)$ and $J(z)$ only,
and consider a group $\LinAut(V_L;\cA)_{O(L)}$ consisting of maps $F:V_L\to V_L$ such that
\begin{enumerate}
\setlength{\itemsep}{-3pt}
\leftskip -8pt
\item it is a $\C$-linear automorphism on the $\C$-vector space $V_L$,
\item it can be written in the form of (\ref{eq:autom-of-lattice-VOA}):
\begin{align*}
F(\alpha_1(-m_1)\cdots\alpha_l(-m_l) e^k) = \bar{f}(\alpha_1)(-m_1) \cdots \bar{f}(\alpha_l)(-m_l) f(e^k),
\end{align*}
by using some map $f:\hat{L}\to\hat{L}$ such that $\bar{f}:L\to L; k\mapsto \overline{f(e^k)}$ is an isometry on $L$,
\item it preserves the Virasoro element $F(\omega)=\omega$
and the $U(1)$ current $F(j_1^{M_{23}}) = j_1^{M_{23}}$,
\item it preserves the OPE action by $\omega$ and $j_1^{M_{23}}$; $F(\omega_m(v')) = \omega_m(F(v'))$ and $F((j_1^{M_{23}})_m(v')) = (j_1^{M_{23}})_m(F(v'))$ for any $v'\in V_L$.
\end{enumerate}
Then the group $\LinAut(V_L;\cA)_{O(L)}$ contains $M_{23}$ as a subgroup, through a similar discussion to that on $M_{24}\subset\LinAut(V_L;\Vir)_{O(L)}$,
and it might account for the appearance of representation dimensions of $M_{23}$ in the decomposition of the partition function $\Tr_{V_L}(-1)^Fy^{J_0}q^{L_0-\frac{c}{24}}$ into the characters of the algebra $\cA$.

However, if we put together the characters of the algebra $\cA$ into the characters of $\cN=2$ SCA,
then we can only say that the resulting coefficients are virtual representation dimensions (where not only addition but also subtraction of irreducible representation dimensions are allowed) of $M_{23}$.
Moreover, in the first place, the whole group $\LinAut(V_L;\cA)_{O(L)}$ is much larger than $M_{23}$;
it is in fact $\Map(L,\Z_2):(2^{11}:M_{23})$,
where $2^{11}:M_{23}$ is the subgroup of $O(L)\cong 2^{12}:M_{24}$ stabilizing the first entry of the lattice vectors.
So it is unclear why the subgroup $M_{23}$ is only in effect,
compared with other moonshine phenomena such as the monstrous moonshine and the Conway moonshine,
where the automorphism groups of their VOAs are precisely the monster group $\mathbb{M}$ and the Conway group $\Co_1$.
Anyway, since we have destroyed the OPE structure, and hence the VOA structure,
it is difficult to imagine that the discussions here would yield so fruitful outcomes as, for example, the monstrous moonshine.

\vspace{\vertspace}

In conclusion, the situation surrounding the moonshine-type observation on the extremal elliptic genus seems more mysterious than we think.
We have only dealt with the VOA of odd Leech lattice CFT,
or the fermionized $(A_1)^{24}$ lattice CFT,
but of course there is a possibility that
another new theory explains the observation in a nice way. 
It would be interesting if that is the case,
and if it also provides some implications for other moonshine phenomena including the K3 Mathieu moonshine.

%%%%%%%%%%%%%%%%%%%%%%%%%%%%%%
\section*{Acknowledgements}
\label{sec:acknowledge}
\addcontentsline{toc}{section}{\hspace{18pt}Acknowledgements}
%%%%%%%%%%%%%%%%%%%%%%%%%%%%%%
The author thanks Yuji Tachikawa for a lot of discussions and comments on the draft,
and Kohki Kawabata and Shinichiro Yahagi for useful comments.
The author thanks Scott Carnahan, Gerald H\"{o}hn, and Martin Seysen for providing information on the existing results of the extension of the Conway group.

The author is supported by FoPM (WINGS Program of the University of Tokyo), 
JSPS Research Fellowship for Young Scientists,
JSPS KAKENHI Grant No.\ JP23KJ0650,
and in part by WPI Initiative, MEXT, Japan at Kavli IPMU, the University of Tokyo.

\appendix

%%%%%%%%%%%%%%%%%%%%%%%%%%%%%%
\section{Group Extension}
\label{sec:group-extension}
%%%%%%%%%%%%%%%%%%%%%%%%%%%%%%
In this Appendix \ref{sec:group-extension},
we give an elementary introduction to group extensions and a few related topics.
In Section \ref{subsec:grp-ext-and-grp-coh},
we review basic notions of group extensions following \cite[Ch.\ IV]{MR0672956},
and see that the equivalence classes of group extensions of a group $G$ by a $G$-module $N$ are in one-to-one correspondence with the cohomology classes of the second group cohomology $H^2(G,N)$.
In the next Section \ref{subsec:cent-ext-and-comm-map}, we focus on central extensions of abelian groups,
and establish another one-to-one correspondence with commutator maps
in the case of free abelian groups,
following \cite[\S5.2]{MR0996026}.
The last Section \ref{subsec:thm-on-aut-grp-of-central-ext} is just a review of a certain theorem \cite[Prop.\ 5.4.1]{MR0996026} on the automorphism group of a central extension,
which plays a vital role in the discussion of the main text.

%%%%%%%%%%%%%%%%%%%%%%%%%%%%%%
\subsection{Group Extensions and Group Cohomology}
\label{subsec:grp-ext-and-grp-coh}
%%%%%%%%%%%%%%%%%%%%%%%%%%%%%%
An \emph{extension} of a group $G$ by a group $N$ is a short exact sequence of groups and homomorphisms
\begin{align}
1 \to N \overset{i}{\to} \hat{G} \overset{\pi}{\to} G \to 1.
\label{eq:def-of-group-extension}
\end{align}
(Be aware that some literature calls it an extension \emph{of $N$ by $G$}.)
When there is no risk of confusion in the homomorphisms constituting the short exact sequence, we only say that $\hat{G}$ is an extension of $G$ by $N$.
(But note that the equivalence of group extension, defined later, classifies not only $\hat{G}$, but also the whole short exact sequence.)
We also use the notation $N.G$ for any extension of $G$ by $N$.
Since $i(N)$ is the kernel of $\pi$, $N$ can be regarded as a normal subgroup of $\hat{G}$,
and the quotient group $\hat{G}/i(N)$ is isomorphic to $G$.

Let us take a set-theoretical section (not necessarily a group homomorphism) $s:G\to\hat{G}$.
Since $s(g)s(g')s(gg')^{-1} \in \ker\pi = i(N)$ for $g,g' \in G$, we can define a function $\epsilon:G\times G\to N$ which measures how $s$ failures to be a homomorphism by
\begin{align}
s(g)s(g') = i(\epsilon(g,g'))s(gg').
\label{eq:section-vs-homomorphism}
\end{align}

To look into the structure of $\hat{G}$, we use the bijection $N\times G\to \hat{G}\ ;(n,g)\mapsto i(n)s(g)$.
Recalling that $i(N)$ is a normal subgroup of $\hat{G}$, we can define the action of $h\in \hat{G}$ on $N$ by
\begin{align}
\hat{\varphi}_{h}(n) = i^{-1}(hi(n)h^{-1}),
\label{eq:action-of-Ghat-on-N}
\end{align}
and then the multiplication law of $\hat{G}$ can be calculated as
\begin{align}
i(n)s(g) \cdot i(n')s(g') = i(n \cdot \hat{\varphi}_{s(g)}(n') \cdot \epsilon(g,g'))s(gg').
\label{eq:multi-law-of-grp-ext}
\end{align}
This multiplication law (i.e.\ the group structure of $\hat{G}$) becomes simpler in the following cases:
\begin{itemize}
\item If we can take a section $s$ which is a group homomorphism (i.e.\ the short exact sequence (\ref{eq:def-of-group-extension}) \emph{splits}), then the cocycle becomes trivial $\epsilon(g,g')=1_N$.
In addition, we can define a $G$-action $\sigma$ on $N$ by $\sigma_g:=\hat{\varphi}_{s(g)}$.
As a result, $\hat{G}$ is isomorphic to the semidirect product $N \rtimes_\sigma G$,
and it is also denoted by $N:G$.
In particular, $G$ can be regarded as a subgroup of $\hat{G}$ by $s$.
\begin{itemize}
\item Moreover, if the $G$-action $\sigma$ on $N$ is trivial,
then $\hat{G}$ is isomorphic to the direct product $N \times G$.
\end{itemize}
\item If $N$ is abelian, then for a given $g\in G$, any element $h\in\hat{G}$ such that $\pi(h)=g$ defines the same action $\hat{\varphi}_h$ on $N$, so the $\hat{G}$-action $\hat{\varphi}$ reduces to a $G$-action $\varphi$ on $N$.
\begin{itemize}
\item Moreover, if (and only if) $i(N)$ is in the center of $\hat{G}$, the $G$-action $\varphi$ becomes trivial.
In this case, $\hat{G}$ is called a \emph{central extension}.
\end{itemize}
\end{itemize}
In the following, \underline{we assume $N$ is abelian}.
An abelian group on which a group $G$ acts is called a \emph{$G$-module}.
Since there is the $G$-action $\varphi$ on $N$,
$N$ is a $G$-module.

\vspace{\vertspace}

From the associativity $(s(g)s(g'))s(g'')=s(g)(s(g')s(g''))$, the function $\epsilon$ turns out to be an $N$-valued 2-cocycle:
\begin{align}
\epsilon(g,g')\epsilon(gg',g'') = \varphi_g(\epsilon(g',g''))\epsilon(g,g'g'').
\label{eq:cocycle-condition-general}
\end{align}
If we take another section $s'$, then the cocycle $\epsilon'$ for it differs from $\epsilon$ by only coboundary.
To see it, we define $\zeta:G\to N$ as the difference of $s(g)$ and $s'(g)$:
\begin{align}
s'(g) = i(\zeta(g))s(g).
\end{align}
Then we can calculate $\epsilon'$ based on the definition (\ref{eq:section-vs-homomorphism}) as
\begin{align}
\epsilon'(g,g') = \epsilon(g,g')\varphi_g(\zeta(g'))\zeta(gg')^{-1}\zeta(g),
\label{eq:cohomologous-cocycles}
\end{align}
which shows that $\epsilon'$ and $\epsilon$ differ by the coboundary $d\zeta$.
Conversely, if a cocycle $\epsilon'$ is cohomologous to the cocycle $\epsilon$ of the section $s$ as $\epsilon' = \epsilon d\zeta$,
then $\epsilon'$ is the cocycle of the section $s'$ such that $s'(g)=i(\zeta(g))s(g)$.

\vspace{\vertspace}

Two extensions $1\to N\overset{i}{\to}\hat{G}\overset{\pi}{\to}G\to1$ and $1\to N\overset{i'}{\to}\hat{G}'\overset{\pi'}{\to}G\to1$ are said to be \emph{equivalent} if there exists a homomorphism $\psi:\hat{G}\to\hat{G}'$ such that the diagram
\begin{align}
\xymatrix{
 & & \hat{G} \ar[dd]^-\psi \ar[rd]^-\pi & &\\
1 \ar[r] & N \ar[ru]^-i \ar[rd]_-{i'} & & G \ar[r] & 1\\
 & & \hat{G}' \ar[ru]_-{\pi'} & &
}
\label{eq:group-ext-equiv-diagram}
\end{align}
commutes.
Such $\psi$ is an isomorphism by the short five lemma.\footnote{
For the short five lemma to be applied,
$\psi$ should be a homomorphism, not just a map.
In fact, for any two extensions $\hat{G},\hat{G}'$ of $G$ by $N$, there exists a map $\psi$ such that the diagram (\ref{eq:group-ext-equiv-diagram}) commutes (e.g.\ just $i(n)s(g)\mapsto i'(n)s'(g)$ for sections $s:G\to\hat{G}$ and $s':G\to\hat{G}'$ normalized as in (\ref{eq:normalized-section})), but it is of course not an isomorphism in general.
}
The difference of the cocycle $\epsilon$ for a section $s:G\to\hat{G}$ and the cocycle $\epsilon'$ for a section $s':G\to\hat{G}'$ is again a coboundary,
because $\epsilon$ is also the cocycle for the section $\psi\circ s:G\to\hat{G}'$.
Note that two extensions can be non-equivalent even if $\hat{G}$ and $\hat{G}'$ are isomorphic as groups.\footnote{
For example, for additive groups $G=\Z_2\times\Z_2$ and $N=\Z_2$,
(i) $\hat{G}=\Z_4\times\Z_2$ with $i(1)=(2,0)$ and $\pi((a,b)) = (a,b) \! \mod 2$, and 
(ii) $\hat{G}=\Z_4\times\Z_2$ with $i(1)=(2,0)$ and $\pi((a,b)) = (b,a) \! \mod 2$
are inequivalent extensions.
}
In this sense, the equivalence of group extension classifies not only $\hat{G}$, but also how $N$ and $G$ are incorporated in it.

So far, for a given $G$-module $N$, we have established a map
\begin{align}
\flat : \left\{
\begin{array}{c}
\text{extensions of $G$ by $N$}\\
\text{compatible with the action $G\curvearrowright N$}
\end{array}
\right\} / \text{equivalence} \to H^2(G,N).
\label{eq:equiv-extensions-vs-2nd-cohomology}
\end{align}

\vspace{\vertspace}

We can construct a map $\sharp$ in the inverse direction of (\ref{eq:equiv-extensions-vs-2nd-cohomology}) as follows.
For a given $G$-module $N$ and a 2-cocycle $\epsilon:G\times G\to N$,
we can construct a group extension $\hat{G}_\epsilon$ as a set $N\times G$ with the multiplication\footnote{
Under this multiplication,
the identity element of $\hat{G}_\epsilon$ is $(\epsilon(1_G,1_G)^{-1},1_G)$,
and the inverse of $(n,g)$ is $(\epsilon(1_G,1_G)^{-1}g^{-1}(n^{-1})\epsilon(g^{-1},g)^{-1}, g^{-1})$.
We can check them by using
$\epsilon(1_G,g)=\epsilon(1_G,1_G)$,
$\epsilon(g,1_G)=g(\epsilon(1_G,1_G))$, and
$\epsilon(g,g^{-1})\epsilon(1_G,g)=g(\epsilon(g^{-1},g))\epsilon(g,1_G)$,
which all follow from the cocycle condition (\ref{eq:cocycle-condition-general}).
\label{fn:identity-and-inverse-of-grp-ext}
}
\begin{align}
(n,g) \cdot (n',g') = (n \cdot g(n') \cdot \epsilon(g,g'), gg'),
\label{eq:construct-extension-from-cocycle}
\end{align}
together with group homomorphisms
\begin{align}
& i:N\to\hat{G}_\epsilon\ ;n\mapsto(n\cdot\epsilon(1_G,1_G)^{-1},1_G),\\
& \pi:\hat{G}_\epsilon\to G\ ;(n,g)\mapsto g,
\end{align}
constituting the short exact sequence.

For another 2-cocycle $\epsilon'$, if it differs from $\epsilon$ by the coboundary $d\zeta$ as in (\ref{eq:cohomologous-cocycles}),
then
\begin{align}
\psi: \hat{G}_\epsilon \to \hat{G}_{\epsilon'} \ ; (n,g) \mapsto (n\zeta(g)^{-1},g)
\end{align}
defines an homomorphism to show the equivalence of $\hat{G}_\epsilon$ and $\hat{G}_{\epsilon'}$.
Therefore, we have established the map $\sharp$ in the inverse direction of (\ref{eq:equiv-extensions-vs-2nd-cohomology}).

\vspace{\vertspace}

In fact, $\flat$ and $\sharp$ are the inverse of each other.
To see it, we first note that any 2-cocycle $\epsilon$ satisfies
\begin{align}
\epsilon(1_G,g) = \epsilon(1_G,1_G) \quad \text{for any $g\in G$},
\label{eq:cocycle-for-1-general}
\end{align}
which follow from the cocycle condition (\ref{eq:cocycle-condition-general}).
We also have
\begin{align}
s(1_G) = i(\epsilon(1_G,1_G)),
\label{eq:cocycle-for-section-1-general}
\end{align}
from (\ref{eq:section-vs-homomorphism}),
for any section $s:G\to\hat{G}$ of an extension and the cocycle $\epsilon$ for it.

To see $\sharp\circ\flat$ is an identity map,
for a given extension $\hat{G}$, take a section $s:G\to\hat{G}$ 
and construct the extension $\hat{G}_\epsilon$ from the cocycle $\epsilon$ for $s$.
Then it is equivalent to the original extension $\hat{G}$ by
\begin{align}
\psi : \hat{G}_\epsilon \to \hat{G} \ ; (n,g) \mapsto i(n)s(g).
\end{align}
The most nontrivial part is $\psi\circ\text{($i$ for $\hat{G}_\epsilon$)} = \text{($i$ for $\hat{G}$)}$,
which follows from
\begin{align}
\psi((n\epsilon(1_G,1_G)^{-1},1_G)) = i(n\epsilon(1_G,1_G)^{-1})s(1_G) \overset{\mathrm{(\ref{eq:cocycle-for-section-1-general})}}{=} i(n).
\end{align}

To see $\flat\circ\sharp$ is an identity map, starting from a given cocycle $\epsilon$, construct the extension $\hat{G}_\epsilon$ from it.
Then the section $s:G\to\hat{G}_{\epsilon}\ ;g\mapsto(1_N,g)$ gives back the cocycle $\epsilon$ because
\begin{align}
s(g)s(g') & = (\epsilon(g,g'),gg') = (\epsilon(g,g')\epsilon(1_G,gg')^{-1},1_G) \cdot (1_N,gg')\\
& \overset{\mathrm{(\ref{eq:cocycle-for-1-general})}}{=} (\epsilon(g,g')\epsilon(1_G,1_G)^{-1},1_G) \cdot (1_N,gg') = i(\epsilon(g,g'))s(gg').
\end{align}

To summarize above discussions, we finally obtained the following theorem.
\begin{thm}
\label{thm:grp-ext-and-grp-coh}
For a group $G$ and a $G$-module $N$, there exists a one-to-one correspondence
\begin{align}
\left\{
\begin{array}{c}
\text{extensions of $G$ by $N$}\\
\text{compatible with the action $G\curvearrowright N$}
\end{array}
\right\} / \text{equivalence} \overset{\flat}{\underset{\sharp}{\rightleftarrows}} H^2(G,N).
\end{align}
\end{thm}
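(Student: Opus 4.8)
The plan is to exhibit the two maps $\flat$ and $\sharp$ explicitly and check that they are mutually inverse bijections, so the theorem becomes a summary of the constructions already carried out above. For $\flat$, given an extension $1\to N\overset{i}{\to}\hat{G}\overset{\pi}{\to}G\to 1$ compatible with the action $G\curvearrowright N$, I would choose any set-theoretic section $s:G\to\hat{G}$ and read off the function $\epsilon:G\times G\to N$ from $s(g)s(g')=i(\epsilon(g,g'))s(gg')$ as in (\ref{eq:section-vs-homomorphism}). Associativity of the product in $\hat{G}$ forces $\epsilon$ to satisfy the $2$-cocycle identity (\ref{eq:cocycle-condition-general}), so $\epsilon$ determines a class in $H^2(G,N)$. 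The essential well-definedness point is that this class is independent of the section: replacing $s$ by $s'$ with $s'(g)=i(\zeta(g))s(g)$ alters $\epsilon$ only by the coboundary $d\zeta$, exactly as in (\ref{eq:cohomologous-cocycles}); and an equivalence of extensions sends a section on one side to a section on the other while leaving the cocycle unchanged. Hence $\flat$ descends to a well-defined map on equivalence classes.

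For $\sharp$, given a $2$-cocycle $\epsilon$, I would set $\hat{G}_\epsilon:=N\times G$ with the twisted product (\ref{eq:construct-extension-from-cocycle}) and verify it is a group: associativity is precisely the cocycle condition, and the identity and inverse take the corrected form noted above. Equipping $\hat{G}_\epsilon$ with the evident $i$ and $\pi$ produces an extension, and a cohomology $\epsilon'=\epsilon\,d\zeta$ yields the isomorphism $(n,g)\mapsto(n\zeta(g)^{-1},g)$ fitting into the commuting diagram (\ref{eq:group-ext-equiv-diagram}), so $\sharp$ too is well-defined on cohomology classes.

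Finally I would verify $\sharp\circ\flat=\mathrm{id}$ and $\flat\circ\sharp=\mathrm{id}$. For the first, starting from $\hat{G}$ with a section $s$, the map $(n,g)\mapsto i(n)s(g)$ is an equivalence $\hat{G}_\epsilon\cong\hat{G}$; for the second, the section $g\mapsto(1_N,g)$ of $\hat{G}_\epsilon$ returns the original cocycle $\epsilon$. I expect the only genuinely fiddly point to be the bookkeeping around non-normalized cocycles: the identities $\epsilon(1_G,g)=\epsilon(1_G,1_G)$ in (\ref{eq:cocycle-for-1-general}) and $s(1_G)=i(\epsilon(1_G,1_G))$ in (\ref{eq:cocycle-for-section-1-general}) are exactly what is needed to show that the structure maps $i$ agree under the two equivalences, and everything else follows directly from the cocycle condition. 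One could short-circuit this by first normalizing so that $\epsilon(1_G,1_G)=1_N$, but tracking the correction terms is harmless and matches the construction preceding the statement.
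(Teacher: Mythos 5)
Your proposal is correct and follows essentially the same route as the paper: it constructs $\flat$ via a section and its cocycle, $\sharp$ via the twisted product (\ref{eq:construct-extension-from-cocycle}), and verifies mutual inverses with exactly the maps $(n,g)\mapsto i(n)s(g)$ and $g\mapsto(1_N,g)$, handling the non-normalized case through (\ref{eq:cocycle-for-1-general}) and (\ref{eq:cocycle-for-section-1-general}) just as the paper does.
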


\vspace{\vertspace}

Lastly, we mention the normalization of sections and cocycles.
If we take a section $s:G\to\hat{G}$ satisfying the normalization condition
\begin{align}
s(1_G) = 1_{\hat{G}},
\label{eq:normalized-section}
\end{align}
then the cocycle $\epsilon$ for it satisfies the normalization condition
\begin{align}
\epsilon(1_G,1_G)=1_N.
\label{eq:normalized-cocycle}
\end{align}
Therefore, by Theorem \ref{thm:grp-ext-and-grp-coh}, any cohomology class in $H^2(G,N)$ has at least one normalized cocycle.
In fact, we can construct it explicitly;
for any cocycle $\epsilon'$, if we define $\zeta(g):=\epsilon'(g,g)^{-1}$, then the modified cocycle $\epsilon:=\epsilon'\cdot d\zeta$ satisfies the normalization (\ref{eq:normalized-cocycle}).
Hence, in many cases, we can restrict our attention to the normalized sections and cocycles, without loss of generality.

%%%%%%%%%%%%%%%%%%%%%%%%%%%%%%
\subsection{Central Extensions of Free Abelian Groups and Commutator Maps}
\label{subsec:cent-ext-and-comm-map}
%%%%%%%%%%%%%%%%%%%%%%%%%%%%%%
Let \underline{$G$ be an abelian group} and $1 \to N \overset{i}{\to} \hat{G} \overset{\pi}{\to} G \to 1$ be a central extension of $G$.
Now, any commutator $[h,h']:=hh'h^{-1}h'^{-1}$ of $h,h'\in \hat{G}$ is in $\ker\pi = i(N)$, and hence in the center of $\hat{G}$.

If we take a section $s:G\to\hat{G}$, we can define a function $c:G\times G\to N$ by
\begin{align}
c(g,g') = i^{-1}([s(g),s(g')]).
\label{eq:def-of-comm-map}
\end{align}
If we take another section $s':G\to\hat{G}$, 
then we have $[s'(g),s'(g')]=[s(g),s(g')]$,
which follows from the fact that $s'(g)s(g)^{-1}$ is in $\ker\pi=i(N)$ and hence in the center of $\hat{G}$.
Therefore, the function $c$ does not depend on the choice of the section.
This $c$ is called the \emph{commutator map} associated to the central extension.

The properties of the commutator, $[h,h]=1$ and $[h,h']=[h',h]^{-1}$, respectively translates to the \emph{alternating} property
\begin{align}
c(g,g) = 1_N,
\label{eq:comm-map-alter}
\end{align}
and the \emph{antisymmetric} property
\begin{align}
c(g,g') = c(g',g)^{-1},
\label{eq:comm-map-antisym}
\end{align}
of the commutator map $c$.

From the fact that any commutator of $\hat{G}$ is in the center of $\hat{G}$, we also have the properties
\begin{align}
& [hh',h''] = [h,h''][h',h''],\\
& [h,h'h''] = [h,h'][h,h''],
\end{align}
for any $h,h',h''\in\hat{G}$.
They translates to the \emph{bilinearity}
\begin{align}
& c(gg',g'') = c(g,g'') c(g',g''), \label{eq:comm-map-linearity-1} \\
& c(g,g'g'') = c(g,g') c(g,g''), \label{eq:comm-map-linearity-2}
\end{align}
of the commutator map $c$.
Under the bilinearity (\ref{eq:comm-map-linearity-1}, \ref{eq:comm-map-linearity-2}), the antisymmetric property (\ref{eq:comm-map-antisym}) follows from the alternating property (\ref{eq:comm-map-alter}).\footnote{
The converse (``antisymmetric $\Rightarrow$ alternating'' under bilinearity) holds if $N$ does not have order-2 elements.
}

In terms of the cocycle $\epsilon$ of the section $s$, defined in (\ref{eq:section-vs-homomorphism}),
the commutator map is
\begin{align}
c(g,g') = \epsilon(g,g') \epsilon(g',g)^{-1}.
\label{eq:comm-map-vs-cocycle}
\end{align}
This (\ref{eq:comm-map-vs-cocycle}) holds even if we replace $\epsilon$ with another cocycle $\epsilon'$ cohomologous to $\epsilon$,
because such $\epsilon'$ is, as we have seen around (\ref{eq:cohomologous-cocycles}), just a cocycle for another section.
(Or, we can explicitly calculate $\epsilon'(g,g') \epsilon'(g',g)^{-1}=\epsilon(g,g') \epsilon(g',g)^{-1}$.)
Therefore, (\ref{eq:comm-map-vs-cocycle}) establishes a map
\begin{align}
\flat_c : H^2(G,N) \to \left\{\begin{array}{c}
\text{alternating bilinear maps}\\
\text{$c:G\times G\to N$}
\end{array}\right\},
\end{align}
for abelian groups $G$ and $N$, where $G$ acts on $N$ trivially.

\vspace{\vertspace}

To obtain the inverse $\sharp_c$ of this map $\flat_c$, 
we further assume that \underline{{$G$ be a free abelian group}} of finite rank $r$, and take a $\Z$-basis $e_0,\ldots,e_{r-1}$ of $G$.
For a given alternating $\Z$-bilinear map $c:G\times G\to N$, we define a $\Z$-bilinear map $\epsilon_c:G\times G\to N$ by the bilinear extension of
\begin{align}
\epsilon_c(e_i,e_j) = \left\{\begin{array}{ll}
c(e_i,e_j) & (i>j),\\
1_N & (i\leq j).
\end{array}\right.
\label{eq:cocycle-from-comm-map}
\end{align}
Since the action of $G$ on $N$ is trivial in the current situation,
a $\Z$-bilinear map $G \times G\to N$ is automatically a 2-cocycle.
Now, this cocycle $\epsilon_c$ satisfies (\ref{eq:comm-map-vs-cocycle}) for the given $c$.
Therefore, $c\mapsto\epsilon_c$ establishes a map
\begin{align}
\sharp_c : \left\{\begin{array}{c}
\text{alternating $\Z$-bilinear maps}\\
c:G\times G\to N
\end{array}\right\} \to H^2(G,N),
\end{align}
for a free abelian group $G$ of finite rank and an abelian group $N$,
and $\flat_c\circ\sharp_c$ is an identity map.

To see that $\sharp_c\circ\flat_c$ is also an identity map,
we start from a given cocycle $\epsilon$,
consider the commutator map $c_\epsilon$ for $\epsilon$ as in (\ref{eq:comm-map-vs-cocycle}),
and show that the cocycle $\epsilon_{c_\epsilon}$ constructed from this $c_\epsilon$ as in (\ref{eq:cocycle-from-comm-map}) is in the same cohomology class as the original cocycle $\epsilon$ belongs to.
Let $\hat{G}_\epsilon$ be the central extension constructed from $\epsilon$ as in (\ref{eq:construct-extension-from-cocycle}).
Recall that the cocycle for the section $s:G\to\hat{G}_\epsilon\ ;g\mapsto(1_N,g)$ is $\epsilon$ itself,
and hence the commutator map associated to $\hat{G}_\epsilon$ is of course $c_\epsilon$.
Take another section $s':G\to\hat{G}_\epsilon$ as follows:
set $s'(e_0), \ldots, s'(e_{r-1})$ to any element satisfying $\pi(s'(e_i))=e_i$, say $s'(e_i)=s(e_i)$,
and for a general element $e_0^{k^0} \cdots e_{r-1}^{k^{r-1}}\in G$, define
\begin{align}
s'(e_0^{k^0} \cdots e_{r-1}^{k^{r-1}}) = s'(e_0)^{k^0} \cdots s'(e_{r-1})^{k^{r-1}}.
\label{eq:section-to-prove-1to1-with-comm-maps}
\end{align}
This $s':G\to\hat{G}_\epsilon$ is well-defined because $G$ is a free abelian group, that is, torsion-free.\footnote{
If $G$ is a finitely generated abelian group,
where torsion is allowed,
then we cannot take a well-defined section $s'$ in the form of (\ref{eq:section-to-prove-1to1-with-comm-maps}) in general,
and in particular,
we cannot transform (\ref{eq:eqs-to-prove-1to1-with-comm-maps-4}) into (\ref{eq:eqs-to-prove-1to1-with-comm-maps-5}).
In fact, there are inequivalent extensions of $G$ with the same commutator map in this case.
For example, central extensions of an elementary abelian 2-group $G=(\Z_2)^n$ by $\Z_2$ up to equivalence
are in one-to-one correspondence with quadratic forms $G \to \Z_2$ \cite[Prop.\ 5.3.3]{MR0996026}.
The bilinear form associated to the quadratic form is the commutator map,
and there are multiple quadratic forms with the same associated bilinear form;
its degree of freedom is adding linear forms $G \to \Z_2$ to the quadratic form \cite[Remark 5.3.2]{MR0996026}.
If we further assume that $N$ is a divisible group,
then we can take a well-defined section $s'$ in the form of (\ref{eq:section-to-prove-1to1-with-comm-maps})
even if $G$ has a torsion part,
and therefore we can still establish the one-to-one correspondence with commutator maps.
For example, \cite[Prop.\ 2.6]{MR1776075} deals with the case of $N=k^\times$,
which is the multiplicative group of a field $k$.
}
Note that this $s'$ is not necessarily linear, or a homomorphism, because $s'(e_i)$'s are not necessarily commutative.
Then the cocycle $\epsilon'$ for this section $s'$ coincides with $\epsilon_{c_\epsilon}$, because
\begin{align}
& s'(e_0^{k^0} \cdots e_{r-1}^{k^{r-1}}) \cdot s'(e_0^{l^0} \cdots e_{r-1}^{l^{r-1}}) \\
& = s'(e_0)^{k^0} \cdots s'(e_{r-1})^{k^{r-1}} \cdot s'(e_0)^{l^0} \cdots s'(e_{r-1})^{l^{r-1}}\\
& = s'(e_0)^{k^0} \cdots s'(e_{r-2})^{k^{r-2}} \cdot s'(e_0)^{l^0} \cdots s'(e_{r-1})^{k^{r-1}+l^{r-1}} i(\prod_{r-1>j}c_\epsilon(e_{r-1},e_j)^{k^{r-1}l^j})\\
& = \cdots = s'(e_0)^{k^0+l^0} \cdots s'(e_{r-1})^{k^{r-1}+l^{r-1}} i(\prod_{i>j}c_\epsilon(e_i,e_j)^{k^il^j}) \label{eq:eqs-to-prove-1to1-with-comm-maps-4}\\
& = s'(e_0^{k^0+l^0} \cdots e_{r-1}^{k^{r-1}+l^{r-1}}) i(\epsilon_{c_\epsilon}(e_0^{k^0} \cdots e_{r-1}^{k^{r-1}}, e_0^{l^0} \cdots e_{r-1}^{l^{r-1}})), \label{eq:eqs-to-prove-1to1-with-comm-maps-5}
\end{align}
where in the second equation, we used the fact that the commutator map $c_\epsilon$ does not depend on the choice of the section $s$ or $s'$,
as we saw above.
Therefore, $\epsilon$ and $\epsilon_{c_\epsilon}$ are the cocycles of the different sections $s$ and $s'$ of the same extension $\hat{G}_\epsilon$,
and hence they differ only by the coboundary,
as we have seen around (\ref{eq:cohomologous-cocycles}).

\vspace{\vertspace}

We finally obtained the following theorem.
\begin{thm}
\label{thm:cent-ext-and-comm-map}
For a free abelian group $G$ of finite rank and an abelian group $N$, there exist one-to-one correspondences
\begin{align}
\left\{ \text{central extensions of $G$ by $N$} \right\} / \text{equivalence} & \overset{\flat}{\underset{\sharp}{\rightleftarrows}} H^2(G,N)\\
& \overset{\flat_c}{\underset{\sharp_c}{\rightleftarrows}} \left\{\begin{array}{c}
\text{alternating $\Z$-bilinear maps}\\
c:G\times G\to N
\end{array}\right\},
\end{align}
where in $H^2(G,N)$, $N$ is regarded as a $G$-module by the trivial $G$-action.
\end{thm}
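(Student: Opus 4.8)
The plan is to assemble the two chains of bijections established in this subsection: the top arrows $\flat,\sharp$ relating equivalence classes of central extensions to $H^2(G,N)$, and the bottom arrows $\flat_c,\sharp_c$ relating $H^2(G,N)$ to alternating $\Z$-bilinear maps. I would prove each correspondence in turn and then compose them.

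First I would dispose of the top correspondence by specializing Theorem \ref{thm:grp-ext-and-grp-coh} to the trivial $G$-action. The key observation is that, since $N$ is abelian, an extension $\hat{G}$ is \emph{central} precisely when the induced $G$-action $\varphi$ on $N$ is trivial, as recorded just after (\ref{eq:multi-law-of-grp-ext}). Thus equivalence classes of central extensions of $G$ by $N$ coincide with equivalence classes of extensions compatible with the trivial action, and Theorem \ref{thm:grp-ext-and-grp-coh} supplies the bijection $\flat,\sharp$ with $H^2(G,N)$ for the trivial $G$-module $N$. No freeness of $G$ is needed for this half.

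Next I would verify that $\flat_c$ and $\sharp_c$ are mutually inverse. For $\flat_c$ I would check that the antisymmetrization $c(g,g')=\epsilon(g,g')\epsilon(g',g)^{-1}$ in (\ref{eq:comm-map-vs-cocycle}) descends to cohomology classes, since under the trivial action a coboundary $d\zeta$ is symmetric in its two arguments and hence cancels, and that the resulting $c$ is alternating and $\Z$-bilinear by the commutator identities (\ref{eq:comm-map-alter})--(\ref{eq:comm-map-linearity-2}). The composition $\flat_c\circ\sharp_c=\mathrm{id}$ is then immediate: for alternating bilinear $c$ the cocycle $\epsilon_c$ of (\ref{eq:cocycle-from-comm-map}) satisfies $\epsilon_c(e_i,e_j)\epsilon_c(e_j,e_i)^{-1}=c(e_i,e_j)$ on basis pairs, so antisymmetrizing recovers $c$ on generators and hence everywhere by bilinearity.

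The hard part will be $\sharp_c\circ\flat_c=\mathrm{id}$, i.e.\ showing that for a given cocycle $\epsilon$ the cocycle $\epsilon_{c_\epsilon}$ rebuilt from its commutator map $c_\epsilon$ is cohomologous to $\epsilon$, and this is exactly where freeness of $G$ enters. The strategy I would follow is to exhibit both cocycles as arising from two sections of the \emph{same} extension $\hat{G}_\epsilon$ built from $\epsilon$ via (\ref{eq:construct-extension-from-cocycle}): the tautological section $s(g)=(1_N,g)$ returns $\epsilon$, while the section $s'$ defined multiplicatively on a fixed $\Z$-basis by (\ref{eq:section-to-prove-1to1-with-comm-maps}) returns $\epsilon_{c_\epsilon}$. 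The core computation is the commutator rearrangement (\ref{eq:eqs-to-prove-1to1-with-comm-maps-4}): sliding each $s'(e_i)^{k^i}$ past the factors $s'(e_j)^{l^j}$ with $j<i$ produces precisely $\prod_{i>j}c_\epsilon(e_i,e_j)^{k^il^j}$, which is the value of $\epsilon_{c_\epsilon}$. Since two sections of a single extension yield cohomologous cocycles, as seen around (\ref{eq:cohomologous-cocycles}), the two classes agree. I would emphasize that the whole subtlety resides in the well-definedness of $s'$: the multiplicative formula (\ref{eq:section-to-prove-1to1-with-comm-maps}) is consistent only because $G$ is torsion-free, and over a group with torsion the correspondence genuinely fails, so freeness is not a technical convenience but an essential hypothesis.
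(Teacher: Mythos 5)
Your proposal is correct and follows essentially the same route as the paper: the top bijection comes from Theorem \ref{thm:grp-ext-and-grp-coh} specialized to the trivial action (central $\Leftrightarrow$ trivial induced action), $\flat_c\circ\sharp_c=\mathrm{id}$ is verified on basis pairs via (\ref{eq:cocycle-from-comm-map}), and $\sharp_c\circ\flat_c=\mathrm{id}$ is proved by comparing the tautological section of $\hat{G}_\epsilon$ with the multiplicative section (\ref{eq:section-to-prove-1to1-with-comm-maps}), whose well-definedness is precisely where freeness (torsion-freeness) of $G$ is used. This matches the paper's argument, including the observation that the correspondence genuinely fails for groups with torsion.
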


%%%%%%%%%%%%%%%%%%%%%%%%%%%%%%
\subsection{A Theorem on Automorphism Group of Central Extension}
\label{subsec:thm-on-aut-grp-of-central-ext}
%%%%%%%%%%%%%%%%%%%%%%%%%%%%%%
This Section \ref{subsec:thm-on-aut-grp-of-central-ext} is a review of \cite[Prop.\ 5.4.1]{MR0996026}.

Let $1 \to N \overset{i}{\to} \hat{G} \overset{\pi}{\to} G \to 1$ be a central extension of a free abelian group $G$ of finite rank by an abelian group $N$,
and $c:G\times G\to N$ be the associated commutator map.

We first define some necessary objects.
We define a subgroup $\Aut(\hat{G},N)$ of the automorphism group $\Aut(\hat{G})$ as
\begin{align}
\Aut(\hat{G},N) := \{\chi\in\Aut(\hat{G}) \mid \chi(i(n)) = i(n) \ \text{for any $n\in N$}\}.
\end{align}
Since $N$ is abelian, $\Hom(G,N)$ has a group structure under the multiplication of functions: for $\eta,\eta'\in\Hom(G,N)$, $(\eta\cdot\eta')(g)=\eta(g)\eta'(g)$.
Now we can define a group homomorphism $I:\Hom(G,N)\to\Aut(\hat{G},N)$ which maps $\eta\in\Hom(G,N)$ to
\begin{align}
I(\eta): \hat{G} & \to \hat{G}\\
h & \mapsto i(\eta(\pi(h)))h.
\end{align}
In fact, $I(\eta)$ is an element of $\Aut(\hat{G},N)$, and $I$ satisfies $I(\eta\cdot\eta')=I(\eta)\circ I(\eta')$.

We also define a subgroup $\Aut(G,c)$ of $\Aut(G)$ as
\begin{align}
\Aut(G,c) := \{\phi\in\Aut(G) \mid c(\phi(g),\phi(g')) = c(g,g') \ \text{for any $g,g'\in G$}\},
\end{align}
and a group homomorphism $\Pi:\Aut(\hat{G},N)\to\Aut(G,c)$ which maps $\chi\in\Aut(\hat{G},N)$ to
\begin{align}
\Pi(\chi): G & \to G\\
g & \mapsto \pi(\chi(\hat{g})),
\end{align}
where $\hat{g}$ is an element of $\hat{G}$ such that $\pi(\hat{g})=g$.
Here, $\Pi(\chi)(g)$ does not depend on the choice of $\hat{g}$, because such another $\hat{g}'$ is an element of $\hat{g}i(N)$, and hence
\begin{align}
\pi(\chi(\hat{g}')) \in \pi(\chi(\hat{g}i(N))) = \pi(\chi(\hat{g})i(N)) = \{\pi(\chi(\hat{g}))\}.
\end{align}
To see that $\Pi(\chi)$ is in fact an element of $\Aut(G,c)$,
it suffices to calculate
\begin{align}
& c(\Pi(\chi)(g),\Pi(\chi)(g')) = i^{-1}([\widehat{\pi(\chi(\hat{g}))},\widehat{\pi(\chi(\widehat{g'}))}])\\
& \hspace{30pt} = i^{-1}([\chi(\hat{g}),\chi(\widehat{g'})]) = i^{-1}(\chi([\hat{g},\widehat{g'}])) = i^{-1}([\hat{g},\widehat{g'}]) = c(g,g'),
\end{align}
where we used the definition (\ref{eq:def-of-comm-map}) of the commutator map $c$ in the form of $c(-,-)=i^{-1}([\hat{-},\hat{-}])$; recall that it does not depend on the choice of $\hat{-}$.
Lastly, it is easy to see that $\Pi$ is in fact a group homomorphism: $\Pi(\chi\circ\chi')=\Pi(\chi)\circ\Pi(\chi')$.

\vspace{\vertspace}

Now, here is the theorem.
\begin{thm}[{\cite[Prop.\ 5.4.1]{MR0996026}}]
\label{thm:aut-grp-of-central-ext}
The following sequence is exact.
\begin{align}
1 \to \Hom(G,N) \overset{I}{\to} \Aut(\hat{G},N) \overset{\Pi}{\to} \Aut(G,c) \to 1.
\end{align}
\end{thm}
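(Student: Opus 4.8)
The plan is to verify exactness at each of the three terms separately, dispatching the injectivity of $I$ and the exactness at the middle term by direct manipulation, and reserving the surjectivity of $\Pi$ — the only genuinely nontrivial point — for last, where the hypothesis that $G$ is free abelian of finite rank enters through Theorem \ref{thm:cent-ext-and-comm-map}. For injectivity of $I$: if $I(\eta)=\id_{\hat{G}}$, then $i(\eta(\pi(h)))h=h$ for every $h\in\hat{G}$, so $i(\eta(\pi(h)))=1$; since $i$ is injective and $\pi$ surjective, $\eta\equiv 1_N$, hence $\ker I$ is trivial.

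For exactness at $\Aut(\hat{G},N)$, the inclusion $\im I\subseteq\ker\Pi$ follows because $\pi\circ i$ is trivial: $\Pi(I(\eta))(g)=\pi\bigl(i(\eta(g))\,\hat{g}\bigr)=\pi(\hat{g})=g$. For the reverse inclusion, I would start from $\chi\in\Aut(\hat{G},N)$ with $\Pi(\chi)=\id_G$, i.e.\ $\pi(\chi(h))=\pi(h)$ for all $h$, so that $\chi(h)h^{-1}\in\ker\pi=i(N)$. Then I would define $\eta(g):=i^{-1}(\chi(h)h^{-1})$ for any $h$ with $\pi(h)=g$, check that this is independent of the choice of $h$ (using that $\chi$ fixes $i(N)$ pointwise and that $i(N)$ is central), verify $\eta\in\Hom(G,N)$ by a short computation with the centrality of $i(N)$, and conclude $\chi=I(\eta)$.

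The crux is surjectivity of $\Pi$. Fixing $\phi\in\Aut(G,c)$ and a set-theoretic section $s:G\to\hat{G}$ with associated $2$-cocycle $\epsilon$, I would seek $\chi$ of the form $\chi(i(n)s(g))=i(n\lambda(g))\,s(\phi(g))$ for a correction $\lambda:G\to N$. Imposing that $\chi$ be a homomorphism on products $s(g)s(g')=i(\epsilon(g,g'))s(gg')$, and using the centrality of $i(N)$ together with $\phi(gg')=\phi(g)\phi(g')$, reduces the whole requirement to the single condition
\[
\epsilon(\phi(g),\phi(g'))\,\epsilon(g,g')^{-1}=\lambda(gg')\,\lambda(g)^{-1}\lambda(g')^{-1},
\]
which says precisely that the pulled-back cocycle $\epsilon^\phi:=\epsilon\circ(\phi\times\phi)$ is cohomologous to $\epsilon$. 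The hard part is producing such a $\lambda$, and this is exactly where $\phi\in\Aut(G,c)$ is used: by (\ref{eq:comm-map-vs-cocycle}) the commutator map of $\epsilon^\phi$ is $(g,g')\mapsto c(\phi(g),\phi(g'))=c(g,g')$, the same alternating bilinear map as that of $\epsilon$. Since $G$ is free abelian of finite rank, Theorem \ref{thm:cent-ext-and-comm-map} makes $\flat_c$ a \emph{bijection} onto alternating bilinear maps, so equality of commutator maps forces $[\epsilon]=[\epsilon^\phi]$ in $H^2(G,N)$, and the resulting $\lambda$ is the desired correction. With such $\lambda$ (automatically normalized, $\lambda(1_G)=1_N$, once $s$ and hence $\epsilon$ are normalized), I would then check directly that $\chi\in\Aut(\hat{G},N)$ — it is a bijection, fixes $i(N)$, and is a homomorphism by construction — and that $\Pi(\chi)(g)=\pi\bigl(i(\lambda(g))s(\phi(g))\bigr)=\phi(g)$.

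The single essential ingredient, and the only step that genuinely uses freeness and finite rank of $G$, is the injectivity of $\flat_c$ supplied by Theorem \ref{thm:cent-ext-and-comm-map}; everything else is bookkeeping with the cocycle identity and the centrality of $i(N)$. I expect the main obstacle to be organizing the surjectivity argument so that the cohomological equivalence $\epsilon\sim\epsilon^\phi$ is cleanly extracted from the mere equality of commutator maps, rather than in matching coboundary conventions, which is routine.
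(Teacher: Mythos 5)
Your proposal is correct, and its skeleton is the same as the paper's: injectivity of $I$ and exactness at the middle term are handled exactly as in the paper (define $\eta(g)=i^{-1}(\chi(h)h^{-1})$, use centrality of $i(N)$ for well-definedness and linearity), and surjectivity of $\Pi$ is reduced to Theorem \ref{thm:cent-ext-and-comm-map} via the observation that $\phi\in\Aut(G,c)$ preserves the commutator map. The one place you diverge is the implementation of that last step. The paper works at the level of extensions: it twists the projection to get a second extension $1\to N\overset{i}{\to}\hat{G}\overset{\phi\circ\pi}{\to}G\to1$, notes that its commutator map is again $c$, and invokes the correspondence of Theorem \ref{thm:cent-ext-and-comm-map} to produce an equivalence $\psi:\hat{G}\to\hat{G}$ of the two extensions, which is by construction an element of $\Aut(\hat{G},N)$ with $\Pi(\psi)=\phi$. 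You instead work at the level of cocycles: you pull back $\epsilon$ to $\epsilon^\phi=\epsilon\circ(\phi\times\phi)$, use the injectivity of $\flat_c$ to get $[\epsilon]=[\epsilon^\phi]$, and then build $\chi(i(n)s(g))=i(n\lambda(g))s(\phi(g))$ by hand from the coboundary $\lambda$. These are two faces of the same argument — the paper's extension equivalence is manufactured from cohomologous cocycles in the proof of Theorem \ref{thm:grp-ext-and-grp-coh}, so you have essentially inlined one layer of that machinery. What your version buys is explicitness: the automorphism and the verification that it lies in $\Aut(\hat{G},N)$ with $\Pi(\chi)=\phi$ are visible formulas (including the automatic normalization $\lambda(1_G)=1_N$), at the cost of a slightly longer computation; the paper's version is shorter given the equivalence-of-extensions language already set up in its Appendix \ref{subsec:grp-ext-and-grp-coh}.
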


\begin{proof}
$\ker I = 1_N$ (constant function in $\Hom(G,N)$) and $\im I\subset\ker\Pi$ are easy.

$\ker \Pi\subset\im I$ can be checked as follows.
Let $\chi\in\Aut(\hat{G},N)$ satisfy $\Pi(\chi)=\id_G$.
Since $\pi(\chi(\hat{g}))=g$ for any $g\in G$,
we have $\pi(\chi(h))=\pi(h)$ for any $h\in\hat{G}$,
and hence the difference $\chi(h)h^{-1}$ of $\chi(h)$ and $h$ is in $i(N)$.
Furthermore, it only depends on $\pi(h)$, because if we take another $h'\in \hat{G}$ such that $\pi(h')=\pi(h)$, then $h'\in hi(N)$ and hence $\chi(h')(h')^{-1} = \chi(h)h^{-1}$.
Therefore, there exists a map $\eta:G\to N$ such that $\chi(h)h^{-1}=i(\eta(\pi(h)))$.
The linearity of $\eta$ follows from, for any $h,h'\in\hat{G}$,
\begin{align}
\chi(hh')(hh')^{-1} = \chi(h)\chi(h')(h')^{-1}h^{-1} = \chi(h)h^{-1} \chi(h')(h')^{-1},
\end{align}
where the last equation follows from $\chi(h')(h')^{-1} \in i(N) \subset \mathrm{Center}(\hat{G})$.
As a result, $\eta\in\Hom(G,N)$ and $I(\eta) = \chi$.

The surjectivity of $\Pi$ can be shown as follows.
For $\phi\in\Aut(G,c)$, consider a new central extension
\begin{align}
1 \to N \overset{i}{\to} \hat{G} \overset{\phi\circ\pi}{\to} G \to 1.
\label{eq:new-central-extension-for-phi}
\end{align}
If $s:G\to\hat{G}$ is a section of the original central extension, i.e.\ $\pi\circ s = \id_G$, then $s\circ \phi^{-1}$ is a section of (\ref{eq:new-central-extension-for-phi}).
Then, the commutator map $c_\phi$ associated to the new central extension coincides with the original commutator map $c$, because
\begin{align}
c_\phi(g,g') = c(\phi^{-1}(g),\phi^{-1}(g)) = c(g,g').
\end{align}
Therefore, by Theorem \ref{thm:cent-ext-and-comm-map}, there exists $\psi\in\Aut(\hat{G})$ such that the diagram
\begin{align}
\xymatrix{
 & & \hat{G} \ar[dd]^-\psi \ar[rd]^-{\phi\circ\pi} & &\\
1 \ar[r] & N \ar[ru]^i \ar[rd]_i & & G \ar[r] & 1\\
 & & \hat{G} \ar[ru]_-\pi & &
}
\end{align}
commutes.
It is obvious that $\psi\in\Aut(\hat{G},N)$ from the diagram,
and $\Pi(\psi)=\phi$ because
\begin{align}
\Pi(\psi)(g) = \pi(\psi(s(g))) = \phi(\pi(s(g))) = \phi(g). 
\end{align}
\end{proof}

%%%%%%%%%%%%%%%%%%%%%%%%%%%%%%
\section{\texorpdfstring{$\cN=2$}{N=2} SCA in Odd Leech Lattice CFT}
\label{sec:N=2-SCA-in-odd-Leech}
%%%%%%%%%%%%%%%%%%%%%%%%%%%%%%
In this Appendix \ref{sec:N=2-SCA-in-odd-Leech},
we discuss how to realize an $\cN=2$ superconformal algebra (SCA) of central charge 24 in the odd Leech lattice CFT,
such that the partition function of the corresponding R sector coincides with the $\cN=2$ extremal elliptic genus.

The $\cN=2$ SCA of central charge $c$ consists of the operators $T(z), J(z), G^+(z), G^-(z)$ satisfying the following OPEs: 
\begin{align}
& T(z_1)T(z_2) \sim \frac{c/2}{(z_1-z_2)^4} + \frac{2}{(z_1-z_2)^2}T(z_2) + \frac{1}{z_1-z_2}\partial T(z_2), \label{N=2-SCA-TT-OPE}\\
& T(z_1)J(z_2) \sim \frac{1}{(z_1-z_2)^2} J(z_2) + \frac{1}{z_1-z_2}\partial J(z_2),\\
& T(z_1)G^\pm(z_2) \sim \frac{3/2}{(z_1-z_2)^2}G^\pm(z_2) + \frac{1}{z_1-z_2}\partial G^\pm(z_2),\\
& J(z_1)J(z_2) \sim \frac{c/3}{(z_1-z_2)^2},\\
& J(z_1)G^\pm(z_2) \sim \pm\frac{1}{z_1-z_2}G^\pm(z_2), \label{N=2-SCA-JG-OPE}\\
& G^+(z_1)G^-(z_2) \sim \frac{2c/3}{(z_1-z_2)^3} + \frac{2}{(z_1-z_2)^2}J(z_2) + \frac{1}{z_1-z_2}(2T(z_2)+\partial J(z_2)), \label{N=2-SCA-G+G--OPE}\\
& G^\pm(z_1)G^\pm(z_2) \sim 0. \label{N=2-SCA-G+G+-OPE}
\end{align}
These operators, $T(z), J(z),$ and $G^\pm(z)$ are called the \emph{energy-momentum tensor}, the \emph{$U(1)$ current}, and the \emph{supercurrents}, respectively.

\vspace{\vertspace}

Let us first review the candidate of $c=24$ $\cN=2$ SCA proposed by \cite{Benjamin:2015ria}.
The theory of \cite{Benjamin:2015ria} uses the reflection $\Z_2$ symmetry $X(z)\to-X(z)$ of the chiral bosonic lattice CFT constructed from the $(A_1)^{24}$ lattice.
The twisted states with respect to the $\Z_2$ symmetry
are built up on the twisted ground states,
which constitute the $2^{\frac{n}{2}=12}$-dimensional irreducible representation of a certain gamma matrix algebra,
and these twisted ground states are of weight $\frac{n}{16}=\frac{3}{2}$ and get a sign $(-1)^{\frac{n}{8}=3}$ under the $\Z_2$ symmetry \cite[\S5.3]{Dolan:1994st}.
Currents of weight $\leq\frac{3}{2}$ in this $(A_1)^{24}$ theory are
\begin{align}
\begin{array}{c|cc}
\text{reflection $\Z_2$} & \text{untwisted} & \text{twisted} \\\hline
\text{even} & V_{\sqrt{2}\bbe_i}(z)+V_{-\sqrt{2}\bbe_i}(z) & \\
\text{odd} & V_{\sqrt{2}\bbe_i}(z)-V_{-\sqrt{2}\bbe_i}(z),\ \partial X^i(z) & \text{the twisted ground states}
\end{array} \quad ,
\label{sectors-of-A1^24-under-reflection-Z2}
\end{align}
where $\{\bbe_i\}_{i=0,\ldots,23}$ is the standard basis.
The authors of \cite{Benjamin:2015ria} fermionized this $(A_1)^{24}$ theory by the reflection $\Z_2$ symmetry;
the NS sector of the resulting theory is the even-untwisted and odd-twisted sectors above,
and the R sector is the odd-untwisted and even-twisted sectors.
They further proposed a $c=24$ $\cN=2$ SCA in the NS sector
such that the partition function of the R sector
coincides with the extremal $\cN=2$ elliptic genus of central charge 24.

We can translate their theory in terms of the odd Leech lattice CFT,
by considering the shift $\Z_2$ instead of the reflection $\Z_2$.
With respect to the shift $\Z_2$ symmetry (\ref{eq:shift-Z2-of-vertex-op}), 
the twisted sector of the $(A_1)^{24}$ theory has the lattice description as in Section \ref{subsec:lattice-CFT}.
This time, currents of weight $\leq\frac{3}{2}$ in each sector of the $(A_1)^{24}$ theory are
\begin{align}
\begin{array}{c|cc}
\text{shift $\Z_2$} & \text{untwisted} & \text{twisted} \\\hline
\text{even} & \partial X^i(z) & \\
\text{odd} & V_{\sqrt{2}\bbe_i}(z), V_{-\sqrt{2}\bbe_i}(z) & V_{(-1)^w/2\sqrt{2}}(z), w\in G_{24} \text{ in (\ref{eq:wt-3/2-in-odd-Leech})}
\end{array} \quad .
\label{sectors-of-A1^24-under-shift-Z2}
\end{align}
The fermionized theories by reflection $\Z_2$ and shift $\Z_2$ are equivalent (see footnote \ref{fn:equiv-of-reflection-and-shift-sym}),
and the NS sector of the fermionized theory by this shift $\Z_2$ is the odd Leech lattice CFT.

By comparing (\ref{sectors-of-A1^24-under-reflection-Z2}) and (\ref{sectors-of-A1^24-under-shift-Z2}),
we can translate the candidate of $c=24$ $\cN=2$ SCA proposed by \cite{Benjamin:2015ria}
in terms of the odd Leech lattice CFT as
\begin{align}
&T(z) = -\frac{1}{2}\sum_{i=0}^{23}:(\partial X^i(z))^2:, \label{eq:BDFK-T}\\
&J(z) = 2\sqrt{-2}\partial X^0(z), \label{eq:BDFK-J}\\
&G^+(z) = \frac{1}{8\sqrt{2}} \sum_{w\in G_{24} \text{ such that } w_0=0} V_{\frac{1}{2\sqrt{2}}(-1)^w}(z), \label{eq:BDFK-G+}\\
&G^-(z) = \frac{1}{8\sqrt{2}} \sum_{w\in G_{24} \text{ such that } w_0=1} V_{\frac{1}{2\sqrt{2}}(-1)^w}(z), \label{eq:BDFK-G-}
\end{align}
where $w=(w_0,\ldots,w_{23})$ and $(-1)^w = ((-1)^{w_0},\ldots,(-1)^{w_{23}})$ as in (\ref{eq:wt-3/2-in-odd-Leech}).

These operators in fact satisfy the $TT$, $TJ$, $JJ$, $TG^\pm$, and $JG^\pm$ OPEs (\ref{N=2-SCA-TT-OPE})--(\ref{N=2-SCA-JG-OPE}).
However, if we calculate the $G^\pm G^\pm$ OPE and $G^+ G^-$ OPE for them,
while taking into account the cocycle factor given by (\ref{eq:varepsilon}) and the choice of the basis (\ref{eq:odd-Leech-basis}),
then they fail to satisfy the proper OPEs (\ref{N=2-SCA-G+G--OPE}) and (\ref{N=2-SCA-G+G+-OPE});
their OPEs contain unnecessary weight-2 currents $V_{\frac{1}{\sqrt{2}}w^\pm}(z_2)$, where $w^\pm$ denotes an octad whose non-zero entries are $1$ or $-1$, in the $(z_1-z_2)^{-1}$ term.
See \texttt{check\_supercurrent\_OPE.ipynb} for the explicit calculation by computer.
In fact,
the discussion of \cite[Appendix A.2]{Benjamin:2015ria}
on how these unnecessary weight-2 currents decouple from the OPEs
is based on the assumption that the subgroup $M_{24}$ of the automorphism group of the lattice directly lifts to a subgroup of the automorphism group of the lattice CFT,
but in reality, it is disturbed by the cocycle factors as in Theorem \ref{thm:main-oddLeech}.

Their proposal is still valuable in the sense that the choice of the $U(1)$ current $J(z)$ in (\ref{eq:BDFK-J}) is enough for the partition function of the R sector to reproduce the $\cN=2$ extremal elliptic genus of central charge 24.
The $U(1)$ current (\ref{eq:BDFK-J}) determines the first few terms of the partition function as
\begin{align}
\Tr_\mathrm{R} (-1)^F y^{J_0} q^{L_0-\frac{c}{24}} = y^{-4} + 46 +y^4 + \cdots,
\label{eq:lower-order-terms-of-partition-func}
\end{align}
and such a weak Jacobi form\footnote{
It is known \cite{Kawai:1993jk} that if the $U(1)$ charges of the states in the NS sector of an $\cN=2$ SCFT (in a non-chiral case, say $\cN=(2,2)$ SCFT) of central charge $c$ are all integers,
then the partition function (elliptic genus) $\Tr_\text{$\mathrm{R}$-$\tilde{\mathrm{R}}$}(-1)^{F+\tilde{F}}y^{J_0}q^{L_0-\frac{c}{24}}\bar{q}^{\tilde{L}_0-\frac{c}{24}}$ is a weak Jacobi form of weight 0 and index $\frac{c}{6}$.
The odd Leech lattice CFT in the main text with the $U(1)$ current (\ref{eq:BDFK-J}) satisfies this condition.
}
of weight 0 and index $\frac{c}{6}=4$ is uniquely determined to be the $\cN=2$ extremal elliptic genus of central charge 24.
So we only have to retake the supercurrents $G^\pm(z)$ so that these operators satisfy the correct OPEs (\ref{N=2-SCA-TT-OPE})--(\ref{N=2-SCA-G+G+-OPE}) for the $\cN=2$ SCA.

\vspace{\vertspace}

\begin{rem*}
\textit{(A long remark; ends at (\ref{eq:2nd-sgl-term-of-G16+-G8--OPE}).)}
One of the advantageous points of the candidate (\ref{eq:BDFK-G+}, \ref{eq:BDFK-G-}) of the supercurrents
was that they are invariant under the $M_{23}$ symmetry fixing the first entry of $G_{24}$,
if we forget the effect of cocycle factors of the odd Leech lattice CFT.
Unfortunately, however, we cannot find supercurrents while retaining this property.

To state it more precisely, we introduce some notations.
Let $G_{24}^+$ and $G_{24}^-$ denote the set of all codewords of $G_{24}$ with the first entry 0 and 1, respectively.
Let $G_{24}^{(x)}$ denote the set of all codewords of $G_{24}$ with Hamming weight $x$,
and also decompose it into $G_{24}^{(x),\pm}$ according to the first entry.
Note that $G_{24}^{(0)}=G_{24}^{(0),+}=\{\underline{0}\}$
and $G_{24}^{(24)}=G_{24}^{(24),-}=\{\underline{1}\}$.
The orbit decomposition of $G_{24}^+$ and $G_{24}^-$ under the action of $M_{23}$ is $\{\underline{0}\} \sqcup G_{24}^{(8),+} \sqcup G_{24}^{(12),+} \sqcup G_{24}^{(16),+}$ and $G_{24}^{(8),-} \sqcup G_{24}^{(12),-} \sqcup G_{24}^{(16),-} \sqcup \{\underline{1}\}$, respectively.\footnote{
\label{fn:M23-orbit}
This can be seen as follows.
$M_{23}$ acts on $G_{24}^{(x)}$ as a permutation group,
so we can consider a $|G_{24}^{(x)}|$-dimensional representation $R^{(x)}:=\bigoplus_{w\in G_{24}^{(x)}}\C|w\rangle$ of $M_{23}$.
If $O \subset G_{24}^{(x)}$ is an orbit of some element under the action of $M_{23}$,
then $\C(\sum_{w\in O}|w\rangle)$ is the 1-dimensional irreducible subrepresentation of $R^{(x)}$.
Therefore, we have (the number of orbits in $G_{24}^{(x)}$) $\leq$ (the multiplicity of the 1-dimensional irrep.\ in $R^{(x)}$).
By using the character table of $M_{23}$ from the ATLAS book \cite{MR827219},
we can calculate by computer the multiplicity of the 1-dimensional irrep.\ in $R^{(x)}$ as $1,2,2,2,1$ for $x=0,8,12,16,24$, respectively.
This justifies the orbit decompositions in the main text.
In particular, $M_{23}$ acts transitively on each $G_{24}^{(x),\pm}$.
In a similar way, we can also check that $M_{24}$ acts transitively on each $G_{24}^{(x)}$.
}
Therefore, if a current $G^\pm(z)$ of weight $\frac{3}{2}$ and $U(1)$ charge $\pm1$ is invariant under the $M_{23}$ symmetry,
then it should be in the form of the linear combination
\begin{align}
& G^+(z) = c^{(0),+} G^{(0),+}(z) + c^{(8),+} G^{(8),+}(z) + c^{(12),+} G^{(12),+}(z) + c^{(16),+} G^{(16),+}(z), \label{eq:M23-inv-G+}\\
& G^-(z) = c^{(8),-} G^{(8),-}(z) + c^{(12),-} G^{(12),-}(z) + c^{(16),-} G^{(16),-}(z) + c^{(24),-} G^{(24),-}(z), \label{eq:M23-inv-G-}
\end{align}
where we defined
\begin{align}
G^{(x),\pm}(z) := \sum_{w \in G_{24}^{(x),\pm}} V_{\frac{1}{2\sqrt{2}}(-1)^w}(z).
\end{align}
However, we cannot find the coefficients $c^{(x),\pm}$ such that these $G^+(z),G^-(z)$ together with $T(z)$ in (\ref{eq:BDFK-T}) and $J(z)$ in (\ref{eq:BDFK-J})
satisfy the correct $\cN=2$ SCA OPEs (\ref{N=2-SCA-TT-OPE})--(\ref{N=2-SCA-G+G+-OPE}).

To see it, we first use the $G^+G^+$ OPE.
In the calculation of the $G^+G^+$ OPE for (\ref{eq:M23-inv-G+}),
the only sources of singular terms are
\begin{align}
V_{\frac{1}{2\sqrt{2}}(-1)^w}(z_1)V_{\frac{1}{2\sqrt{2}}(-1)^{w'}}(z_2) \sim \varepsilon(\tfrac{1}{2\sqrt{2}}(-1)^w,\tfrac{1}{2\sqrt{2}}(-1)^{w'})\frac{1}{z_1-z_2}V_{\frac{1}{2\sqrt{2}}((-1)^w+(-1)^{w'})}(z_2),
\end{align}
for $w,w'\in G_{24}^+$ such that $(-1)^w+(-1)^{w'}$ has precisely eight non-zero entries $\pm2$, and hence $V_{\frac{1}{2\sqrt{2}}((-1)^w+(-1)^{w'})}(z_2)$ is of weight 2.
By focusing on such terms, while taking into account the cocycle factors,
we obtain the conditions on $c^{(x),+}$ for $G^+(z)$ in (\ref{eq:M23-inv-G+}) to satisfy the correct OPE $G^+G^+\sim0$ as follows.
\begin{itemize}
\setlength{\itemsep}{-3pt}
\leftskip -15pt
\item $c^{(0),+} c^{(16),+} - (c^{(8),+})^2 = 0$ from the terms of $w,w'\in G_{24}^+$ such that\\
\hspace*{60pt} $(-1)^w+(-1)^{w'}=(2, 0, 2, 0, 0, 0, 0, 0, 0, 0, 0, 2, 0, 0, 2, 0, 0, 0, 0, 0, 2, 2, 2, 2)$.
\item $c^{(0),+} c^{(16),+} + 15 (c^{(8),+})^2 = 0$ from the terms of $w,w'\in G_{24}^+$ such that\\
\hspace*{60pt} $(-1)^w+(-1)^{w'}=(2, 0, 0, 0, 2, 2, 0, 0, 0, 0, 0, 0, 0, 0, 0, 0, 0, 2, 0, 2, 0, 2, 2, 2)$.
\item $(c^{(12),+})^2 - c^{(8),+} c^{(16),+} = 0$ from the terms of $w,w'\in G_{24}^+$ such that\\
\hspace*{60pt} $(-1)^w+(-1)^{w'}=(2, 0, 2, 0, 0, -2, 0, -2, 0, -2, 0, 0, -2, 0, 0, 0, 0, 0, 0, 0, 0, 0, 2, 2)$.
\end{itemize}
See \texttt{check\_supercurrent\_OPE.ipynb} again for the computation of these conditions.
The solution to them is $c^{(8),+}=c^{(12),+}=0$ and $c^{(0),+}c^{(16),+}=0$.

As a result, if $G^+(z)$ in (\ref{eq:M23-inv-G+}) satisfies the OPE $G^+G^+\sim0$,
then it must be $G^{(0),+}(z)$ or $G^{(16),+}$ up to scalar multiplication.
If we assume $G^+(z) \propto G^{(0),+}(z)$,
then the OPE $G^+G^+\sim0$ is achieved,
and $G^-(z)$ in (\ref{eq:M23-inv-G-}) must contain $G^{(24),-}(z)$
for their $G^+G^-$ OPE to have a $(z_1-z_2)^{-3}$ term.
However, its $(z_1-z_2)^{-2}$ term is proportional to
\begin{align}
\sum_{w\in G_{24}^{(0),+}} (-1)^w \cdot \partial X(z) = \sum_{i=0}^{23} \partial X^i(z),
\end{align}
where $(-1)^w\cdot X(z):=\sum_{i=0}^{23}(-1)^{w_i}\partial X^i(z)$,
and this is not $J(z)$ in (\ref{eq:BDFK-J}).
So the $G^+G^-$ OPE fails.
Let us assume $G^+(z) \propto G^{(16),+}(z)$, then.
The OPE $G^+G^+\sim0$ is achieved,
because it is easy to see that $(-1)^w+(-1)^{w'}$ cannot have precisely eight non-zero entries when $w,w'\in G^{(16),+}$.
Now, $G^-(z)$ in (\ref{eq:M23-inv-G-}) must contain $G^{(8),-}(z)$,
for their $G^+G^-$ OPE to have a $(z_1-z_2)^{-3}$ term.
However, its $(z_1-z_2)^{-2}$ term is proportional to
\begin{align}
\sum_{w\in G_{24}^{(16),+}} (-1)^w \cdot \partial X(z) = 253 \, \partial X^0(z) - 99\sum_{i=1}^{23} \partial X^i(z),
\label{eq:2nd-sgl-term-of-G16+-G8--OPE}
\end{align}
so the $G^+G^-$ OPE fails again.
This concludes that we have to give up the $M_{23}$ symmetry to retake the supercurrents so that they satisfy the correct $\cN=2$ SCA OPE.
\textit{(Remark ends.)}
\end{rem*}

\vspace{\vertspace}

To retake the supercurrents,
we first point out that we can construct a $c=1$ $\cN=2$ SCA from a codeword $w=(w_0,\ldots,w_{23})$ of the binary Golay code $G_{24}$ as
\begin{align}
&T^{w}(z) = -\frac{1}{48}:\bigl((-1)^w\cdot\partial X(z)\bigr)^2:,\\
&J^{w}(z) = \frac{\sqrt{-1}}{6\sqrt{2}}(-1)^w\cdot\partial X(z),\\
&G^{+,w}(z) = \sqrt{\frac{2}{3}} V_{\frac{1}{2\sqrt{2}}(-1)^w}(z),\\
&G^{-,w}(z) = \sqrt{\frac{2}{3}} V_{-\frac{1}{2\sqrt{2}}(-1)^w}(z),
\end{align}
where $(-1)^w\cdot X(z):=\sum_{i=0}^{23}(-1)^{w_i}\partial X^i(z)$.
This is just a well-known construction of a $c=1$ $\cN=2$ SCA from one free boson \cite{Waterson:1986ru},
if we change the variables as $Y(z):=\frac{1}{2\sqrt{6}}(-1)^w\cdot X(z)$.
The cocycle factor does not cause a problem here because
\begin{align}
\varepsilon(\frac{1}{2\sqrt{2}}(-1)^w, -\frac{1}{2\sqrt{2}}(-1)^w) = 1,
\end{align}
which follows from
(i) $\varepsilon(k,-k) = (-1)^{k\ast(-k)} = (-1)^{k\ast k} = \varepsilon(k,k)$,
(ii) Lemma \ref{lemma:property-of-varep-and-zeta} (3) $\varepsilon(k,k)=\varepsilon(k',k')$ for $|k|^2=|k'|^2$,
(iii) $|\frac{1}{2\sqrt{2}}(-1)^w|^2=|e_0|^2=3$ where $e_0=\frac{1}{2\sqrt{2}}\underline{1}$ in the basis (\ref{eq:odd-Leech-basis}) of the odd Leech lattice,
and (iv) $\varepsilon(e_0,e_0)=1$.

In addition, if $w,w'\in G_{24}$ satisfy the orthogonality condition
\begin{align}
(-1)^w \cdot (-1)^{w'} = \sum_{i=0}^{23}(-1)^{w_i+w'_i} = 0,
\label{eq:ortho-condition-of-codeword}
\end{align}
then any OPE between $T^w,J^w,G^{\pm,w}$ and $T^{w'},J^{w'},G^{\pm,w'}$ vanishes.
Therefore, $T^w+T^{w'}$, $J^w+J^{w'}$, and $G^{\pm,w}+G^{\pm,w'}$ constitute an $c=1+1$ $\cN=2$ SCA.

We can find the following 24 codewords $w^{(1)},\ldots,w^{(24)}\in G_{24}$
\begin{align}
\scalebox{0.7}{$
\begin{array}{r rrrrrrrrrrrrrrrrrrrrrrrr}
(-1)^{w^{(1)}}=\ (& 1 & 1 & 1 & 1 & 1 & 1 & 1 & 1 & 1 & 1 & 1 & 1 & 1 & 1 & 1 & 1 & 1 & 1 & 1 & 1 & 1 & 1 & 1 & 1),\\
(-1)^{w^{(2)}}=\ (& 1 & -1 & 1 & -1 & -1 & -1 & 1 & -1 & -1 & -1 & -1 & -1 & -1 & 1 & -1 & -1 & 1 & 1 & 1 & 1 & 1 & 1 & 1 & 1),\\
(-1)^{w^{(3)}}=\ (& 1 & 1 & -1 & -1 & -1 & 1 & -1 & 1 & 1 & -1 & -1 & 1 & 1 & -1 & -1 & -1 & -1 & -1 & -1 & 1 & 1 & 1 & 1 & 1),\\
(-1)^{w^{(4)}}=\ (& 1 & -1 & -1 & 1 & 1 & -1 & -1 & -1 & -1 & 1 & 1 & -1 & -1 & -1 & 1 & 1 & -1 & -1 & -1 & 1 & 1 & 1 & 1 & 1),\\
(-1)^{w^{(5)}}=\ (& 1 & -1 & -1 & -1 & -1 & -1 & 1 & 1 & 1 & 1 & 1 & -1 & 1 & 1 & 1 & -1 & -1 & 1 & -1 & -1 & -1 & -1 & 1 & 1),\\
(-1)^{w^{(6)}}=\ (& 1 & 1 & -1 & 1 & 1 & 1 & 1 & -1 & -1 & -1 & -1 & 1 & -1 & 1 & -1 & 1 & -1 & 1 & -1 & -1 & -1 & -1 & 1 & 1),\\
(-1)^{w^{(7)}}=\ (& 1 & 1 & -1 & -1 & -1 & -1 & -1 & -1 & -1 & 1 & 1 & 1 & 1 & -1 & -1 & 1 & 1 & 1 & 1 & -1 & -1 & 1 & -1 & 1),\\
(-1)^{w^{(8)}}=\ (& 1 & -1 & -1 & 1 & 1 & 1 & -1 & 1 & 1 & -1 & -1 & -1 & -1 & -1 & 1 & -1 & 1 & 1 & 1 & -1 & -1 & 1 & -1 & 1),\\
(-1)^{w^{(9)}}=\ (& 1 & -1 & 1 & 1 & 1 & -1 & 1 & 1 & -1 & 1 & -1 & 1 & 1 & -1 & -1 & -1 & -1 & -1 & 1 & 1 & -1 & -1 & -1 & 1),\\
(-1)^{w^{(10)}}=\ (& 1 & 1 & 1 & -1 & -1 & 1 & 1 & -1 & 1 & -1 & 1 & -1 & -1 & -1 & 1 & 1 & -1 & -1 & 1 & 1 & -1 & -1 & -1 & 1),\\
(-1)^{w^{(11)}}=\ (& 1 & 1 & 1 & 1 & -1 & -1 & -1 & -1 & 1 & 1 & -1 & 1 & -1 & 1 & 1 & -1 & 1 & -1 & -1 & -1 & 1 & -1 & -1 & 1),\\
(-1)^{w^{(12)}}=\ (& 1 & -1 & 1 & -1 & 1 & 1 & -1 & 1 & -1 & -1 & 1 & -1 & 1 & 1 & -1 & 1 & 1 & -1 & -1 & -1 & 1 & -1 & -1 & 1),\\
(-1)^{w^{(13)}}=\ (& 1 & -1 & -1 & -1 & -1 & 1 & 1 & 1 & -1 & 1 & -1 & 1 & -1 & -1 & 1 & 1 & 1 & 1 & -1 & 1 & 1 & -1 & -1 & -1),\\
(-1)^{w^{(14)}}=\ (& 1 & 1 & -1 & 1 & 1 & -1 & 1 & -1 & 1 & -1 & 1 & -1 & 1 & -1 & -1 & -1 & 1 & 1 & -1 & 1 & 1 & -1 & -1 & -1),\\
(-1)^{w^{(15)}}=\ (& 1 & 1 & 1 & 1 & -1 & -1 & -1 & 1 & -1 & -1 & -1 & -1 & 1 & 1 & 1 & 1 & -1 & 1 & -1 & 1 & -1 & 1 & -1 & -1),\\
(-1)^{w^{(16)}}=\ (& 1 & -1 & 1 & -1 & 1 & 1 & -1 & -1 & 1 & 1 & 1 & 1 & -1 & 1 & -1 & -1 & -1 & 1 & -1 & 1 & -1 & 1 & -1 & -1),\\
(-1)^{w^{(17)}}=\ (& 1 & 1 & -1 & -1 & 1 & 1 & 1 & -1 & -1 & 1 & -1 & -1 & 1 & 1 & 1 & -1 & -1 & -1 & 1 & -1 & 1 & 1 & -1 & -1),\\
(-1)^{w^{(18)}}=\ (& 1 & -1 & -1 & 1 & -1 & -1 & 1 & 1 & 1 & -1 & 1 & 1 & -1 & 1 & -1 & 1 & -1 & -1 & 1 & -1 & 1 & 1 & -1 & -1),\\
(-1)^{w^{(19)}}=\ (& 1 & 1 & 1 & 1 & -1 & 1 & -1 & 1 & -1 & 1 & 1 & -1 & -1 & -1 & -1 & -1 & -1 & 1 & 1 & -1 & 1 & -1 & 1 & -1),\\
(-1)^{w^{(20)}}=\ (& 1 & -1 & 1 & -1 & 1 & -1 & -1 & -1 & 1 & -1 & -1 & 1 & 1 & -1 & 1 & 1 & -1 & 1 & 1 & -1 & 1 & -1 & 1 & -1),\\
(-1)^{w^{(21)}}=\ (& 1 & 1 & -1 & -1 & 1 & -1 & -1 & 1 & 1 & 1 & -1 & -1 & -1 & 1 & -1 & 1 & 1 & -1 & 1 & 1 & -1 & -1 & 1 & -1),\\
(-1)^{w^{(22)}}=\ (& 1 & -1 & -1 & 1 & -1 & 1 & -1 & -1 & -1 & -1 & 1 & 1 & 1 & 1 & 1 & -1 & 1 & -1 & 1 & 1 & -1 & -1 & 1 & -1),\\
(-1)^{w^{(23)}}=\ (& 1 & -1 & 1 & 1 & -1 & 1 & 1 & -1 & 1 & 1 & -1 & -1 & 1 & -1 & -1 & 1 & 1 & -1 & -1 & -1 & -1 & 1 & 1 & -1),\\
(-1)^{w^{(24)}}=\ (& 1 & 1 & 1 & -1 & 1 & -1 & 1 & 1 & -1 & -1 & 1 & 1 & -1 & -1 & 1 & -1 & 1 & -1 & -1 & -1 & -1 & 1 & 1 & -1),
\end{array}
$}
\end{align}
by searching 23 dodecads $w^{(2)},\ldots,w^{(24)}$ such that any two of the 24 vectors, $(-1)^{w^{(1)}}=\underline{1}$ and $(-1)^{w^{(i)}}$ for those dodecads, satisfy the orthogonality condition (\ref{eq:ortho-condition-of-codeword}).
As a result, the sum
\begin{align}
T(z) = \sum_{i=1}^{24}T^{w^{(i)}}(z), \quad J(z) = \sum_{i=1}^{24}J^{w^{(i)}}(z), \quad G^\pm(z) = \sum_{i=1}^{24}G^{\pm,w^{(i)}}(z),
\label{eq:N=2-SCA-from-ternary}
\end{align}
constitute a $c=24$ $\cN=2$ SCA.
Moreover, we searched the 23 dodecads under the condition that the first entry of $(-1)^{w^{(i)}}$ is $(-1)^{w^{(i)}_0}=1$,
so the first column vector of $[(-1)^{w^{(i)}_j}]_{i,j}$ is $\underline{1}$,
and the other column vectors are all orthogonal to $\underline{1}$
because it is an orthogonal matrix (up to scalar multiplication by $\frac{1}{2\sqrt{6}}$).
As a result, the sum $J(z)=\sum_i J^{w^{(i)}}(z)$ turns out to be
\begin{align}
& J(z) = 2\sqrt{-2}\partial X^0(z), \label{eq:J-of-N=2-SCA-from-ternary}
\end{align}
which coincide with $J(z)$ in (\ref{eq:BDFK-J}) proposed by \cite{Benjamin:2015ria}.
We can also check that
\begin{align}
& T(z) = -\frac{1}{2} \sum_{i=0}^{23} :(\partial X^i(z))^2:, \label{eq:T-of-N=2-SCA-from-ternary}
\end{align}
by explicit calculation.
See \texttt{find\_supercurrent.ipynb} for the source code to find $w^{(1)},\ldots,w^{(24)}$ and to check (\ref{eq:J-of-N=2-SCA-from-ternary}, \ref{eq:T-of-N=2-SCA-from-ternary}).

\vspace{\vertspace}

This construction of $c=24$ $\cN=2$ SCA can be explained in connection with ternary codes as follows.
We have constructed an odd Leech lattice $O_{24}^\mathrm{bin}$ from the binary Golay code $G_{24}$,
but we can also construct odd Leech lattices $O_{24}^\mathrm{tern}$ from certain ternary codes 
as we reviewed in Section \ref{subsec:Leech-lattice} (see also \cite[Example 4.5]{Gaiotto:2018ypj}).
Let $W$ denote the orthogonal matrix $[\frac{1}{2\sqrt{6}}(-1)^{w^{(i)}_j}]_{i,j}$.
Then $O_{24}^\mathrm{tern}:=O_{24}^\mathrm{bin}W^T$
is in fact an odd Leech lattice constructed from a ternary code,
and $W$ can be regarded as an isometry between $O_{24}^\mathrm{bin}$ and $O_{24}^\mathrm{tern}$;
for example, the row vector $\frac{1}{2\sqrt{2}}(-1)^{w^{(i)}}\in O_{24}^\mathrm{bin}$ is mapped to
$\frac{1}{2\sqrt{2}}(-1)^{w^{(i)}}W^T = \sqrt{3}\bbe_i$, which should be contained in any lattice constructed from a ternary code by Construction A.
To see that the lattice $O_{24}^\mathrm{tern}$ in fact can be constructed from some ternary code,
it suffices to observe that the lattice $O_{24}^\mathrm{tern}$ is a subset of $\frac{1}{\sqrt{3}}\Z^{24}$,
and contains $\{\sqrt{3}\bbe_i\}_{i=0,\ldots,23}$
(which is a \emph{3-frame} in the terminology of \cite{MR2501075, MR2522420}).
In a lattice CFT constructed from ternary codes,
\cite{Gaiotto:2018ypj} pointed out that we can always find a $c=1$ $\cN=2$ SCA associated to each direction of the standard basis $\bbe_i$,
and hence the sum of them constitute a $c=n=24$ $\cN=2$ SCA with the $U(1)$ current in the direction of $\sum_i\bbe_i=(1,\ldots,1)$.
This direction in $O_{24}^\mathrm{tern}$ is mapped by $W$ to the direction of $\bbe_0=(1,0,\ldots,0)$ in $O_{24}^\mathrm{bin}$,
and therefore we obtained the $c=24$ $\cN=2$ SCA with the $U(1)$ current $J(z)$ in the direction of $\bbe_0$ as above (\ref{eq:J-of-N=2-SCA-from-ternary}).

\vspace{\vertspace}

What is the subgroup of $U(1)^{24}.M_{24} \subset \Aut(V_{O_{24}})$ preserving the $\cN=2$ SCA (\ref{eq:N=2-SCA-from-ternary}) we have constructed?
Since the supercurrents have been constructed from a specific choice of 23 dodecads with the first entry 0,
it should be a subgroup of some extension of $M'$,
where $M'$ is the subgroup of $M_{24}$ stabilizing the set of these 23 dodecads.
The group $M_{23}$ acts transitively on the set $G_{24}^{(12),+}$ of all the dodecads with the first entry 0
(see footnote \ref{fn:M23-orbit}),
so such $M'$ is at least smaller than $M_{23}$,
and would be too small to provide any useful implications.

%%%%%%%%%%%%%%%%%%%%%%%%%%%%%%
\section{Data of the Equations to Show the Lift Does Not Preserve the Group Structures}
\label{sec:data-of-eqs}
%%%%%%%%%%%%%%%%%%%%%%%%%%%%%%
Here, we record the coefficient matrix $M$ and the vector $\vec{\mu}$ of the equation (\ref{eq:eq-to-find-zeta}) for the cases we mentioned in Section \ref{sec:answer-to-the-question}.

\begin{landscape}
For $M_{24}$ of odd Leech lattice,
\begin{align}
& M = \scalebox{0.45}{$\left(\begin{array}{rrrrrrrrrrrrrrrrrrrrrrrrrrrrrrrrrrrrrrrrrrrrrrrr}
2 & 0 & 0 & 0 & 0 & 0 & 0 & 0 & 0 & 0 & 0 & 0 & 0 & 0 & 0 & 0 & 0 & 0 & 0 & 0 & 0 & 0 & 0 & 0 & 0 & 0 & 0 & 0 & 0 & 0 & 0 & 0 & 0 & 0 & 0 & 0 & 0 & 0 & 0 & 0 & 0 & 0 & 0 & 0 & 0 & 0 & 0 & 0 \\
2 & 0 & 0 & 0 & -1 & -1 & 2 & 1 & -3 & 0 & 4 & -1 & -2 & 3 & 0 & -3 & 1 & 0 & -2 & -1 & 0 & -2 & -2 & 4 & 0 & 0 & 0 & 0 & 0 & 0 & 0 & 0 & 0 & 0 & 0 & 0 & 0 & 0 & 0 & 0 & 0 & 0 & 0 & 0 & 0 & 0 & 0 & 0 \\
2 & -1 & 1 & 1 & -1 & -2 & 2 & 1 & -4 & 0 & 6 & -3 & -3 & 5 & 0 & -4 & 3 & 1 & -3 & 0 & 0 & -2 & -2 & 3 & 0 & 0 & 0 & 0 & 0 & 0 & 0 & 0 & 0 & 0 & 0 & 0 & 0 & 0 & 0 & 0 & 0 & 0 & 0 & 0 & 0 & 0 & 0 & 0 \\
0 & 0 & 1 & 1 & -1 & 0 & 2 & 0 & -1 & 1 & 0 & -4 & 0 & 3 & -1 & 0 & 2 & -1 & -1 & 2 & 2 & 2 & 2 & -4 & 0 & 0 & 0 & 0 & 0 & 0 & 0 & 0 & 0 & 0 & 0 & 0 & 0 & 0 & 0 & 0 & 0 & 0 & 0 & 0 & 0 & 0 & 0 & 0 \\
0 & 0 & 1 & 0 & 0 & 0 & 1 & -1 & 0 & 2 & -1 & -3 & 1 & 2 & -1 & 1 & 2 & -1 & 0 & 2 & 1 & 1 & 2 & -4 & 0 & 0 & 0 & 0 & 0 & 0 & 0 & 0 & 0 & 0 & 0 & 0 & 0 & 0 & 0 & 0 & 0 & 0 & 0 & 0 & 0 & 0 & 0 & 0 \\
2 & -1 & -1 & 1 & 0 & -1 & 0 & 2 & -1 & -2 & 4 & 2 & -3 & 1 & 2 & -3 & -1 & 1 & -2 & -3 & -2 & -2 & -3 & 7 & 0 & 0 & 0 & 0 & 0 & 0 & 0 & 0 & 0 & 0 & 0 & 0 & 0 & 0 & 0 & 0 & 0 & 0 & 0 & 0 & 0 & 0 & 0 & 0 \\
0 & 0 & 0 & 0 & 0 & 0 & 2 & 0 & 0 & 1 & 0 & -1 & 0 & 1 & -1 & 0 & 0 & -1 & -1 & 0 & 0 & 0 & 1 & 1 & 0 & 0 & 0 & 0 & 0 & 0 & 0 & 0 & 0 & 0 & 0 & 0 & 0 & 0 & 0 & 0 & 0 & 0 & 0 & 0 & 0 & 0 & 0 & 0 \\
0 & 0 & 1 & 1 & 0 & -1 & 0 & 1 & 0 & 1 & 0 & -3 & -1 & 1 & -1 & 0 & 2 & 0 & 0 & 1 & 1 & 1 & 2 & -2 & 0 & 0 & 0 & 0 & 0 & 0 & 0 & 0 & 0 & 0 & 0 & 0 & 0 & 0 & 0 & 0 & 0 & 0 & 0 & 0 & 0 & 0 & 0 & 0 \\
2 & 0 & -1 & -1 & 1 & 0 & -2 & 0 & 3 & -1 & -2 & 4 & 1 & -3 & 2 & 2 & -3 & 0 & 1 & -2 & -2 & -1 & -1 & 2 & 0 & 0 & 0 & 0 & 0 & 0 & 0 & 0 & 0 & 0 & 0 & 0 & 0 & 0 & 0 & 0 & 0 & 0 & 0 & 0 & 0 & 0 & 0 & 0 \\
0 & 1 & 0 & -1 & 1 & 2 & -1 & -2 & 2 & 2 & -5 & 0 & 3 & -2 & -1 & 3 & -1 & -1 & 3 & 2 & 1 & 2 & 3 & -6 & 0 & 0 & 0 & 0 & 0 & 0 & 0 & 0 & 0 & 0 & 0 & 0 & 0 & 0 & 0 & 0 & 0 & 0 & 0 & 0 & 0 & 0 & 0 & 0 \\
2 & 0 & 0 & -1 & 0 & 1 & -1 & -2 & 2 & 2 & -3 & -1 & 3 & -1 & -1 & 3 & 0 & -1 & 1 & 1 & 0 & 1 & 2 & -4 & 0 & 0 & 0 & 0 & 0 & 0 & 0 & 0 & 0 & 0 & 0 & 0 & 0 & 0 & 0 & 0 & 0 & 0 & 0 & 0 & 0 & 0 & 0 & 0 \\
0 & 1 & 0 & -1 & 1 & 2 & -1 & -2 & 2 & 0 & -5 & 2 & 3 & -3 & 0 & 3 & -1 & 0 & 3 & 2 & 1 & 2 & 2 & -6 & 0 & 0 & 0 & 0 & 0 & 0 & 0 & 0 & 0 & 0 & 0 & 0 & 0 & 0 & 0 & 0 & 0 & 0 & 0 & 0 & 0 & 0 & 0 & 0 \\
0 & 0 & 2 & 0 & -2 & 0 & 2 & -2 & -2 & 4 & 0 & -8 & 2 & 5 & -3 & 1 & 7 & -1 & -1 & 4 & 3 & 2 & 4 & -9 & 0 & 0 & 0 & 0 & 0 & 0 & 0 & 0 & 0 & 0 & 0 & 0 & 0 & 0 & 0 & 0 & 0 & 0 & 0 & 0 & 0 & 0 & 0 & 0 \\
0 & 0 & 0 & 0 & 0 & 0 & 0 & 0 & 0 & 0 & 0 & 0 & 0 & 1 & 1 & 0 & 1 & 0 & 0 & 0 & 0 & 0 & 0 & -1 & 0 & 0 & 0 & 0 & 0 & 0 & 0 & 0 & 0 & 0 & 0 & 0 & 0 & 0 & 0 & 0 & 0 & 0 & 0 & 0 & 0 & 0 & 0 & 0 \\
0 & 0 & 0 & 0 & 0 & 0 & 0 & 0 & 0 & 0 & 0 & 0 & 0 & 1 & 1 & 0 & 1 & 0 & 0 & 0 & 0 & 0 & 0 & -1 & 0 & 0 & 0 & 0 & 0 & 0 & 0 & 0 & 0 & 0 & 0 & 0 & 0 & 0 & 0 & 0 & 0 & 0 & 0 & 0 & 0 & 0 & 0 & 0 \\
0 & 0 & 0 & 0 & 0 & 0 & 0 & 0 & 2 & 0 & -2 & 0 & 1 & -1 & 0 & 2 & 0 & -1 & 0 & 0 & 0 & 1 & 1 & -1 & 0 & 0 & 0 & 0 & 0 & 0 & 0 & 0 & 0 & 0 & 0 & 0 & 0 & 0 & 0 & 0 & 0 & 0 & 0 & 0 & 0 & 0 & 0 & 0 \\
0 & 0 & 0 & 0 & 0 & 0 & 0 & 0 & 0 & 0 & 0 & 0 & 0 & 0 & 0 & 0 & 2 & 0 & 0 & 0 & 0 & 0 & 0 & 0 & 0 & 0 & 0 & 0 & 0 & 0 & 0 & 0 & 0 & 0 & 0 & 0 & 0 & 0 & 0 & 0 & 0 & 0 & 0 & 0 & 0 & 0 & 0 & 0 \\
0 & 0 & 0 & 0 & 0 & 0 & 0 & 0 & 0 & 0 & 0 & 0 & 0 & 0 & 0 & 0 & 1 & 1 & 0 & 0 & 0 & 0 & 1 & -1 & 0 & 0 & 0 & 0 & 0 & 0 & 0 & 0 & 0 & 0 & 0 & 0 & 0 & 0 & 0 & 0 & 0 & 0 & 0 & 0 & 0 & 0 & 0 & 0 \\
0 & 0 & 0 & 0 & 0 & 0 & 0 & 0 & 0 & 2 & 0 & -2 & 0 & 1 & -1 & 0 & 2 & -1 & 0 & 0 & 0 & 0 & 1 & 0 & 0 & 0 & 0 & 0 & 0 & 0 & 0 & 0 & 0 & 0 & 0 & 0 & 0 & 0 & 0 & 0 & 0 & 0 & 0 & 0 & 0 & 0 & 0 & 0 \\
0 & 0 & 0 & 0 & 2 & 0 & -2 & 0 & 2 & -2 & -2 & 4 & 0 & -4 & 1 & 1 & -2 & 1 & 2 & 0 & -1 & 0 & -1 & 1 & 0 & 0 & 0 & 0 & 0 & 0 & 0 & 0 & 0 & 0 & 0 & 0 & 0 & 0 & 0 & 0 & 0 & 0 & 0 & 0 & 0 & 0 & 0 & 0 \\
0 & 2 & 0 & -2 & 0 & 2 & -2 & -2 & 4 & 0 & -8 & 2 & 5 & -4 & 1 & 6 & -1 & -1 & 4 & 2 & 2 & 3 & 3 & -10 & 0 & 0 & 0 & 0 & 0 & 0 & 0 & 0 & 0 & 0 & 0 & 0 & 0 & 0 & 0 & 0 & 0 & 0 & 0 & 0 & 0 & 0 & 0 & 0 \\
0 & 0 & 0 & 0 & 0 & 2 & 0 & -2 & 0 & 2 & -2 & -2 & 2 & 0 & -2 & 1 & 2 & -1 & 1 & 2 & 1 & 2 & 2 & -4 & 0 & 0 & 0 & 0 & 0 & 0 & 0 & 0 & 0 & 0 & 0 & 0 & 0 & 0 & 0 & 0 & 0 & 0 & 0 & 0 & 0 & 0 & 0 & 0 \\
0 & 0 & 0 & 0 & 0 & 0 & 0 & 0 & 0 & 0 & 0 & 0 & 0 & 0 & 0 & 0 & 1 & 1 & 0 & 0 & 0 & 0 & 1 & -1 & 0 & 0 & 0 & 0 & 0 & 0 & 0 & 0 & 0 & 0 & 0 & 0 & 0 & 0 & 0 & 0 & 0 & 0 & 0 & 0 & 0 & 0 & 0 & 0 \\
0 & 0 & 0 & 0 & 0 & 0 & 0 & 0 & 0 & 0 & 0 & 0 & 0 & 0 & 0 & 0 & 2 & 0 & 0 & 0 & 0 & 0 & 0 & 0 & 0 & 0 & 0 & 0 & 0 & 0 & 0 & 0 & 0 & 0 & 0 & 0 & 0 & 0 & 0 & 0 & 0 & 0 & 0 & 0 & 0 & 0 & 0 & 0 \\
\hline
18 & 0 & 0 & 0 & 0 & 0 & 0 & 0 & 0 & 0 & 0 & 0 & 0 & 0 & 0 & 0 & 0 & 0 & 0 & 0 & 0 & 0 & 0 & 0 & 27 & 0 & 0 & 0 & 0 & 0 & 0 & 0 & 0 & 0 & 0 & 0 & 0 & 0 & 0 & 0 & 0 & 0 & 0 & 0 & 0 & 0 & 0 & 0 \\
14 & 2 & -2 & 0 & -1 & -1 & -2 & 5 & -1 & -7 & 2 & 8 & -4 & -3 & 6 & -3 & -4 & 1 & 1 & -5 & 0 & -2 & -7 & 8 & 20 & 1 & -1 & 2 & -2 & -4 & 0 & 8 & -5 & -10 & 8 & 8 & -10 & 0 & 8 & -8 & -2 & 3 & -1 & -6 & 1 & -3 & -10 & 12 \\
12 & 3 & -1 & 0 & -2 & 0 & -1 & 8 & -1 & -12 & -1 & 11 & -4 & -5 & 8 & -3 & -7 & 3 & 2 & -4 & 3 & 1 & -5 & 3 & 18 & 1 & 0 & 7 & -3 & -8 & 1 & 12 & -12 & -14 & 19 & 5 & -19 & 5 & 9 & -15 & 3 & 11 & -5 & -7 & 5 & -5 & -13 & 15 \\
10 & 2 & 2 & 1 & -7 & 0 & 9 & 1 & -11 & 3 & 7 & -13 & -2 & 15 & -2 & -6 & 13 & -2 & -5 & 4 & 8 & 2 & 2 & -9 & 20 & 3 & 1 & -2 & -11 & 1 & 11 & -1 & -15 & 3 & 11 & -16 & -2 & 20 & 0 & -6 & 17 & -2 & -6 & 6 & 9 & 1 & 1 & -14 \\
12 & 0 & 1 & 3 & -2 & -3 & 2 & 1 & -6 & -1 & 8 & -5 & -5 & 7 & 1 & -5 & 7 & 3 & -3 & 1 & 2 & -1 & -2 & -1 & 18 & 2 & 1 & -2 & -1 & 1 & 0 & 2 & 0 & -5 & -3 & 4 & 0 & -2 & 4 & 1 & -2 & 1 & 1 & 0 & 2 & 2 & -1 & -4 \\
12 & -1 & -1 & 0 & -1 & 1 & 5 & -1 & -6 & 3 & 7 & -3 & -2 & 6 & -3 & -6 & 3 & -1 & -4 & -1 & -1 & -3 & -2 & 8 & 22 & -2 & -2 & -1 & -2 & 0 & 6 & -1 & -9 & 1 & 11 & -2 & -3 & 8 & -2 & -8 & 3 & 1 & -5 & -2 & -2 & -5 & -6 & 11 \\
10 & -1 & 3 & 2 & -2 & -4 & 3 & -2 & -3 & 8 & 5 & -15 & -1 & 12 & -3 & 0 & 13 & -1 & -4 & 4 & 2 & 0 & 5 & -9 & 12 & 5 & 5 & -6 & -2 & 5 & -1 & -7 & 11 & 8 & -21 & -11 & 15 & -1 & -2 & 18 & 5 & -7 & 8 & 12 & 6 & 12 & 18 & -38 \\
10 & 0 & 2 & 1 & 4 & -3 & -7 & 2 & 8 & -4 & -7 & 5 & 0 & -8 & 4 & 6 & -4 & 3 & 5 & 1 & -1 & 2 & 2 & -6 & 10 & 3 & 4 & 1 & 7 & -1 & -12 & 1 & 18 & -4 & -20 & 8 & 5 & -18 & 4 & 15 & -9 & 2 & 11 & 2 & 1 & 8 & 9 & -16 \\
12 & -2 & 0 & -1 & 3 & 2 & -3 & -5 & 5 & 5 & -5 & -1 & 5 & -3 & -3 & 5 & -1 & -1 & 2 & 2 & -3 & 1 & 4 & -4 & 18 & -2 & 0 & -2 & 4 & 4 & -4 & -8 & 8 & 8 & -10 & 0 & 10 & -6 & -6 & 9 & -3 & -3 & 4 & 2 & -5 & 1 & 6 & -5 \\
14 & -3 & -2 & -1 & 6 & 3 & -3 & -5 & 5 & 3 & -3 & 6 & 3 & -9 & -5 & 2 & -8 & -3 & 2 & -1 & -7 & -2 & -3 & 13 & 22 & -6 & -3 & 0 & 8 & 3 & -2 & -6 & 4 & 6 & -1 & 5 & 3 & -7 & -9 & -2 & -9 & -5 & 0 & -3 & -10 & -5 & -3 & 22 \\
10 & -1 & 2 & 0 & 2 & 2 & -2 & -5 & 5 & 7 & -8 & -7 & 6 & -2 & -6 & 7 & 2 & -2 & 3 & 6 & 1 & 5 & 9 & -13 & 16 & -2 & 0 & 0 & 4 & 3 & -3 & -5 & 8 & 8 & -9 & -3 & 8 & -5 & -7 & 7 & -3 & -4 & 2 & 3 & -3 & 2 & 8 & -4 \\
14 & -1 & -2 & -3 & 0 & 3 & 3 & -1 & -1 & -1 & 1 & 6 & 2 & -1 & 0 & -2 & -6 & -2 & -2 & -3 & -3 & -3 & -4 & 10 & 24 & -4 & -2 & 1 & -4 & -2 & 10 & -3 & -15 & 7 & 21 & -12 & -6 & 18 & -7 & -13 & 11 & -1 & -13 & -1 & -2 & -8 & -6 & 16 \\
4 & 4 & 2 & 0 & -4 & -2 & 0 & 8 & -2 & -10 & -2 & 4 & -3 & 1 & 9 & -1 & -1 & 4 & 2 & 0 & 7 & 3 & 0 & -11 & 4 & 6 & 4 & 4 & -6 & -6 & 0 & 8 & -6 & -10 & 6 & -4 & -10 & 6 & 10 & -2 & 10 & 10 & 0 & 2 & 12 & 4 & 0 & -19 \\
8 & 0 & -2 & -4 & -2 & 4 & 4 & 2 & -4 & -4 & 2 & 8 & 1 & 1 & 3 & -5 & -6 & -2 & -2 & -4 & -1 & -3 & -6 & 10 & 16 & -2 & -2 & 2 & -6 & -4 & 8 & -2 & -18 & 2 & 24 & -8 & -10 & 18 & -1 & -14 & 14 & 5 & -11 & -3 & 0 & -9 & -11 & 13 \\
8 & 0 & -2 & 0 & -2 & 0 & 4 & 2 & -8 & -4 & 10 & 4 & -5 & 5 & 3 & -9 & 0 & 2 & -4 & -4 & -1 & -5 & -8 & 12 & 12 & 2 & 0 & -2 & -6 & 2 & 6 & -2 & -10 & 0 & 6 & -6 & -1 & 10 & 1 & -4 & 9 & 1 & -3 & 3 & 5 & 0 & -2 & -8 \\
4 & 0 & 4 & -2 & -6 & 0 & 6 & -2 & -4 & 8 & 0 & -16 & 5 & 15 & -2 & 2 & 13 & -3 & -4 & 6 & 6 & 3 & 9 & -20 & 4 & 6 & 6 & -4 & -8 & 2 & 2 & -10 & 2 & 12 & -10 & -22 & 12 & 12 & -2 & 16 & 20 & -1 & 2 & 14 & 10 & 10 & 18 & -48 \\
8 & 0 & 0 & 0 & -4 & -4 & 4 & 4 & -8 & -4 & 12 & 0 & -6 & 10 & 6 & -8 & 4 & 4 & -6 & -4 & 0 & -6 & -6 & 6 & 8 & 4 & 4 & -2 & -8 & -2 & 6 & 0 & -8 & 0 & 6 & -12 & -2 & 14 & 3 & 0 & 14 & 3 & -4 & 5 & 8 & 3 & 3 & -20 \\
4 & 0 & 0 & -2 & 4 & 0 & -4 & 2 & 8 & -4 & -8 & 10 & 2 & -9 & 4 & 5 & -10 & 1 & 4 & -2 & -3 & 0 & 1 & 1 & 4 & 2 & 2 & 0 & 6 & -2 & -8 & -2 & 12 & -2 & -12 & 6 & 4 & -12 & 3 & 11 & -5 & 6 & 8 & 1 & -1 & 4 & 5 & -13 \\
8 & -2 & 0 & -2 & 2 & 0 & -2 & 0 & 2 & 0 & 0 & 4 & 1 & -2 & 1 & 0 & -4 & -1 & 0 & -2 & -4 & -3 & -3 & 7 & 8 & 0 & 2 & 0 & 2 & 0 & -4 & -6 & 6 & 6 & -6 & -4 & 4 & -2 & -3 & 7 & 3 & -1 & 3 & 3 & -1 & 2 & 5 & -9 \\
8 & -2 & 0 & -2 & 0 & 0 & 0 & 0 & 2 & -2 & 0 & 4 & 2 & -1 & 2 & 1 & -4 & 1 & -2 & -2 & -3 & -2 & -1 & 3 & 8 & 0 & 2 & 0 & -2 & 0 & 0 & -6 & 0 & 8 & 0 & -10 & 5 & 6 & -4 & 6 & 9 & 1 & -3 & 4 & 1 & 1 & 7 & -14 \\
8 & 2 & -2 & -4 & -2 & 4 & 0 & 0 & 0 & -4 & -4 & 8 & 4 & -2 & 5 & 1 & -5 & -1 & 2 & -2 & 1 & 0 & -3 & -2 & 12 & 2 & 0 & 0 & -6 & 0 & 4 & -2 & -10 & 0 & 8 & -6 & -3 & 10 & 2 & -4 & 11 & 3 & -3 & 2 & 5 & -1 & -3 & -8 \\
8 & -2 & 0 & -2 & -4 & 2 & 10 & -2 & -10 & 6 & 10 & -8 & 0 & 12 & -6 & -8 & 5 & -4 & -8 & 0 & 0 & -4 & -2 & 9 & 16 & -4 & -2 & 0 & -6 & -2 & 14 & -4 & -20 & 8 & 26 & -16 & -7 & 24 & -8 & -16 & 15 & 0 & -15 & 0 & -1 & -9 & -7 & 15 \\
4 & 0 & 0 & -2 & 4 & 0 & -4 & 2 & 8 & -4 & -8 & 10 & 2 & -9 & 4 & 5 & -10 & 1 & 4 & -2 & -3 & 0 & 1 & 1 & 4 & 2 & 2 & 0 & 4 & -2 & -6 & -2 & 12 & 2 & -12 & 0 & 5 & -8 & 1 & 11 & -2 & 1 & 5 & 3 & 0 & 5 & 8 & -14 \\
8 & 0 & 0 & -4 & -4 & 0 & 4 & 4 & -4 & -4 & 4 & 4 & 0 & 6 & 6 & -4 & -2 & 0 & -4 & -4 & 0 & -4 & -4 & 4 & 8 & 4 & 4 & 0 & -8 & -4 & 4 & -4 & -8 & 4 & 8 & -16 & -2 & 16 & 2 & 2 & 20 & 6 & -4 & 6 & 8 & 2 & 4 & -25 
\end{array}\right)$}, \label{eq:M24-coef-mat}\\
& \vec{\mu} = \scalebox{0.7}{$\left(\begin{array}{cccccccccccccccccccccccccccccccccccccccccccccccc}
0 & 1 & 0 & 0 & 0 & 1 & 1 & 1 & 0 & 0 & 0 & 0 & 0 & 0 & 0 & 0 & 0 & 0 & 0 & 0 & 0 & 0 & 0 & 0 & 0 & 0 & 1 & 1 & 1 & 0 & 1 & 1 & 1 & 0 & 1 & 1 & 0 & 0 & 0 & 0 & 0 & 0 & 0 & 0 & 0 & 0 & 0 & 0
\end{array}\right)$}^T. \label{eq:M24-RHS-vec}
\end{align}

For $M_{23}$ of odd Leech lattice,
\begin{align}
& M = \scalebox{0.45}{$\left(\begin{array}{rrrrrrrrrrrrrrrrrrrrrrrrrrrrrrrrrrrrrrrrrrrrrrrr}
 15 & 0 & 0 & 0 & 0 & 0 & 0 & 0 & 0 & 0 & 0 & 0 & 0 & 0 & 0 & 0 & 0 & 0 & 0 & 0 & 0 & 0 & 0 & 0 & 28 & 0 & 0 & 0 & 0 & 0 & 0 & 0 & 0 & 0 & 0 & 0 & 0 & 0 & 0 & 0 & 0 & 0 & 0 & 0 & 0 & 0 & 0 & 0 \\
0 & 2 & 6 & 4 & -2 & 4 & 4 & -8 & -4 & 14 & 0 & -24 & 1 & 11 & -12 & 3 & 17 & -2 & -1 & 12 & 11 & 7 & 14 & -23 & 0 & 5 & 11 & 4 & 1 & 9 & 2 & -11 & 6 & 16 & -15 & -27 & 7 & 2 & -16 & 11 & 13 & -5 & 4 & 18 & 16 & 16 & 24 & -36 \\
0 & 5 & 4 & 0 & 0 & 7 & 0 & -6 & 4 & 5 & -14 & -8 & 8 & -3 & -5 & 10 & 4 & -2 & 8 & 12 & 10 & 11 & 13 & -30 & 0 & 10 & 10 & 0 & 1 & 11 & 0 & -11 & 11 & 10 & -28 & -20 & 14 & -6 & -11 & 19 & 7 & -6 & 12 & 24 & 19 & 22 & 27 & -53 \\
0 & 1 & 5 & 1 & 2 & 4 & 3 & -8 & 7 & 11 & -13 & -16 & 9 & 0 & -11 & 10 & 7 & -6 & 4 & 10 & 8 & 12 & 18 & -25 & 0 & 3 & 9 & 5 & 6 & 3 & 2 & -4 & 8 & 4 & -13 & -14 & 1 & -6 & -10 & 8 & 5 & -2 & 7 & 11 & 15 & 17 & 19 & -27 \\
0 & 3 & 3 & -1 & 4 & 10 & 0 & -11 & 8 & 8 & -15 & -12 & 11 & -3 & -11 & 11 & 4 & -7 & 8 & 12 & 10 & 12 & 17 & -28 & 0 & 6 & 5 & 0 & 11 & 15 & -4 & -14 & 20 & 2 & -28 & -4 & 13 & -19 & -12 & 18 & -7 & -8 & 17 & 15 & 13 & 20 & 21 & -33 \\
0 & 3 & 2 & 2 & 5 & 6 & -2 & -4 & 9 & 0 & -13 & -4 & 4 & -8 & -5 & 9 & -2 & -4 & 8 & 8 & 8 & 10 & 12 & -18 & 0 & 6 & 5 & 1 & 11 & 14 & -8 & -12 & 23 & 0 & -34 & 1 & 15 & -23 & -9 & 22 & -9 & -7 & 21 & 14 & 13 & 22 & 22 & -39 \\
0 & 3 & 3 & -1 & 4 & 5 & -1 & -5 & 9 & 4 & -15 & -4 & 9 & -7 & -4 & 9 & -1 & -4 & 9 & 7 & 5 & 11 & 13 & -21 & 0 & 6 & 9 & 2 & 2 & 6 & 4 & -7 & 5 & 8 & -18 & -20 & 7 & -1 & -11 & 11 & 12 & -4 & 11 & 19 & 18 & 22 & 26 & -48 \\
0 & 9 & 5 & 1 & 0 & 2 & -5 & 3 & 10 & -6 & -23 & -2 & 5 & -11 & 3 & 13 & -2 & -1 & 14 & 14 & 12 & 18 & 15 & -38 & 0 & 19 & 8 & -6 & 1 & 11 & -16 & -6 & 31 & -6 & -63 & 9 & 30 & -34 & 8 & 44 & -10 & -4 & 35 & 27 & 19 & 35 & 32 & -90 \\
0 & 7 & 3 & -3 & 2 & 11 & -6 & -7 & 13 & 2 & -27 & 0 & 15 & -13 & -3 & 19 & -5 & -5 & 16 & 13 & 10 & 15 & 16 & -38 & 0 & 14 & 6 & -6 & 7 & 19 & -18 & -17 & 37 & 2 & -64 & 8 & 34 & -36 & -2 & 46 & -13 & -8 & 35 & 24 & 15 & 32 & 33 & -80 \\
0 & 8 & 5 & -1 & -2 & 3 & 1 & 0 & 3 & 2 & -17 & -12 & 8 & -1 & -1 & 9 & 7 & -3 & 8 & 16 & 13 & 18 & 16 & -40 & 0 & 15 & 11 & -1 & -6 & 5 & 4 & -2 & 2 & 4 & -31 & -22 & 14 & 0 & -3 & 18 & 15 & -3 & 17 & 32 & 26 & 30 & 32 & -80 \\
0 & 9 & 2 & -6 & 5 & 12 & -10 & -5 & 26 & -7 & -43 & 12 & 22 & -27 & 2 & 28 & -18 & -8 & 23 & 13 & 9 & 22 & 19 & -45 & 0 & 17 & 2 & -8 & 13 & 19 & -24 & -12 & 47 & -16 & -75 & 31 & 36 & -55 & 8 & 51 & -31 & -7 & 46 & 21 & 14 & 36 & 28 & -79 \\
0 & 6 & 3 & -1 & 3 & 9 & -3 & -6 & 9 & 1 & -20 & -4 & 10 & -9 & -4 & 12 & -1 & -4 & 11 & 13 & 11 & 14 & 15 & -32 & 0 & 10 & 6 & 0 & 6 & 14 & -4 & -10 & 16 & 0 & -36 & -6 & 17 & -16 & -8 & 21 & -3 & -7 & 22 & 23 & 19 & 26 & 26 & -56 \\
0 & 2 & 4 & 0 & -2 & 2 & 4 & -4 & -2 & 8 & -4 & -16 & 4 & 7 & -6 & 2 & 11 & -3 & 0 & 10 & 8 & 8 & 11 & -21 & 0 & 6 & 6 & 0 & -2 & 2 & 4 & -2 & 0 & 4 & -8 & -14 & 3 & 2 & -4 & 5 & 8 & -3 & 3 & 13 & 11 & 11 & 13 & -26 \\
0 & 0 & 4 & 2 & -4 & 4 & 6 & -8 & -6 & 16 & 2 & -26 & 3 & 15 & -12 & 3 & 18 & -5 & -5 & 11 & 10 & 5 & 12 & -20 & 0 & 0 & 6 & 4 & 2 & 4 & 4 & -6 & 0 & 10 & -2 & -18 & 0 & 5 & -12 & 1 & 8 & -5 & -2 & 8 & 9 & 6 & 11 & -9 \\
0 & 0 & 6 & 2 & -6 & -2 & 12 & -4 & -14 & 18 & 10 & -36 & -1 & 25 & -12 & -6 & 28 & -4 & -10 & 13 & 11 & 7 & 13 & -21 & 0 & 0 & 10 & 8 & -10 & -8 & 24 & 6 & -32 & 14 & 30 & -46 & -18 & 39 & -14 & -25 & 36 & -1 & -18 & 14 & 17 & 4 & 11 & -9 \\
0 & 0 & 4 & 2 & 0 & -2 & 2 & -2 & 6 & 6 & -8 & -12 & 3 & 1 & -4 & 8 & 5 & -4 & 0 & 5 & 4 & 8 & 12 & -15 & 0 & 0 & 6 & 4 & 4 & -4 & 2 & 2 & 4 & 0 & -6 & -6 & -3 & -4 & -3 & 4 & 2 & -1 & 2 & 3 & 9 & 9 & 9 & -11 \\
0 & 0 & 4 & 2 & 0 & 2 & 4 & -6 & -6 & 10 & 4 & -18 & -1 & 10 & -9 & -3 & 14 & -1 & -2 & 8 & 7 & 3 & 9 & -12 & 0 & 0 & 6 & 4 & -2 & 4 & 10 & -6 & -10 & 14 & 6 & -28 & -2 & 17 & -15 & -6 & 19 & -5 & -4 & 12 & 11 & 7 & 14 & -16 \\
0 & 2 & 4 & 2 & -4 & -2 & 6 & 0 & -8 & 8 & 4 & -18 & -2 & 12 & -3 & -3 & 15 & 0 & -4 & 8 & 7 & 5 & 7 & -15 & 0 & 4 & 10 & 2 & -10 & -2 & 14 & -2 & -16 & 18 & 4 & -38 & -1 & 25 & -11 & -3 & 30 & -2 & -6 & 19 & 16 & 11 & 20 & -38 \\
0 & 4 & 6 & -2 & -6 & 2 & 4 & -4 & -2 & 12 & -10 & -22 & 9 & 10 & -6 & 8 & 15 & -5 & 2 & 15 & 11 & 13 & 16 & -37 & 0 & 8 & 8 & 0 & -8 & 0 & 6 & 0 & -6 & 8 & -10 & -22 & 5 & 10 & -3 & 6 & 18 & -3 & 4 & 19 & 15 & 15 & 19 & -45 \\
0 & 6 & 6 & 2 & -6 & -2 & 2 & 2 & -4 & 4 & -6 & -18 & -1 & 8 & -1 & 4 & 14 & 0 & 1 & 14 & 12 & 11 & 12 & -31 & 0 & 12 & 8 & -2 & -8 & 2 & 2 & 0 & 0 & 4 & -22 & -16 & 11 & 2 & 1 & 14 & 11 & -3 & 9 & 23 & 17 & 18 & 20 & -55 \\
0 & 2 & 4 & 0 & -2 & 2 & 4 & -4 & -2 & 8 & -4 & -16 & 4 & 7 & -6 & 3 & 11 & -3 & -1 & 10 & 8 & 8 & 11 & -21 & 0 & 4 & 6 & 2 & 0 & 0 & 4 & 0 & 0 & 2 & -8 & -12 & 1 & 1 & -4 & 4 & 6 & -2 & 3 & 11 & 12 & 11 & 12 & -23 \\
0 & 2 & 4 & 0 & -2 & 2 & 4 & -4 & -2 & 8 & -4 & -16 & 4 & 7 & -6 & 3 & 11 & -3 & -1 & 10 & 8 & 8 & 11 & -21 & 0 & 2 & 6 & 4 & -4 & 2 & 14 & -2 & -16 & 10 & 8 & -30 & -4 & 21 & -13 & -10 & 21 & -3 & -4 & 16 & 15 & 9 & 14 & -22 \\
0 & 2 & 4 & 0 & -2 & 2 & 4 & -4 & -2 & 8 & -4 & -16 & 4 & 7 & -6 & 2 & 11 & -3 & 0 & 10 & 8 & 8 & 11 & -21 & 0 & 2 & 6 & 4 & -4 & 0 & 10 & 0 & -12 & 8 & 6 & -24 & -4 & 16 & -9 & -7 & 19 & -2 & -2 & 12 & 12 & 9 & 13 & -21 \\
0 & 0 & 8 & 4 & -8 & -4 & 12 & -4 & -16 & 20 & 12 & -40 & -4 & 28 & -12 & -6 & 32 & -2 & -12 & 14 & 12 & 6 & 14 & -23 & 0 & 0 & 12 & 8 & -12 & -8 & 24 & 4 & -32 & 20 & 28 & -52 & -16 & 42 & -16 & -22 & 40 & -2 & -20 & 16 & 18 & 4 & 14 & -14 \\
\hline
15 & 0 & 0 & 0 & 0 & 0 & 0 & 0 & 0 & 0 & 0 & 0 & 0 & 0 & 0 & 0 & 0 & 0 & 0 & 0 & 0 & 0 & 0 & 0 & 28 & 0 & 0 & 0 & 0 & 0 & 0 & 0 & 0 & 0 & 0 & 0 & 0 & 0 & 0 & 0 & 0 & 0 & 0 & 0 & 0 & 0 & 0 & 0 \\
0 & 2 & 4 & 0 & 3 & 7 & -1 & -8 & 9 & 9 & -13 & -10 & 9 & -3 & -9 & 9 & 2 & -6 & 3 & 8 & 6 & 10 & 15 & -18 & 0 & 5 & 8 & 3 & 5 & 12 & -4 & -12 & 14 & 10 & -21 & -14 & 10 & -9 & -13 & 15 & 4 & -6 & 11 & 14 & 12 & 17 & 22 & -33 \\
0 & 5 & 8 & 0 & -4 & 4 & 3 & -8 & 3 & 15 & -14 & -26 & 9 & 8 & -10 & 12 & 16 & -6 & 3 & 17 & 13 & 15 & 22 & -41 & 0 & 7 & 13 & 4 & -4 & 6 & 4 & -8 & 1 & 18 & -17 & -33 & 8 & 8 & -14 & 12 & 21 & -6 & 5 & 23 & 20 & 20 & 29 & -52 \\
0 & 2 & 2 & 2 & 4 & 7 & -1 & -7 & 7 & 6 & -9 & -7 & 5 & -4 & -9 & 6 & 1 & -4 & 5 & 6 & 6 & 7 & 11 & -12 & 0 & 3 & 6 & 4 & 9 & 15 & -4 & -14 & 15 & 8 & -25 & -11 & 12 & -11 & -16 & 15 & 0 & -7 & 15 & 15 & 14 & 19 & 23 & -35 \\
0 & 2 & 4 & 2 & 1 & 4 & 2 & -7 & 4 & 9 & -4 & -13 & 3 & 3 & -9 & 5 & 7 & -3 & 1 & 6 & 5 & 6 & 11 & -13 & 0 & 2 & 9 & 5 & 3 & 10 & 4 & -12 & 3 & 16 & -4 & -27 & 2 & 5 & -19 & 3 & 14 & -4 & 1 & 13 & 12 & 12 & 20 & -21 \\
0 & 5 & 9 & -3 & -6 & 9 & 4 & -14 & 4 & 21 & -19 & -28 & 15 & 9 & -13 & 17 & 17 & -8 & 5 & 20 & 14 & 14 & 25 & -49 & 0 & 7 & 16 & -1 & -8 & 15 & 6 & -25 & 3 & 42 & -27 & -54 & 22 & 18 & -27 & 24 & 35 & -13 & 4 & 34 & 22 & 23 & 44 & -78 \\
0 & 4 & 5 & 1 & -1 & 4 & 1 & -6 & 5 & 8 & -11 & -9 & 6 & -1 & -6 & 8 & 5 & -1 & 4 & 8 & 6 & 8 & 12 & -21 & 0 & 7 & 13 & 0 & -4 & 12 & 4 & -18 & 5 & 28 & -23 & -39 & 16 & 9 & -20 & 19 & 24 & -7 & 6 & 27 & 19 & 21 & 35 & -63 \\
0 & 5 & 9 & 1 & -5 & 3 & 5 & -7 & -3 & 13 & -12 & -24 & 7 & 8 & -9 & 10 & 18 & -1 & 4 & 19 & 15 & 14 & 20 & -45 & 0 & 10 & 15 & 0 & -8 & 6 & 8 & -11 & -1 & 24 & -26 & -42 & 15 & 13 & -15 & 19 & 28 & -4 & 7 & 33 & 26 & 26 & 38 & -80 \\
0 & 2 & 5 & 5 & -1 & -1 & 4 & 2 & -2 & 0 & 5 & -9 & -6 & 5 & -3 & -6 & 6 & 2 & -3 & 5 & 5 & 5 & 4 & -4 & 0 & 9 & 11 & 0 & -3 & 13 & 4 & -15 & 4 & 20 & -23 & -33 & 15 & 6 & -17 & 16 & 19 & -7 & 9 & 28 & 20 & 22 & 32 & -61 \\
0 & 4 & 5 & 3 & 0 & 1 & 2 & 0 & -2 & 1 & -6 & -5 & 0 & -2 & -3 & 2 & 5 & 4 & 4 & 9 & 8 & 8 & 8 & -19 & 0 & 10 & 12 & 1 & -4 & 3 & 6 & -4 & 1 & 10 & -22 & -27 & 9 & 4 & -8 & 14 & 17 & 2 & 9 & 26 & 23 & 24 & 29 & -65 \\
0 & 4 & 2 & 4 & 5 & 3 & -2 & 2 & 4 & -8 & -8 & 1 & -1 & -9 & -1 & 1 & -3 & 1 & 8 & 8 & 9 & 11 & 7 & -15 & 0 & 11 & 4 & 1 & 8 & 9 & -4 & -1 & 15 & -14 & -29 & 6 & 9 & -22 & 0 & 14 & -11 & -2 & 22 & 17 & 17 & 24 & 17 & -41 \\
0 & 6 & 2 & 0 & 3 & 5 & -2 & 0 & 7 & -5 & -13 & 2 & 4 & -9 & 0 & 6 & -4 & -1 & 9 & 8 & 7 & 11 & 8 & -19 & 0 & 12 & 6 & -1 & 4 & 11 & -4 & -6 & 15 & -4 & -32 & -3 & 14 & -17 & -3 & 19 & -3 & -4 & 21 & 21 & 17 & 25 & 23 & -53 \\
0 & 8 & 4 & -6 & -2 & 8 & -4 & -8 & 12 & 8 & -32 & -10 & 19 & -7 & -1 & 23 & 3 & -8 & 12 & 17 & 12 & 18 & 22 & -51 & 0 & 12 & 6 & -4 & -2 & 6 & -6 & -4 & 14 & 0 & -38 & -2 & 19 & -14 & 3 & 26 & -2 & -5 & 18 & 21 & 15 & 21 & 21 & -59 \\
0 & 8 & 2 & -8 & -2 & 10 & -6 & -8 & 18 & 6 & -28 & -2 & 18 & -8 & 1 & 22 & -5 & -10 & 9 & 9 & 4 & 12 & 15 & -32 & 0 & 12 & 4 & -8 & 0 & 20 & -12 & -20 & 24 & 12 & -44 & -4 & 30 & -18 & -6 & 34 & -4 & -12 & 20 & 22 & 8 & 18 & 24 & -56 \\
0 & 12 & 4 & -10 & -4 & 10 & -8 & -8 & 18 & 8 & -44 & -2 & 25 & -14 & 3 & 32 & -3 & -8 & 17 & 17 & 10 & 19 & 23 & -58 & 0 & 16 & 10 & -8 & -10 & 8 & -6 & -10 & 14 & 14 & -48 & -16 & 29 & -7 & 1 & 37 & 10 & -6 & 18 & 31 & 18 & 25 & 32 & -88 \\
0 & 8 & 2 & -8 & 0 & 12 & -8 & -10 & 20 & 6 & -38 & 0 & 23 & -15 & 0 & 28 & -7 & -10 & 16 & 13 & 8 & 16 & 20 & -45 & 0 & 12 & 4 & -8 & 2 & 16 & -14 & -16 & 28 & 6 & -52 & 4 & 31 & -25 & -1 & 39 & -9 & -10 & 25 & 21 & 10 & 22 & 25 & -64 \\
0 & 8 & 4 & -6 & -4 & 6 & -2 & -6 & 8 & 8 & -26 & -8 & 16 & -4 & 0 & 19 & 4 & -5 & 11 & 14 & 9 & 14 & 17 & -44 & 0 & 12 & 10 & -6 & -10 & 6 & 0 & -10 & 6 & 18 & -34 & -24 & 22 & 3 & -4 & 27 & 18 & -7 & 12 & 28 & 17 & 22 & 31 & -77 \\
0 & 8 & 8 & -8 & -8 & 8 & 0 & -12 & 8 & 16 & -32 & -16 & 22 & 0 & -2 & 26 & 10 & -6 & 10 & 20 & 12 & 16 & 24 & -61 & 0 & 12 & 14 & -8 & -14 & 8 & 2 & -18 & 6 & 34 & -40 & -40 & 29 & 11 & -11 & 35 & 28 & -8 & 8 & 35 & 20 & 23 & 40 & -94 \\
0 & 12 & 2 & -8 & -2 & 8 & -8 & -4 & 16 & 0 & -38 & 4 & 20 & -16 & 6 & 26 & -6 & -5 & 18 & 14 & 9 & 18 & 17 & -50 & 0 & 20 & 6 & -12 & -6 & 12 & -8 & -10 & 20 & 2 & -58 & -2 & 34 & -19 & 6 & 41 & -2 & -7 & 28 & 34 & 19 & 30 & 31 & -95 \\
0 & 8 & 2 & -4 & 0 & 4 & -4 & 2 & 8 & -6 & -22 & 4 & 8 & -11 & 6 & 13 & -5 & -2 & 11 & 11 & 8 & 13 & 10 & -31 & 0 & 16 & 4 & -8 & -2 & 8 & -6 & -4 & 16 & -6 & -44 & 4 & 23 & -19 & 7 & 29 & -6 & -4 & 24 & 25 & 16 & 25 & 22 & -70 \\
0 & 8 & 2 & -6 & 0 & 8 & -6 & -4 & 16 & 0 & -32 & 2 & 17 & -14 & 3 & 20 & -7 & -7 & 13 & 12 & 8 & 16 & 16 & -38 & 0 & 14 & 4 & -6 & 0 & 8 & -10 & -4 & 20 & -6 & -42 & 8 & 21 & -23 & 7 & 29 & -10 & -4 & 22 & 19 & 12 & 21 & 18 & -56 \\
0 & 8 & 2 & -6 & 0 & 8 & -6 & -4 & 12 & 0 & -28 & 2 & 15 & -12 & 3 & 19 & -5 & -5 & 14 & 12 & 8 & 14 & 14 & -38 & 0 & 14 & 4 & -8 & 0 & 10 & -10 & -8 & 22 & -2 & -46 & 6 & 25 & -23 & 5 & 33 & -8 & -6 & 24 & 21 & 12 & 23 & 22 & -64 \\
0 & 12 & 2 & -10 & 0 & 12 & -10 & -8 & 20 & 0 & -46 & 6 & 26 & -21 & 5 & 32 & -9 & -7 & 23 & 17 & 11 & 21 & 21 & -60 & 0 & 16 & 4 & -8 & -2 & 10 & -10 & -6 & 20 & -4 & -50 & 6 & 27 & -22 & 7 & 34 & -7 & -5 & 27 & 25 & 15 & 26 & 24 & -75 \\
0 & 16 & 4 & -16 & -4 & 16 & -12 & -12 & 28 & 8 & -64 & 0 & 38 & -22 & 6 & 46 & -8 & -14 & 26 & 24 & 14 & 28 & 32 & -83 & 0 & 24 & 8 & -16 & -8 & 16 & -16 & -16 & 32 & 8 & -80 & 0 & 48 & -28 & 8 & 60 & -4 & -12 & 36 & 40 & 20 & 36 & 40 & -120 
\end{array}\right)$}, \label{eq:M23-coef-mat}\\
& \vec{\mu} = \scalebox{0.7}{$\left(\begin{array}{cccccccccccccccccccccccccccccccccccccccccccccccc}
0 & 0 & 0 & 0 & 1 & 1 & 0 & 0 & 1 & 1 & 1 & 1 & 0 & 0 & 0 & 0 & 0 & 0 & 0 & 0 & 0 & 0 & 0 & 0 & 0 & 1 & 0 & 1 & 0 & 0 & 1 & 0 & 1 & 0 & 1 & 0 & 0 & 0 & 0 & 0 & 0 & 0 & 0 & 0 & 0 & 0 & 0 & 0
\end{array}\right)$}^T. \label{eq:M23-RHS-vec}
\end{align}

For $\Co_0$ of Leech lattice,
\begin{align}
& M = \scalebox{0.35}{$\left(\begin{array}{rrrrrrrrrrrrrrrrrrrrrrrrrrrrrrrrrrrrrrrrrrrrrrrr}
364 & 84 & 37 & -495 & -288 & -227 & -27 & -367 & 155 & 245 & 186 & -31 & 119 & 45 & 103 & 87 & -12 & -24 & -51 & -9 & 16 & 38 & -73 & 100 & -180 & -155 & -105 & -31 & -53 & 31 & -50 & -10 & 31 & 62 & 8 & 2 & -18 & 3 & -55 & 16 & -20 & -6 & -9 & 0 & -17 & 10 & -52 & -35 \\
-718 & 43 & 13 & 1070 & 690 & 424 & 199 & 774 & -327 & -524 & -418 & 95 & -232 & -122 & -149 & -136 & 109 & 88 & 103 & 71 & 17 & -89 & 125 & -96 & 49 & 332 & 190 & 415 & 363 & 104 & 240 & 290 & -213 & -310 & -175 & 70 & -118 & -81 & -12 & -31 & 32 & 69 & 44 & 31 & 8 & -72 & 130 & 0 \\
-496 & 24 & 10 & 762 & 477 & 322 & 123 & 535 & -237 & -375 & -277 & 44 & -159 & -87 & -108 & -105 & 74 & 62 & 67 & 31 & 5 & -79 & 100 & -72 & 107 & 209 & 123 & 166 & 169 & 23 & 140 & 105 & -96 & -164 & -71 & 41 & -55 & -42 & -1 & -10 & 3 & 35 & 15 & 20 & -4 & -41 & 81 & 7 \\
-850 & 151 & 60 & 1337 & 879 & 519 & 295 & 942 & -409 & -654 & -509 & 108 & -272 & -164 & -154 & -154 & 171 & 126 & 119 & 93 & 38 & -130 & 152 & -68 & -24 & 386 & 208 & 550 & 479 & 158 & 330 & 385 & -278 & -412 & -238 & 113 & -191 & -120 & -68 & -25 & 15 & 98 & 50 & 51 & -13 & -103 & 159 & -28 \\
-80 & 266 & 135 & 257 & 247 & 23 & 198 & 192 & -94 & -120 & -155 & 81 & -40 & -48 & 51 & 34 & 126 & 64 & 24 & 82 & 57 & -11 & 13 & 104 & -199 & 64 & -2 & 251 & 200 & 144 & 138 & 197 & -143 & -168 & -119 & 55 & -157 & -83 & -104 & 26 & -25 & 53 & 2 & 12 & -43 & -75 & 37 & -68 \\
470 & -103 & -54 & -757 & -506 & -286 & -174 & -547 & 243 & 370 & 309 & -78 & 159 & 90 & 80 & 79 & -106 & -75 & -71 & -61 & -23 & 64 & -93 & 34 & -44 & -214 & -113 & -224 & -211 & -68 & -164 & -156 & 135 & 197 & 99 & -47 & 97 & 64 & 25 & 0 & 1 & -47 & -13 & -16 & 15 & 63 & -85 & 13 \\
824 & -230 & -104 & -1332 & -908 & -483 & -346 & -954 & 415 & 649 & 541 & -142 & 272 & 167 & 126 & 128 & -203 & -139 & -123 & -123 & -56 & 115 & -145 & 30 & 87 & -391 & -198 & -608 & -521 & -203 & -354 & -436 & 317 & 445 & 266 & -119 & 233 & 142 & 92 & 15 & -11 & -110 & -48 & -47 & 24 & 125 & -163 & 48 \\
-354 & 127 & 50 & 575 & 402 & 197 & 172 & 407 & -172 & -279 & -232 & 64 & -113 & -77 & -46 & -49 & 97 & 64 & 52 & 62 & 33 & -51 & 52 & 4 & -131 & 177 & 85 & 384 & 310 & 135 & 190 & 280 & -182 & -248 & -167 & 72 & -136 & -78 & -67 & -15 & 12 & 63 & 35 & 31 & -9 & -62 & 78 & -35 \\
364 & 84 & 37 & -495 & -288 & -227 & -27 & -367 & 155 & 245 & 186 & -31 & 119 & 45 & 103 & 87 & -12 & -24 & -51 & -9 & 16 & 38 & -73 & 100 & -180 & -155 & -105 & -31 & -53 & 31 & -50 & -10 & 31 & 62 & 8 & 2 & -18 & 3 & -55 & 16 & -20 & -6 & -9 & 0 & -17 & 10 & -52 & -35 \\
-486 & 235 & 97 & 842 & 591 & 292 & 268 & 575 & -254 & -409 & -323 & 77 & -153 & -119 & -51 & -67 & 159 & 102 & 68 & 84 & 54 & -92 & 79 & 32 & -204 & 231 & 103 & 519 & 426 & 189 & 280 & 375 & -247 & -350 & -230 & 115 & -209 & -117 & -123 & -9 & -5 & 92 & 41 & 51 & -30 & -93 & 107 & -63 \\
-1188 & 146 & 67 & 1827 & 1196 & 710 & 373 & 1321 & -570 & -894 & -727 & 173 & -391 & -212 & -229 & -215 & 215 & 163 & 174 & 132 & 40 & -153 & 218 & -130 & 93 & 546 & 303 & 639 & 574 & 172 & 404 & 446 & -348 & -507 & -274 & 117 & -215 & -145 & -37 & -31 & 31 & 116 & 57 & 47 & -7 & -135 & 215 & -13 \\
-824 & 230 & 104 & 1332 & 908 & 483 & 346 & 954 & -415 & -649 & -541 & 142 & -272 & -167 & -126 & -128 & 203 & 139 & 123 & 123 & 56 & -115 & 145 & -30 & -87 & 391 & 198 & 608 & 521 & 203 & 354 & 436 & -317 & -445 & -266 & 119 & -233 & -142 & -92 & -15 & 11 & 110 & 48 & 47 & -24 & -125 & 163 & -48 \\
-718 & 43 & 13 & 1070 & 690 & 424 & 199 & 774 & -327 & -524 & -418 & 95 & -232 & -122 & -149 & -136 & 109 & 88 & 103 & 71 & 17 & -89 & 125 & -96 & 49 & 332 & 190 & 415 & 363 & 104 & 240 & 290 & -213 & -310 & -175 & 70 & -118 & -81 & -12 & -31 & 32 & 69 & 44 & 31 & 8 & -72 & 130 & 0 \\
132 & -108 & -47 & -267 & -189 & -95 & -96 & -168 & 82 & 130 & 91 & -13 & 40 & 42 & 5 & 18 & -62 & -38 & -16 & -22 & -21 & 41 & -27 & -28 & 73 & -54 & -18 & -135 & -116 & -54 & -90 & -95 & 65 & 102 & 63 & -43 & 73 & 39 & 56 & -6 & 17 & -29 & -6 & -20 & 21 & 31 & -29 & 28 \\
-824 & 230 & 104 & 1332 & 908 & 483 & 346 & 954 & -415 & -649 & -541 & 142 & -272 & -167 & -126 & -128 & 203 & 139 & 123 & 123 & 56 & -115 & 145 & -30 & -87 & 391 & 198 & 608 & 521 & 203 & 354 & 436 & -317 & -445 & -266 & 119 & -233 & -142 & -92 & -15 & 11 & 110 & 48 & 47 & -24 & -125 & 163 & -48 \\
-718 & 43 & 13 & 1070 & 690 & 424 & 199 & 774 & -327 & -524 & -418 & 95 & -232 & -122 & -149 & -136 & 109 & 88 & 103 & 71 & 17 & -89 & 125 & -96 & 49 & 332 & 190 & 415 & 363 & 104 & 240 & 290 & -213 & -310 & -175 & 70 & -118 & -81 & -12 & -31 & 32 & 69 & 44 & 31 & 8 & -72 & 130 & 0 \\
1310 & -465 & -201 & -2174 & -1499 & -775 & -614 & -1529 & 669 & 1058 & 864 & -219 & 425 & 286 & 177 & 195 & -362 & -241 & -191 & -207 & -110 & 207 & -224 & -2 & 291 & -622 & -301 & -1127 & -947 & -392 & -634 & -811 & 564 & 795 & 496 & -234 & 442 & 259 & 215 & 24 & -6 & -202 & -89 & -98 & 54 & 218 & -270 & 111 \\
26 & 79 & 44 & -5 & 29 & -36 & 51 & 12 & -6 & 5 & -32 & 34 & 0 & -3 & 28 & 26 & 32 & 13 & 4 & 30 & 18 & 15 & -7 & 38 & -63 & 5 & -10 & 58 & 42 & 45 & 24 & 51 & -39 & -33 & -28 & 6 & -42 & -22 & -24 & 10 & -4 & 12 & -2 & -4 & -11 & -22 & 4 & -20 \\
824 & -230 & -104 & -1332 & -908 & -483 & -346 & -954 & 415 & 649 & 541 & -142 & 272 & 167 & 126 & 128 & -203 & -139 & -123 & -123 & -56 & 115 & -145 & 30 & 87 & -391 & -198 & -608 & -521 & -203 & -354 & -436 & 317 & 445 & 266 & -119 & 233 & 142 & 92 & 15 & -11 & -110 & -48 & -47 & 24 & 125 & -163 & 48 \\
-26 & -79 & -44 & 5 & -29 & 36 & -51 & -12 & 6 & -5 & 32 & -34 & 0 & 3 & -28 & -26 & -32 & -13 & -4 & -30 & -18 & -15 & 7 & -38 & 63 & -5 & 10 & -58 & -42 & -45 & -24 & -51 & 39 & 33 & 28 & -6 & 42 & 22 & 24 & -10 & 4 & -12 & 2 & 4 & 11 & 22 & -4 & 20 \\
338 & 5 & -7 & -490 & -317 & -191 & -78 & -379 & 161 & 240 & 218 & -65 & 119 & 48 & 75 & 61 & -44 & -37 & -55 & -39 & -2 & 23 & -66 & 62 & -117 & -160 & -95 & -89 & -95 & -14 & -74 & -61 & 70 & 95 & 36 & -4 & 24 & 25 & -31 & 6 & -16 & -18 & -7 & 4 & -6 & 32 & -56 & -15 \\
-364 & -84 & -37 & 495 & 288 & 227 & 27 & 367 & -155 & -245 & -186 & 31 & -119 & -45 & -103 & -87 & 12 & 24 & 51 & 9 & -16 & -38 & 73 & -100 & 180 & 155 & 105 & 31 & 53 & -31 & 50 & 10 & -31 & -62 & -8 & -2 & 18 & -3 & 55 & -16 & 20 & 6 & 9 & 0 & 17 & -10 & 52 & 35 \\
338 & 5 & -7 & -490 & -317 & -191 & -78 & -379 & 161 & 240 & 218 & -65 & 119 & 48 & 75 & 61 & -44 & -37 & -55 & -39 & -2 & 23 & -66 & 62 & -117 & -160 & -95 & -89 & -95 & -14 & -74 & -61 & 70 & 95 & 36 & -4 & 24 & 25 & -31 & 6 & -16 & -18 & -7 & 4 & -6 & 32 & -56 & -15 \\
364 & 84 & 37 & -495 & -288 & -227 & -27 & -367 & 155 & 245 & 186 & -31 & 119 & 45 & 103 & 87 & -12 & -24 & -51 & -9 & 16 & 38 & -73 & 100 & -180 & -155 & -105 & -31 & -53 & 31 & -50 & -10 & 31 & 62 & 8 & 2 & -18 & 3 & -55 & 16 & -20 & -6 & -9 & 0 & -17 & 10 & -52 & -35 \\
\hline
375 & -19 & 3 & -549 & -350 & -226 & -120 & -359 & 161 & 250 & 192 & -35 & 112 & 65 & 80 & 85 & -39 & -40 & -50 & -24 & -18 & 50 & -51 & 56 & -125 & -55 & -58 & 66 & 17 & 33 & -2 & 36 & 11 & -6 & -34 & 10 & -3 & -21 & -32 & -8 & -4 & -6 & 9 & -4 & -5 & 13 & -26 & -10 \\
-216 & -10 & -21 & 301 & 184 & 130 & 59 & 181 & -82 & -133 & -90 & 14 & -58 & -37 & -54 & -56 & 9 & 16 & 31 & 3 & 5 & -31 & 27 & -40 & 138 & -11 & 21 & -165 & -108 & -66 & -52 & -109 & 44 & 75 & 73 & -26 & 43 & 37 & 37 & 12 & -5 & -13 & -19 & -5 & 2 & 8 & -5 & 15 \\
159 & -29 & -18 & -248 & -166 & -96 & -61 & -178 & 79 & 117 & 102 & -21 & 54 & 28 & 26 & 29 & -30 & -24 & -19 & -21 & -13 & 19 & -24 & 16 & 13 & -66 & -37 & -99 & -91 & -33 & -54 & -73 & 55 & 69 & 39 & -16 & 40 & 16 & 5 & 4 & -9 & -19 & -10 & -9 & -3 & 21 & -31 & 5 \\
318 & -58 & -36 & -496 & -332 & -192 & -122 & -356 & 158 & 234 & 204 & -42 & 108 & 56 & 52 & 58 & -60 & -48 & -38 & -42 & -26 & 38 & -48 & 32 & 26 & -132 & -74 & -198 & -182 & -66 & -108 & -146 & 110 & 138 & 78 & -32 & 80 & 32 & 10 & 8 & -18 & -38 & -20 & -18 & -6 & 42 & -62 & 10 \\
318 & -58 & -36 & -496 & -332 & -192 & -122 & -356 & 158 & 234 & 204 & -42 & 108 & 56 & 52 & 58 & -60 & -48 & -38 & -42 & -26 & 38 & -48 & 32 & 26 & -132 & -74 & -198 & -182 & -66 & -108 & -146 & 110 & 138 & 78 & -32 & 80 & 32 & 10 & 8 & -18 & -38 & -20 & -18 & -6 & 42 & -62 & 10 \\
0 & 0 & 0 & 0 & 0 & 0 & 0 & 0 & 0 & 0 & 0 & 0 & 0 & 0 & 0 & 0 & 0 & 0 & 0 & 0 & 0 & 0 & 0 & 0 & 0 & 0 & 0 & 0 & 0 & 0 & 0 & 0 & 0 & 0 & 0 & 0 & 0 & 0 & 0 & 0 & 0 & 0 & 0 & 0 & 0 & 0 & 0 & 0 \\
-204 & 136 & 114 & 390 & 296 & 124 & 126 & 350 & -152 & -202 & -228 & 56 & -100 & -38 & 4 & -4 & 102 & 64 & 14 & 78 & 42 & -14 & 42 & 16 & -328 & 286 & 106 & 726 & 580 & 264 & 320 & 510 & -308 & -426 & -302 & 116 & -246 & -138 & -94 & -40 & 46 & 102 & 78 & 46 & 8 & -100 & 134 & -50 \\
-102 & 68 & 57 & 195 & 148 & 62 & 63 & 175 & -76 & -101 & -114 & 28 & -50 & -19 & 2 & -2 & 51 & 32 & 7 & 39 & 21 & -7 & 21 & 8 & -164 & 143 & 53 & 363 & 290 & 132 & 160 & 255 & -154 & -213 & -151 & 58 & -123 & -69 & -47 & -20 & 23 & 51 & 39 & 23 & 4 & -50 & 67 & -25 \\
159 & -29 & -18 & -248 & -166 & -96 & -61 & -178 & 79 & 117 & 102 & -21 & 54 & 28 & 26 & 29 & -30 & -24 & -19 & -21 & -13 & 19 & -24 & 16 & 13 & -66 & -37 & -99 & -91 & -33 & -54 & -73 & 55 & 69 & 39 & -16 & 40 & 16 & 5 & 4 & -9 & -19 & -10 & -9 & -3 & 21 & -31 & 5 \\
375 & -19 & 3 & -549 & -350 & -226 & -120 & -359 & 161 & 250 & 192 & -35 & 112 & 65 & 80 & 85 & -39 & -40 & -50 & -24 & -18 & 50 & -51 & 56 & -125 & -55 & -58 & 66 & 17 & 33 & -2 & 36 & 11 & -6 & -34 & 10 & -3 & -21 & -32 & -8 & -4 & -6 & 9 & -4 & -5 & 13 & -26 & -10 \\
0 & 0 & 0 & 0 & 0 & 0 & 0 & 0 & 0 & 0 & 0 & 0 & 0 & 0 & 0 & 0 & 0 & 0 & 0 & 0 & 0 & 0 & 0 & 0 & 0 & 0 & 0 & 0 & 0 & 0 & 0 & 0 & 0 & 0 & 0 & 0 & 0 & 0 & 0 & 0 & 0 & 0 & 0 & 0 & 0 & 0 & 0 & 0 \\
318 & -58 & -36 & -496 & -332 & -192 & -122 & -356 & 158 & 234 & 204 & -42 & 108 & 56 & 52 & 58 & -60 & -48 & -38 & -42 & -26 & 38 & -48 & 32 & 26 & -132 & -74 & -198 & -182 & -66 & -108 & -146 & 110 & 138 & 78 & -32 & 80 & 32 & 10 & 8 & -18 & -38 & -20 & -18 & -6 & 42 & -62 & 10 \\
-102 & 68 & 57 & 195 & 148 & 62 & 63 & 175 & -76 & -101 & -114 & 28 & -50 & -19 & 2 & -2 & 51 & 32 & 7 & 39 & 21 & -7 & 21 & 8 & -164 & 143 & 53 & 363 & 290 & 132 & 160 & 255 & -154 & -213 & -151 & 58 & -123 & -69 & -47 & -20 & 23 & 51 & 39 & 23 & 4 & -50 & 67 & -25 \\
-159 & 29 & 18 & 248 & 166 & 96 & 61 & 178 & -79 & -117 & -102 & 21 & -54 & -28 & -26 & -29 & 30 & 24 & 19 & 21 & 13 & -19 & 24 & -16 & -13 & 66 & 37 & 99 & 91 & 33 & 54 & 73 & -55 & -69 & -39 & 16 & -40 & -16 & -5 & -4 & 9 & 19 & 10 & 9 & 3 & -21 & 31 & -5 \\
-216 & -10 & -21 & 301 & 184 & 130 & 59 & 181 & -82 & -133 & -90 & 14 & -58 & -37 & -54 & -56 & 9 & 16 & 31 & 3 & 5 & -31 & 27 & -40 & 138 & -11 & 21 & -165 & -108 & -66 & -52 & -109 & 44 & 75 & 73 & -26 & 43 & 37 & 37 & 12 & -5 & -13 & -19 & -5 & 2 & 8 & -5 & 15 \\
216 & 10 & 21 & -301 & -184 & -130 & -59 & -181 & 82 & 133 & 90 & -14 & 58 & 37 & 54 & 56 & -9 & -16 & -31 & -3 & -5 & 31 & -27 & 40 & -138 & 11 & -21 & 165 & 108 & 66 & 52 & 109 & -44 & -75 & -73 & 26 & -43 & -37 & -37 & -12 & 5 & 13 & 19 & 5 & -2 & -8 & 5 & -15 \\
273 & 49 & 60 & -354 & -202 & -164 & -57 & -184 & 85 & 149 & 78 & -7 & 62 & 46 & 82 & 83 & 12 & -8 & -43 & 15 & 3 & 43 & -30 & 64 & -289 & 88 & -5 & 429 & 307 & 165 & 158 & 291 & -143 & -219 & -185 & 68 & -126 & -90 & -79 & -28 & 19 & 45 & 48 & 19 & -1 & -37 & 41 & -35 \\
375 & -19 & 3 & -549 & -350 & -226 & -120 & -359 & 161 & 250 & 192 & -35 & 112 & 65 & 80 & 85 & -39 & -40 & -50 & -24 & -18 & 50 & -51 & 56 & -125 & -55 & -58 & 66 & 17 & 33 & -2 & 36 & 11 & -6 & -34 & 10 & -3 & -21 & -32 & -8 & -4 & -6 & 9 & -4 & -5 & 13 & -26 & -10 \\
0 & 0 & 0 & 0 & 0 & 0 & 0 & 0 & 0 & 0 & 0 & 0 & 0 & 0 & 0 & 0 & 0 & 0 & 0 & 0 & 0 & 0 & 0 & 0 & 0 & 0 & 0 & 0 & 0 & 0 & 0 & 0 & 0 & 0 & 0 & 0 & 0 & 0 & 0 & 0 & 0 & 0 & 0 & 0 & 0 & 0 & 0 & 0 \\
-375 & 19 & -3 & 549 & 350 & 226 & 120 & 359 & -161 & -250 & -192 & 35 & -112 & -65 & -80 & -85 & 39 & 40 & 50 & 24 & 18 & -50 & 51 & -56 & 125 & 55 & 58 & -66 & -17 & -33 & 2 & -36 & -11 & 6 & 34 & -10 & 3 & 21 & 32 & 8 & 4 & 6 & -9 & 4 & 5 & -13 & 26 & 10 \\
-216 & -10 & -21 & 301 & 184 & 130 & 59 & 181 & -82 & -133 & -90 & 14 & -58 & -37 & -54 & -56 & 9 & 16 & 31 & 3 & 5 & -31 & 27 & -40 & 138 & -11 & 21 & -165 & -108 & -66 & -52 & -109 & 44 & 75 & 73 & -26 & 43 & 37 & 37 & 12 & -5 & -13 & -19 & -5 & 2 & 8 & -5 & 15 \\
0 & 0 & 0 & 0 & 0 & 0 & 0 & 0 & 0 & 0 & 0 & 0 & 0 & 0 & 0 & 0 & 0 & 0 & 0 & 0 & 0 & 0 & 0 & 0 & 0 & 0 & 0 & 0 & 0 & 0 & 0 & 0 & 0 & 0 & 0 & 0 & 0 & 0 & 0 & 0 & 0 & 0 & 0 & 0 & 0 & 0 & 0 & 0 \\
-318 & 58 & 36 & 496 & 332 & 192 & 122 & 356 & -158 & -234 & -204 & 42 & -108 & -56 & -52 & -58 & 60 & 48 & 38 & 42 & 26 & -38 & 48 & -32 & -26 & 132 & 74 & 198 & 182 & 66 & 108 & 146 & -110 & -138 & -78 & 32 & -80 & -32 & -10 & -8 & 18 & 38 & 20 & 18 & 6 & -42 & 62 & -10 \\
-57 & -39 & -39 & 53 & 18 & 34 & -2 & 3 & -3 & -16 & 12 & -7 & -4 & -9 & -28 & -27 & -21 & -8 & 12 & -18 & -8 & -12 & 3 & -24 & 151 & -77 & -16 & -264 & -199 & -99 & -106 & -182 & 99 & 144 & 112 & -42 & 83 & 53 & 42 & 16 & -14 & -32 & -29 & -14 & -1 & 29 & -36 & 20 \\
\hline
35 & -1 & -2 & -57 & -34 & -25 & -8 & -40 & 19 & 26 & 20 & -2 & 10 & 5 & 6 & 9 & -5 & -7 & -5 & -2 & -1 & 5 & -7 & 7 & 16 & -13 & -8 & -44 & -34 & -12 & -14 & -33 & 16 & 21 & 15 & -1 & 9 & 2 & 0 & 6 & -7 & -4 & -7 & -2 & -6 & 6 & -6 & 4 \\
-2 & -1 & -1 & 8 & 1 & 6 & 0 & 1 & -2 & -1 & 3 & -6 & 2 & 0 & 1 & -5 & -1 & 2 & -1 & -3 & 1 & -4 & -1 & -2 & -2 & -2 & 0 & 1 & 4 & -4 & -2 & 6 & -1 & 3 & -3 & 1 & -1 & 4 & 3 & 1 & 3 & -2 & 4 & 2 & 3 & 2 & 0 & -2 \\
-54 & -3 & 1 & 79 & 50 & 32 & 10 & 59 & -25 & -34 & -31 & 5 & -14 & -7 & -9 & -11 & 8 & 7 & 9 & 5 & 3 & -4 & 9 & -9 & -12 & 16 & 9 & 42 & 32 & 14 & 15 & 28 & -16 & -23 & -13 & 1 & -9 & -3 & 0 & -5 & 5 & 6 & 5 & 0 & 4 & -7 & 8 & -4 \\
25 & 2 & 0 & -33 & -20 & -15 & -3 & -23 & 10 & 15 & 12 & -2 & 5 & 2 & 5 & 4 & -2 & -4 & -4 & -1 & 0 & 3 & -4 & 5 & 17 & -9 & -6 & -40 & -28 & -12 & -11 & -29 & 12 & 18 & 12 & 1 & 7 & 2 & 1 & 8 & -8 & -4 & -5 & -2 & -6 & 4 & -3 & 2 \\
10 & -4 & 2 & -16 & -10 & -8 & -7 & -5 & 2 & 7 & 0 & 3 & 1 & 4 & 2 & 3 & -1 & -2 & 0 & 0 & -1 & 5 & 1 & 0 & 27 & 6 & 4 & -25 & -16 & -10 & -9 & -14 & 7 & 10 & 6 & -5 & 7 & 2 & 10 & 3 & 2 & -4 & -2 & -2 & -1 & 2 & -1 & 4 \\
40 & -8 & -5 & -66 & -43 & -26 & -14 & -49 & 22 & 33 & 28 & -9 & 15 & 7 & 8 & 8 & -8 & -8 & -7 & -6 & 0 & 4 & -10 & 5 & -9 & -21 & -12 & -25 & -21 & -4 & -12 & -21 & 14 & 18 & 11 & -1 & 6 & 4 & -3 & 3 & -4 & -1 & -4 & -1 & -1 & 6 & -7 & 1 \\
0 & 3 & -1 & -3 & -1 & -2 & 4 & -4 & 3 & -1 & 2 & 2 & 1 & 1 & -1 & 2 & -2 & 0 & 1 & 1 & -1 & -1 & -1 & 3 & -27 & 6 & 1 & 40 & 26 & 14 & 14 & 24 & -12 & -18 & -12 & 3 & -9 & -7 & -9 & -7 & 4 & 4 & 3 & 4 & 2 & -5 & 1 & -2 \\
-15 & -3 & -2 & 19 & 11 & 9 & 4 & 9 & -5 & -5 & -3 & -2 & 1 & 2 & -2 & -5 & -3 & 3 & 4 & -2 & 3 & -4 & 0 & -3 & 1 & 1 & 1 & 0 & 1 & -4 & -3 & 3 & 1 & 4 & -1 & -1 & 0 & 3 & 4 & -1 & 6 & -3 & 3 & 3 & 2 & 1 & -1 & -1 \\
-16 & -3 & -3 & 14 & 10 & 4 & 4 & 8 & -1 & -3 & -5 & 2 & 0 & 0 & -3 & -2 & -1 & 0 & 4 & 1 & 2 & -2 & -1 & -1 & 26 & 2 & 2 & -32 & -21 & -13 & -10 & -20 & 8 & 13 & 8 & -1 & 5 & 3 & 6 & 5 & 0 & -5 & -3 & 0 & -4 & 4 & 0 & 3 \\
18 & -5 & -1 & -29 & -18 & -12 & -5 & -21 & 8 & 18 & 12 & -4 & 8 & 6 & 5 & 2 & -5 & -4 & -2 & -3 & 3 & 3 & -5 & 3 & 20 & -13 & -7 & -51 & -36 & -16 & -18 & -36 & 18 & 27 & 18 & -4 & 9 & 7 & 6 & 8 & -4 & -5 & -5 & -1 & -4 & 5 & -6 & 2 \\
-24 & -8 & 2 & 39 & 23 & 20 & 1 & 28 & -16 & -14 & -10 & -6 & -5 & 0 & -2 & -9 & 2 & 6 & 3 & -2 & 6 & -4 & 3 & -9 & -25 & -7 & -5 & 23 & 15 & 10 & 2 & 12 & -1 & -3 & -1 & -3 & -1 & 3 & 0 & -4 & 4 & 4 & 4 & -2 & 6 & -3 & 1 & -5 \\
-3 & -9 & 1 & 2 & 1 & 1 & -6 & 6 & -3 & 1 & -3 & -1 & -2 & 2 & 0 & -1 & 0 & 1 & 0 & 0 & 2 & 4 & 1 & -4 & 16 & 0 & 1 & -21 & -13 & -6 & -9 & -13 & 8 & 9 & 4 & -1 & 7 & 3 & 8 & 5 & 0 & -2 & -1 & -3 & -2 & 3 & 2 & 2 \\
-6 & -4 & 1 & 10 & 4 & 10 & -2 & 5 & -4 & -5 & 3 & -7 & -1 & 1 & -1 & -4 & 0 & 4 & -2 & -2 & 1 & -4 & 0 & -4 & -46 & -5 & -4 & 58 & 37 & 20 & 13 & 37 & -12 & -20 & -14 & 1 & -8 & -2 & -8 & -11 & 6 & 7 & 8 & 2 & 8 & -4 & 2 & -6 \\
43 & 7 & -2 & -57 & -39 & -27 & -5 & -42 & 20 & 22 & 20 & 1 & 10 & 4 & 7 & 8 & -6 & -6 & -5 & -2 & -6 & 4 & -5 & 10 & 28 & 11 & 6 & -25 & -12 & -16 & -5 & -8 & 1 & 8 & -4 & 4 & 3 & -2 & 5 & 6 & 1 & -9 & 0 & 4 & -4 & 4 & -2 & 5 \\
-20 & 3 & 4 & 33 & 22 & 10 & 6 & 29 & -13 & -17 & -21 & 9 & -10 & -2 & -4 & -3 & 3 & 4 & 5 & 4 & 0 & 2 & 7 & -4 & 21 & 24 & 15 & 8 & 13 & -1 & 5 & 13 & -10 & -9 & -10 & 0 & -3 & -2 & 11 & 1 & 7 & -2 & 3 & 2 & 1 & -3 & 7 & 2 \\
46 & 1 & -2 & -63 & -40 & -26 & -8 & -46 & 19 & 28 & 26 & -6 & 12 & 5 & 9 & 8 & -6 & -6 & -7 & -4 & -1 & 4 & -8 & 7 & -22 & -20 & -14 & -7 & -10 & 3 & -4 & -11 & 7 & 9 & 6 & 2 & 2 & -1 & -9 & 2 & -7 & 1 & -3 & -2 & -2 & 2 & -6 & -2 \\
-5 & -5 & -1 & 1 & 1 & 3 & -2 & 2 & 0 & -2 & 2 & -1 & -1 & 0 & -2 & 1 & 0 & 1 & -1 & 1 & 0 & 0 & -1 & -2 & -48 & -6 & -5 & 56 & 32 & 25 & 14 & 30 & -11 & -23 & -11 & 1 & -8 & -6 & -13 & -10 & 2 & 10 & 3 & -1 & 4 & -5 & 2 & -4 \\
-9 & 2 & 2 & 10 & 9 & 3 & 2 & 10 & -4 & -5 & -8 & 5 & -4 & -2 & -2 & 1 & 3 & 1 & 1 & 3 & 0 & 2 & 2 & -1 & -2 & 2 & 3 & 9 & 5 & 6 & 3 & 7 & -4 & -7 & -2 & -1 & -3 & -1 & 0 & -1 & 0 & 4 & -1 & -1 & 0 & -1 & 2 & 0 \\
-37 & -1 & -2 & 47 & 33 & 19 & 7 & 32 & -13 & -17 & -16 & 0 & -6 & -7 & -5 & -6 & 7 & 2 & 5 & 2 & 5 & -4 & 2 & -5 & -28 & -11 & -5 & 24 & 12 & 15 & 7 & 10 & -4 & -8 & 3 & -3 & -5 & 2 & -8 & -6 & -2 & 9 & -1 & -2 & 5 & -3 & -2 & -4 \\
16 & 3 & 1 & -19 & -12 & -4 & -1 & -18 & 6 & 8 & 13 & -8 & 3 & -1 & 3 & 1 & 0 & -2 & -6 & -3 & 1 & -3 & -4 & 2 & -21 & -20 & -12 & -3 & -6 & 4 & -2 & -9 & 6 & 6 & 9 & 0 & 1 & 2 & -9 & -1 & -7 & 3 & -3 & -1 & 0 & 0 & -3 & -2 \\
8 & -4 & 0 & -13 & -9 & 0 & -5 & -13 & 5 & 8 & 12 & -10 & 4 & 0 & 2 & 0 & 0 & 0 & -5 & -3 & 2 & -3 & -4 & -1 & -18 & -24 & -15 & -11 & -13 & 1 & -6 & -15 & 11 & 11 & 12 & -1 & 4 & 4 & -7 & 0 & -7 & 2 & -3 & -4 & 0 & 3 & -4 & -2 \\
23 & -2 & -1 & -29 & -22 & -9 & -8 & -22 & 8 & 13 & 15 & -7 & 8 & 4 & 5 & 2 & -4 & -2 & -4 & -5 & -1 & -1 & -3 & 2 & -25 & -7 & -6 & 21 & 13 & 7 & 4 & 12 & -2 & -4 & -4 & 0 & -3 & -1 & -5 & -4 & 2 & 2 & 3 & 1 & 4 & 0 & -3 & -3 \\
1 & 2 & 2 & 2 & 1 & 5 & 3 & -5 & -1 & -1 & 6 & -6 & 1 & 0 & 0 & -1 & -1 & 2 & -2 & -3 & 2 & -5 & -1 & 0 & -11 & -17 & -10 & -13 & -12 & -1 & -6 & -15 & 10 & 11 & 11 & -2 & 3 & 3 & -5 & 0 & -4 & 0 & -3 & -1 & -1 & 0 & -3 & -2 \\
45 & 2 & -2 & -65 & -40 & -31 & -11 & -44 & 18 & 29 & 18 & 2 & 11 & 5 & 8 & 11 & -5 & -10 & -4 & -4 & -4 & 8 & -4 & 8 & 49 & -6 & -1 & -77 & -53 & -26 & -21 & -50 & 23 & 32 & 22 & -4 & 15 & 5 & 8 & 11 & -8 & -9 & -10 & -2 & -7 & 7 & -9 & 8 
\end{array}\right)$}, \label{eq:Co0-coef-mat}\\
&\scalebox{0.8}{$\begin{array}{rccccccccccccccccccccccl}
\vec{\mu} = (\ 1 & 1 & 0 & 0 & 1 & 0 & 0 & 0 & 1 & 1 & 1 & 0 & 1 & 1 & 0 & 1 & 1 & 0 & 0 & 0 & 1 & 1 & 1 & 1 \\
0 & 0 & 0 & 0 & 0 & 0 & 0 & 0 & 0 & 0 & 0 & 0 & 0 & 0 & 0 & 0 & 0 & 0 & 0 & 0 & 0 & 0 & 0 & 0 \\
0 & 1 & 0 & 0 & 0 & 0 & 1 & 0 & 0 & 1 & 0 & 1 & 0 & 1 & 1 & 0 & 0 & 1 & 1 & 1 & 0 & 0 & 0 & 1\ )^T.
\end{array}$} \label{eq:Co0-RHS-vec}
\end{align}
\end{landscape}

%%%%%%%%%%%%%%%%%%%%%%%%%%%%%%
\section{A Subgroup \texorpdfstring{$U(1)^{23}.M_{23}$}{U(1)23.M23} of \texorpdfstring{$U(1)^{24}.M_{24}$}{U(1)24.M24}}
\label{sec:proof-of-subgroup-of-extension}
%%%%%%%%%%%%%%%%%%%%%%%%%%%%%%
Here, we write down the detailed proof that the group $\Stab_{\Hom(L,U(1))}(\bbe_0).M_{23}$, introduced around (\ref{eq:def-of-stabsubgrp}), is a subgroup of $\Hom(L,U(1)).M_{24} \subset O(\hat{L})$,
where $L$ is the odd Leech lattice $O_{24}$ constructed as in (\ref{eq:odd-Leech-from-Golay-code}).

Let us take a section $S:O(L)\to O(\hat{L}); g \mapsto S_g$ of (\ref{eq:lift-of-O(L)}).
For each $g\in O(L)$, we have defined $\zeta_g:L\to U(1)$ in (\ref{eq:def-of-zeta}) as
\begin{align}
S_g(e^k) = \zeta_g(k)e^{g(k)}.
\end{align}
Define another section $S':O(L)\to O(\hat{L})$ by
\begin{align}
S'_g = \zeta_g(r\bbe_0)^{-1}S_g,
\end{align}
where $r\bbe_0$ is a primitive vector of $L\cap\R\bbe_0$, say $r\bbe_0=\sqrt{2}(2,0,\ldots,0)$.
Then $\{\tilde{\eta} \circ S'_g \mid \eta\in\Stab_{\Hom(L,U(1))}(\bbe_0), g\in M_{23}\}$,
where $\tilde{\ }:\Hom(L,U(1))\to O(\hat{L})$ is defined as in (\ref{eq:def-of-tilde-pre}, \ref{eq:def-of-tilde}),
is a subgroup of $\Hom(L,U(1)).M_{24}$ as follows.

The multiplication of two elements $\tilde{\eta}_1 \circ S'_{g_1}$ and $\tilde{\eta}_2 \circ S'_{g_2}$ in $\Hom(L,U(1)).M_{24}$ is
\begin{align}
(\tilde{\eta}_1 \circ S'_{g_1}) \circ (\tilde{\eta}_2 \circ S'_{g_2}) = \tilde{\eta}_1 \circ \widetilde{\eta_2 \circ g_1^{-1}} \circ \widetilde{E'(g_1,g_2)} \circ S'_{g_1g_2},
\label{eq:multiplication-of-e0-stabilizing-elements}
\end{align}
where we used
\begin{align}
S'_{g_1} \circ \tilde{\eta}_2 \circ (S'_{g_1})^{-1} = \widetilde{\eta_2 \circ g_1^{-1}},
\end{align}
and the 2-cocycle $E':O(L) \times O(L) \to \Hom(L,U(1))$ is specified by
\begin{align}
S'_{g_1} \circ S'_{g_2} = \widetilde{E'(g_1,g_2)} \circ S'_{g_1g_2}.
\label{eq:2-cocycle-of-e0-stabilizing-section}
\end{align}
If $g_1\in M_{23}$ and $\eta_2 \in \Stab_{\Hom(L,U(1))}(\bbe_0)$,
then $\eta_2 \circ g_1^{-1} \in \Stab_{\Hom(L,U(1))}(\bbe_0)$.
If $g_1,g_2\in M_{23}$, the both-hand sides of (\ref{eq:2-cocycle-of-e0-stabilizing-section}) map $e^{r\bbe_0}$ as
\begin{align}
e^{r\bbe_0} = E'(g_1,g_2)(r\bbe_0)e^{r\bbe_0},
\end{align}
and hence $E'(g_1,g_2) \in \Stab_{\Hom(L,U(1))}(\bbe_0)$.
Therefore, the set $\{\tilde{\eta} \circ S'_g \mid \eta\in\Stab_{\Hom(L,U(1))}(\bbe_0),$
$g\in M_{23}\}$ is closed under the multiplication (\ref{eq:multiplication-of-e0-stabilizing-elements}),
and it turns out to form a subgroup $\Stab_{\Hom(L,U(1))}(\bbe_0).M_{23}$ of $\Hom(L,U(1)).M_{24}$.

\newpage

\bibliographystyle{ytamsalpha}
\def\arxivfont{\rm}
\baselineskip=.95\baselineskip
\bibliography{refs}

\end{document}